\newtheorem{thm}{Theorem}
\newtheorem{lem}[thm]{Lemma}
\newtheorem{coro}{Corollary}[thm]
\newtheorem{prop}[thm]{Proposition}
\newtheorem{dfn}[thm]{Definition}
\DeclareMathOperator{\Tr}{Tr}
\newcommand{\lr}[1]{\left( #1 \right)}
\newcommand{\Lr}[1]{\left[ #1 \right]}
\newcommand{\LR}[1]{\left\{ #1 \right\}}
\title{Concentration of Data Encoding in Parameterized Quantum Circuits}
\author[1,2]{Guangxi Li}
\author[1,3]{Ruilin Ye}
\author[1]{Xuanqiang Zhao}
\author[1]{Xin Wang}
\affil[1]{Institute for Quantum Computing, Baidu Research, China}
\affil[2]{Centre for Quantum Software and Information, University of Technology Sydney,  Australia}
\affil[3]{Institute of Remote Sensing and Geographical Information Systems, Peking University,  China}
\date{}
\begin{document}

\maketitle

\begin{abstract}
Variational quantum algorithms have been acknowledged as a leading strategy to realize near-term quantum advantages in meaningful tasks, including machine learning and combinatorial optimization. When applied to tasks involving classical data, such algorithms generally begin with quantum circuits for data encoding and then train quantum neural networks (QNNs) to minimize target functions. Although QNNs have been widely studied to improve these algorithms' performance on practical tasks, there is a gap in systematically understanding the influence of data encoding on the eventual performance. In this paper, we make progress in filling this gap by considering the common data encoding strategies based on parameterized quantum circuits. We prove that, under reasonable assumptions, the distance between the average encoded state and the maximally mixed state could be explicitly upper-bounded with respect to the width and depth of the encoding circuit. This result in particular implies that the average encoded state will concentrate on the maximally mixed state at an exponential speed on depth. Such concentration seriously limits the capabilities of quantum classifiers, and strictly restricts the distinguishability of encoded states from a quantum information perspective. We further support our findings by numerically verifying these results on both synthetic and public data sets. Our results highlight the significance of quantum data encoding in machine learning tasks and may shed light on future encoding strategies.
 \end{abstract}

\section{Introduction}
Quantum machine learning \cite{Biamonte2017,Schuld2015a,Ciliberto2018,Lloyd2013} is an emerging and promising interdisciplinary research direction in the fields of quantum computing and artificial intelligence. In this area, quantum computers are expected to enhance machine learning algorithms through their inherent parallel characteristics, thus demonstrating quantum advantages over classical algorithms~\cite{harrow2017quantum}.

With the increasing enormous efforts from academia and industry, current quantum devices (usually acknowledged as the \textit{noisy intermediate-scale quantum} (NISQ) devices \cite{Preskill2018}) already can show quantum advantages on certain carefully designed tasks \cite{arute2019quantum,zhong2020quantum} despite their limitations in quantum circuit width and depth.
A timely direction is to explore quantum advantages with near-term quantum devices in practical machine learning tasks, and a leading strategy is the hybrid quantum-classical algorithms, also known as the variational quantum algorithms \cite{bharti2022noisy,cerezo2021variational}. Such algorithms usually use a classical optimizer to train \textit{quantum neural networks} (QNNs). They in particular hand over relatively intractable tasks to quantum computers while those rather ordinary tasks to classical computers.

For usual quantum machine learning tasks, a quantum circuit used in the variational quantum algorithms is generally divided into two parts: a data encoding circuit and a QNN. On the one hand, developing various QNN architectures is the most popular way to improve these algorithms' ability to deal with practical tasks. Numerous architectures such as strongly entangling circuit architectures \cite{Schuld2018}, quantum convolutional neural networks \cite{Venkatesan2018}, tree-tensor networks \cite{Grant2018}, and even automatically searched architectures \cite{ostaszewski2021reinforcement,ostaszewski2021structure,Zhang2020,du2020quantum} have been proposed. On the other hand, one has to carefully design the encoding circuit, which could significantly influence the generalization performance of these algorithms \cite{caro2021encoding,banchi2021generalization}. 
Consider an extreme case. If all classical inputs are encoded into the same quantum state, these algorithms will fail to do any machine learning tasks. In addition, the kernel's perspective \cite{Huang2021a,liu2021rigorous,li2019sublinear,kubler2021inductive} also suggests that data encoding strategy plays a vital or even leading role in quantum machine learning algorithms \cite{Schuld2021,Lloyd2020,perez2020data}.
However, there is much less literature on data encoding strategies, which urgently requires to be studied.

Encoding classical information into quantum data is  nontrivial~\cite{Biamonte2017}, and it is even more difficult on near-term quantum devices. One of the most feasible and popular encoding strategies on NISQ devices is based on \textit{parameterized quantum circuits} (PQCs) \cite{Benedetti2019}, such as the empirically designed angle encoding \cite{Schuld2021,Peters2021}, IQP encoding \cite{Havlicek2019}, etc.
It is natural to ask how to choose these encoding strategies and whether there are theoretical guarantees of using them. More specifically, it is necessary to systematically understand the impact of such PQC-based encoding strategies on the performance of QNNs in quantum machine learning tasks.

\begin{figure}[t]
    \centering
		\includegraphics[width=0.8\textwidth]{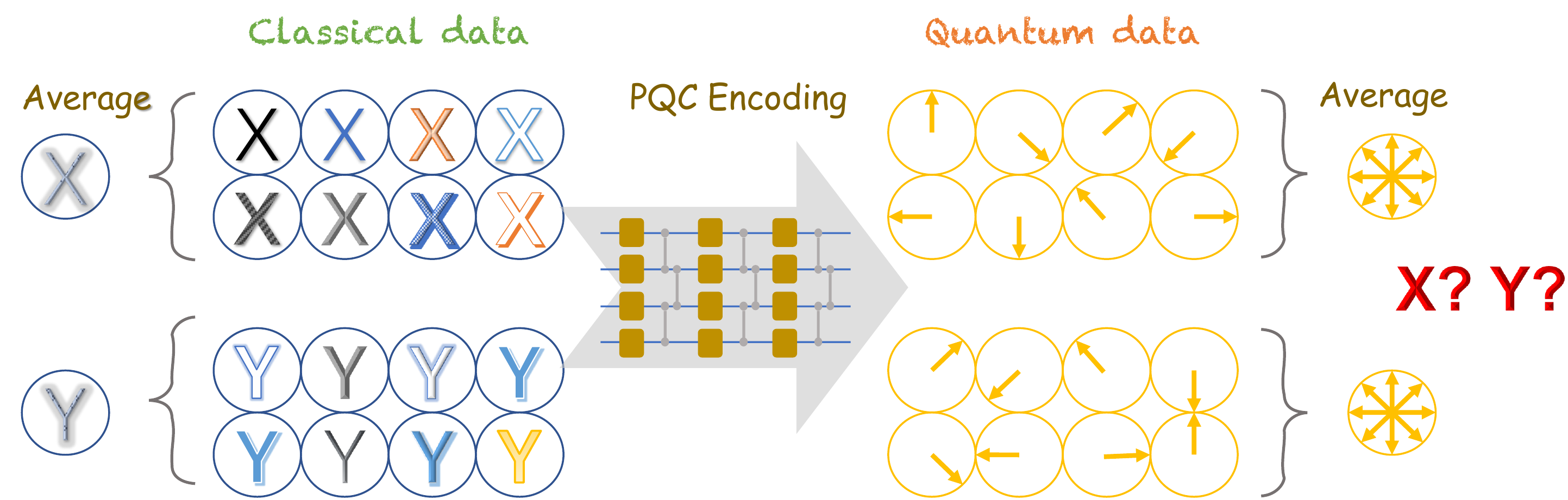}
    \caption{Illustration of the concentration of PQC-based data encoding. The average encoded quantum states concentrate on the maximally mixed state at an exponential rate on the encoding depth. This concentration implies the theoretical indistinguishability of the encoded quantum data.}
    \label{fig:concentration}
\end{figure}

In this work, we present results from the perspective of average encoded quantum state with respect to the width and depth of the encoding PQCs. A cartoon illustration summarizing our main result is depicted in Fig.~\ref{fig:concentration}. We show that for the usual PQC-based encoding strategies with a fixed width, the average encoded state is close to the maximally mixed state at an exponential speed on depth. In particular, we establish the following:
\begin{itemize}
    \item We theoretically give the upper bound of the quantum divergence between the average encoded state and the maximally mixed state, which depends explicitly on the hyper-parameters (e.g., qubit number and encoding depth) of PQCs. From this bound, we find that for a fixed qubit number, the average encoded state concentrates on the maximally mixed state exponentially on the encoding depth.
    \item We show that the quantum states encoded by deep PQCs will seriously limit the trainability of a quantum classifier and further limit its classification ability.
    \item We show that the quantum states encoded by deep PQCs are indistinguishable from a quantum information perspective.
    \item We support the above findings by numerical experiments on both synthetic and public data sets.
\end{itemize}

\paragraph{Related Work} Ref.~\cite{caro2021encoding} derived generalization bounds of PQC-based data encoding, which mainly depends on the total number of circuit gates. While we derive quantum divergence bound that depends on the width and depth of PQCs. The works~\cite{perez2020data,schuld2021effect} explored the effects of data encoding from the perspective of data re-uploading. \cite{larose2020robust} studied the robustness of data encoding for quantum classifiers.
Data encoding strategies with discrete features were proposed for variational quantum classifiers \cite{yano2020efficient}.

\subsection{Background}
\paragraph{Quantum Basics}
We first provide basic concepts about quantum computing~\cite{Nielsen2002}. In general, quantum information is described by quantum states. An $n$-qubit quantum state is mathematically represented by a positive semi-definite matrix (a.k.a. a density matrix) $\rho\in \mathbb{C}^{2^n\times 2^n}$ with property $\Tr(\rho)=1$. If $\operatorname{Rank}(\rho)=1$, it is called a pure state; otherwise, it is a mixed state. 
A pure state can also be represented by a unit column vector $\ket{\phi}\in\mathbb{C}^{2^n}$, where $\rho=\op{\phi}{\phi}$ and $\bra{\phi}=\ket{\phi}^\dagger$. A mixed state can be regarded as a weighted sum of pure states, i.e., $\rho=\sum_i q_i\op{\phi_i}{\phi_i}$, where $q_i\geq 0$ and $\sum_i q_i=1$. Specifically,
a mixed state whose density matrix is proportional to the identity matrix is called the maximally mixed state $\mathbbm{1} \equiv \frac{I}{2^n}$.

A quantum state $\rho$ could be evolved to another state $\rho^\prime$ through a quantum circuit (or gate) mathematically represented by a unitary matrix $U$, i.e., $\rho^\prime=U\rho U^\dagger$.
Typical single-qubit gates include Pauli gates, $\tiny X \equiv \setlength\arraycolsep{2pt} [\begin{array}{cc}
0 & 1 \\ 1 & 0 \end{array}], Y \equiv [\begin{array}{cc} 0 & -i \\ i & 0 \end{array}], Z \equiv [\begin{array}{cc} 1 & 0 \\ 0 & -1 \end{array}]$,
and their corresponding rotation gates $R_P(\theta)\equiv \mathrm{e}^{-i\theta P/2}$ with a parameter $\theta$ and $P\in\{X,Y,Z\}$.
Another commonly used gate $U3$ in this paper is defined as $U3(\theta_1,\theta_2,\theta_3)\equiv R_z(\theta_3)R_y(\theta_2)R_z(\theta_1)$, which can implement an arbitrary single-qubit unitary transformation with appropriate parameters. In this paper, $R_z,R_y$ are equivalent to $R_Z,R_Y$ without specified.
A multi-qubit gate can be either an individual gate (e.g., CNOT) or a tensor product of single-qubit gates.
To get classical information from a quantum state $\rho^\prime$, one needs to perform quantum measurements, e.g., calculating the expectation value $\langle H\rangle=\Tr(H\rho^\prime)$ of a Hermitian matrix $H$, and we often call $H$ an observable.

\paragraph{Quantum Divergence}
Similar to Kullback-Leibler divergence in machine learning, we use quantum divergence to quantify the difference between quantum states or quantum data. 
Two widely-used quantum divergences are quantum sandwiched R\'{e}nyi divergence \cite{Muller-Lennert2013,Wilde2014a}
$\widetilde{D}_{\alpha}(\rho \| \sigma) \equiv \frac{1}{\alpha-1} \log \operatorname{Tr}\left[\sigma^{\frac{1-\alpha}{2 \alpha}} \rho \sigma^{\frac{1-\alpha}{2 \alpha}}\right]^{\alpha}$ and 
the Petz-R\'{e}nyi divergence~\cite{Petz1986a}
${D}_{\alpha}(\rho \| \sigma) \equiv \frac{1}{\alpha-1} \log \operatorname{Tr} \left[\rho^{\alpha} \sigma^{1-\alpha}\right]$, where $\alpha\in (0,1) \cup (1,\infty)$ and the latter has an operational significance in quantum hypothesis testing~\cite{Audenaert2007,Nussbaum2009,Salzmann2021}. In this work, for the purposes of analyzing quantum encoding, we focus on the Petz-R\'{e}nyi divergence with $\alpha=2$, i.e.,
\begin{align}
    {D}_{2}(\rho \| \sigma) = \log \operatorname{Tr}\left[\rho^{2} \sigma^{-1}\right],
\end{align}
which also plays an important role in training quantum neural networks~\cite{Kieferova2021} as well as quantum communication~\cite{Fang2019a}.  
Throughout this paper, when we mention the quantum divergence, we mean the Petz-R\'{e}nyi divergence ${D}_{2}$ if not specified and $\log$ denotes $\log_2$ if not specified.

\paragraph{Parameterized Quantum Circuit}
In general, a parameterized quantum circuit \cite{Benedetti2019} has the form
$
    U({\bm \theta})=\prod_{j} U_j(\theta_j)V_j,
$
where ${\bm \theta}$ is its parameter vector, $ U_j(\theta_j)=  \mathrm{e}^{-i\theta_j P_j/2}$ with $P_j$ denoting a Pauli gate, and $V_j$ denotes a fixed gate such as Identity, CNOT and so on. In this paper, PQCs are utilized as both data encoding strategies and quantum neural networks. Specifically, when used for data encoding, an $n$-qubit PQC takes a classical input vector $\bm x$ as its parameters and acts on an initial state $\ket{0}^{\otimes n}$ to obtain the encoded state $\ket{\bm x}$. Here, $\ket{0}^{\otimes n}$ is a $2^n$-dimensional vector whose first element is $1$ and all other elements are $0$.

\section{Main Results}
\begin{figure}[t]
\subfigure[Circuit with $R_y$ rotations only]{
\begin{minipage}[b]{0.45\linewidth}
 \[\Qcircuit @C=0.5em @R=1em {
        \lstick{\ket{0}} & \gate{R_y(x_{1,1})}  &\gate{R_y(x_{1,2})}&\qw & &\cdots& & &\gate{R_y(x_{1,D})} &\qw \\
        \lstick{\ket{0}} & \gate{R_y(x_{2,1})} &\gate{R_y(x_{2,2})} &\qw & &\cdots& & &\gate{R_y(x_{2,D})} &\qw \\ & \cdots &\cdots&&&& &&\cdots&  \\
        \lstick{\ket{0}} & \gate{R_y(x_{n,1})} &\gate{R_y(x_{n,2})}&\qw& &\cdots & & &\gate{R_y(x_{n,D})} &\qw
        }\]
        \label{fig:cir_ry_without_etg}
\end{minipage}
}
\subfigure[Assumption of IGD]{
\begin{minipage}[b]{0.5\linewidth}
 \label{fig:gaussian_data}
		\includegraphics[
		width=0.95\textwidth]{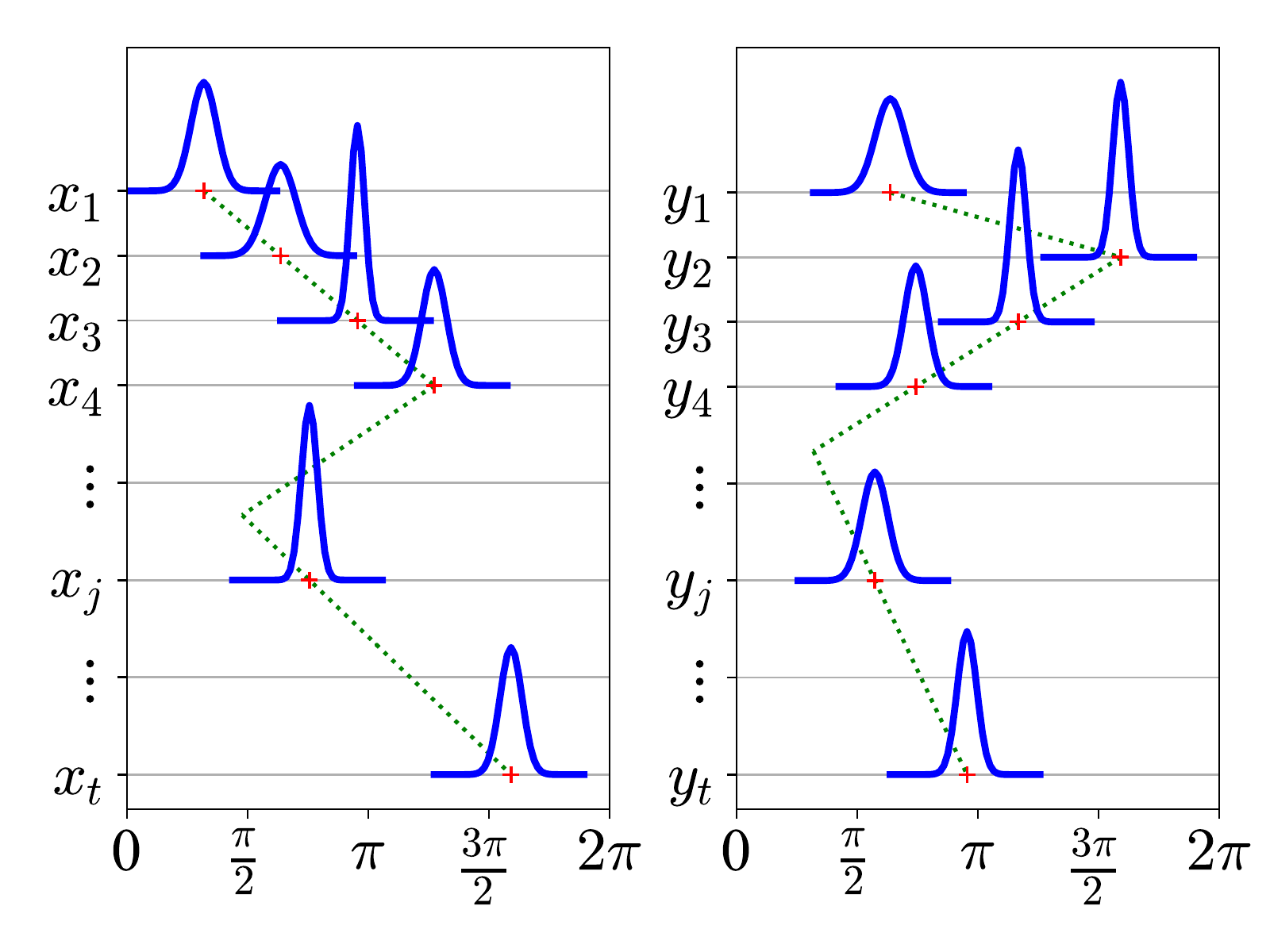}
\end{minipage}
}
    \caption{(a) Circuit for the data encoding strategy with $R_y$ rotations only; (b) An example of a binary data set with two classes of $t$-dimensional vectors $\bm x$ and $\bm y$. Here, it is assumed that each $x_j$ (or $y_j$) obeys an independent Gaussian distribution (IGD), i.e., $x_j\sim \mathcal N(\mu_{x,j},\sigma^2_{x,j})$ (or $y_j\sim \mathcal N(\mu_{y,j},\sigma^2_{y,j})$), where these mean values (small red cross symbols) range in $[0, 2\pi)$ and form the green dotted lines. Note that the difference between these two lines determines that they belong to different classes.}
    \label{fig:gaussian_data_}
\end{figure}

\subsection{A Warm-up Case}

For a quick access, we first consider one of the most straightforward PQC-based data encoding strategies, i.e., consisting of $R_y$ rotations only, cf. Fig. \ref{fig:cir_ry_without_etg}. 
It can be viewed as a generalized angle encoding.
For a classical input vector $\bm x$ with $nD$ components, the output of this data encoding circuit is a pure state $\ket{\bm x} \in \mathbb C^{2^n}$ expanded in a $2^n$-dimensional Hilbert space. We denote the density matrix of the output state by $\rho(\bm x) = \op{\bm x}{\bm x}$.
If we assume each element of the input vector obeys an \textit{independent Gaussian distribution} (IGD, see Fig. \ref{fig:gaussian_data} for an intuitive illustration), then we have the following theorem.

\begin{thm}\label{thm:only_Ry}
Assume each element of an $nD$-dimensional vector $\bm x$ obeys an IGD, i.e., $x_{j,d}\sim \mathcal N(\mu_{j,d},\sigma^2_{j,d})$, where $\sigma_{j,d}\ge\sigma$ for some constant $\sigma$ and $1\le j\le n, 1\le d\le D$. If $\bm x$ is encoded into an $n$-qubit pure state $\rho(\bm x)$ according to the circuit in Fig.~\ref{fig:cir_ry_without_etg}, then the quantum divergence between the average encoded state $\Bar{\rho}$ and the maximally mixed state $\mathbbm{1}={I}/{2^n}$ is upper-bounded by
\begin{align}\label{eq:bound_only_ry}
D_2\lr{\Bar{\rho}\|\mathbbm{1}}  \le  n\log\lr{1+\mathrm{e}^{-D\sigma^2}},
\end{align}
where $\Bar{\rho}$ is defined as $\Bar{\rho} \equiv \mathbb E\Lr{ \rho(\bm x)}$.
\end{thm}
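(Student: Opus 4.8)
The plan is to reduce the whole $n$-qubit problem to a single-qubit calculation, exploiting the fact that the encoding circuit acts independently on each wire. First I would observe that on qubit $j$ the circuit applies the product $R_y(x_{j,D})\cdots R_y(x_{j,1})$, and since all $R_y$ rotations are generated by the same Pauli $Y$ they commute and compose additively. Hence the state on qubit $j$ depends only on the single angle $\phi_j \equiv \sum_{d=1}^{D} x_{j,d}$, and the encoded pure state factorizes as $\rho(\bm x) = \bigotimes_{j=1}^{n}\rho_j$ with $\rho_j = R_y(\phi_j)\op{0}{0}R_y(\phi_j)^\dagger$.

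Next I would write each single-qubit block in the Pauli basis, $\rho_j = \frac{1}{2}\lr{I + \cos\phi_j\, Z + \sin\phi_j\, X}$. Because the $x_{j,d}$ are independent across all indices, the angles $\phi_1,\dots,\phi_n$ are mutually independent and each $\rho_j$ is a function of $\phi_j$ alone, so the average state factorizes as $\Bar{\rho} = \bigotimes_{j=1}^{n}\mathbb E[\rho_j]$. The key computational step is then to evaluate $\mathbb E[\cos\phi_j]$ and $\mathbb E[\sin\phi_j]$: since $\phi_j$ is a sum of independent Gaussians it is itself Gaussian with variance $s_j^2 = \sum_{d}\sigma_{j,d}^2 \ge D\sigma^2$, and the Gaussian characteristic function $\mathbb E[\mathrm{e}^{i\phi_j}] = \mathrm{e}^{i\mu_j - s_j^2/2}$ supplies a damping factor $r_j \equiv \mathrm{e}^{-s_j^2/2}$ on both trigonometric moments. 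Thus $\mathbb E[\rho_j] = \frac{1}{2}\lr{I + r_j(\cos\mu_j\, Z + \sin\mu_j\, X)}$, a Bloch vector shrunk by $r_j$, whose eigenvalues are $(1\pm r_j)/2$.

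Finally I would substitute into the definition of the divergence. Since $\mathbbm{1}^{-1} = 2^n I$, the definition gives $D_2\lr{\Bar{\rho}\|\mathbbm{1}} = \log\lr{2^n\,\Tr[\Bar{\rho}^2]}$, so the problem collapses to computing the purity of $\Bar{\rho}$. The tensor-product structure yields $\Tr[\Bar{\rho}^2] = \prod_{j}\Tr\Lr{\mathbb E[\rho_j]^2}$, and from the eigenvalues each factor equals $\frac{(1+r_j)^2 + (1-r_j)^2}{4} = \frac{1+r_j^2}{2}$. The prefactor $2^n$ then cancels against the $n$ factors of $\tfrac12$, leaving $D_2\lr{\Bar{\rho}\|\mathbbm{1}} = \sum_{j=1}^{n}\log(1+r_j^2)$. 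Bounding $r_j^2 = \mathrm{e}^{-s_j^2} \le \mathrm{e}^{-D\sigma^2}$ termwise and using monotonicity of $\log$ yields the claimed bound $n\log\lr{1+\mathrm{e}^{-D\sigma^2}}$.

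I expect the only genuinely delicate step to be the moment computation: one must recognize that the exponential decay in $D$ comes entirely from the Gaussian characteristic function evaluated at the summed angle $\phi_j$, and that the variance lower bound $s_j^2 \ge D\sigma^2$ is precisely what converts the per-variable assumption $\sigma_{j,d}\ge\sigma$ into depth-dependent concentration. The remaining algebra---the Pauli decomposition, the factorization over qubits, and the cancellation of $2^n$ against the purity---is routine once the single-qubit picture is set up.
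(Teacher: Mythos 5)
Your proposal is correct and follows essentially the same route as the paper's proof: additive composition of the $R_y$ rotations into a single Gaussian angle per qubit, factorization of $\Bar{\rho}$ into a tensor product of single-qubit averages, the Gaussian damping of the trigonometric moments (the paper's Lemma on $\mathbb E[\cos x]$ and $\mathbb E[\sin x]$, which you obtain via the characteristic function), and reduction of $D_2$ to the purity $2^n\Tr[\Bar{\rho}^2]$. The only cosmetic difference is that you compute each single-qubit purity from the eigenvalues $(1\pm r_j)/2$ of the shrunk Bloch vector, whereas the paper computes the same quantity as a Frobenius norm of the matrix entries.
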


This theorem shows that the upper bound of the quantum divergence between $\Bar{\rho}$ and $\mathbbm{1}$ explicitly depends on the qubit number $n$ and the encoding depth $D$ under certain conditions.
Further by approximating Eq. \eqref{eq:bound_only_ry} as
\begin{align}
D_2\lr{\Bar{\rho}\|\mathbbm{1}} \le  n\log(1+\mathrm{e}^{-D\sigma^2}) \approx
\begin{cases}
n\lr{1-\frac{\sigma^2}{2\ln 2}D}, & D\in O(1), \\
n \mathrm{e}^{-D\sigma^2}, & D\in \Omega(\text{poly}\log(n)).
\end{cases}
\end{align}
We easily find that for a fixed $n$, the upper bound decays exponentially with $D$ growing in $\Omega(\text{poly}\log(n))$.
This means that the average encoded state will quickly approach the maximally mixed state with an arbitrarily small distance under reasonable depths.

\paragraph{Sketch of Proof.} 
For the $j$-th qubit,
let $x_{j, {\rm sum}}=\sum_d x_{j,d}$ and $\mu_{j, {\rm sum}}=\sum_d \mu_{j,d}$, we get $x_{j, {\rm sum}} \sim \mathcal N(\mu_{j, {\rm sum}}, \sum_d \sigma^2_{j,d})$.
From the fact that $R_y(x_1) R_y(x_2) = R_y(x_1 +x_2)$, we have 
\begin{align}\label{eq:rho_x_j}
  \rho\lr{\bm x_j}= R_y\lr{x_{j, {\rm sum}}} \op{0}{0} R_y^\dagger\lr{x_{j, {\rm sum}}}  =
    \frac{1}{2} \Lr{ I +  \cos\lr{x_{j, {\rm sum}}} Z +  \sin\lr{x_{j, {\rm sum}}} X}.
\end{align}
Further from the fact that 
if a variable $x \sim \mathcal N(\mu,\sigma^2)$, then 
$
    \mathbb E \Lr{\cos(x)} = \mathrm{e}^{-\frac{\sigma^2}{2}} \cos(\mu)$
    and 
    $\mathbb E \Lr{\sin(x)}$ $= \mathrm{e}^{-\frac{\sigma^2}{2}} \sin(\mu)
    $,
together with the condition $\sigma_{j,d}\ge \sigma$, we bound $D_2\lr{ \mathbb E \Lr{\rho\lr{\bm x_j} }\| \mathbbm{1}}$ via calculating 
\begin{align}\label{eq:trace_rho_bar}
    \Tr\lr{\mathbb E^2 \Lr{\rho\lr{\bm x_j} }} = \frac{1}{2}\lr{1+\mathrm{e}^{-\sum_d\sigma^2_{j,d}}}\le \frac{1}{2}\lr{1+\mathrm{e}^{-D\sigma^2}}.
\end{align}
Finally we generalize Eq. \eqref{eq:trace_rho_bar} to multi-qubit cases, and complete the proof of Theorem \ref{thm:only_Ry}. The detailed proof is deferred to Appendix \ref{sec:proof_only_Ry}.

\subsection{General Case}
\label{sec:general_case}
Next, we consider the general PQC-based data encoding strategies shown in Fig.~\ref{fig:cir_u3_with_etg}, where a column of $U3$ gates and a column of entangled gates spread out alternately.

\begin{thm}
\label{thm:etg_general}
\textbf{(Data Encoding Concentration)} 
Assume each element of a $3nD$-dimensional vector $\bm x$ obeys an IGD, i.e., $x_{j,d,k}\sim \mathcal N(\mu_{j,d,k},\sigma^2_{j,d,k})$, where $\sigma_{j,d,k}\ge \sigma$ for some constant $\sigma$ and $1\le j\le n, 1\le d\le D, 1\le k\le 3$.
If $\bm x$ is encoded into an $n$-qubit pure state $\rho(\bm x)$ according to the circuit in Fig. \ref{fig:cir_u3_with_etg},  the quantum divergence between the average encoded state $\Bar{\rho}$ and the maximally mixed state $\mathbbm{1}={I}/{2^n}$ is upper-bounded by
\begin{align}\label{eq:bound_general_case}
D_2\lr{\Bar{\rho}\|\mathbbm{1}} \le  \log(1+(2^n-1)\mathrm{e}^{-D\sigma^2}).
\end{align}
\end{thm}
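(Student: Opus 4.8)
The plan is to reduce the divergence bound to a purity estimate and then control how the weight on non-identity Pauli operators contracts layer by layer. Since $\mathbbm{1}^{-1}=2^n I$, the definition of $D_2$ gives $D_2\lr{\Bar{\rho}\|\mathbbm{1}}=\log\lr{2^n\Tr[\Bar{\rho}^2]}$, so it suffices to bound $2^n\Tr[\Bar{\rho}^2]$. Expanding $\Bar{\rho}=\frac{1}{2^n}\sum_P \Bar{r}_P P$ in the $n$-qubit Pauli basis $\{I,X,Y,Z\}^{\otimes n}$ with $\Bar{r}_P=\Tr[\Bar{\rho}P]$, the normalization $\Bar{r}_{I^{\otimes n}}=1$ yields $2^n\Tr[\Bar{\rho}^2]=\sum_P \Bar{r}_P^2 = 1+\sum_{P\neq I^{\otimes n}}\Bar{r}_P^2$. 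Thus the whole theorem reduces to proving that the total weight on non-identity Paulis obeys $\sum_{P\neq I^{\otimes n}}\Bar{r}_P^2 \le (2^n-1)\mathrm{e}^{-D\sigma^2}$.

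Next I would exploit the independence built into the IGD assumption. Because all angles are mutually independent, the expectation factorizes and $\Bar{\rho}=\mathbb{E}[U(\bm x)\op{0}{0}^{\otimes n}U(\bm x)^\dagger]$ equals the output of an alternating composition of the fixed entangling-layer channel $\mathcal{W}(\cdot)=W(\cdot)W^\dagger$ and $D$ \emph{averaged} $U3$-layer channels $\Bar{\mathcal{V}}_d=\bigotimes_j \Bar{\Phi}_{j,d}$, where $\Bar{\Phi}_{j,d}$ is the averaged single-qubit $U3$ channel on qubit $j$. I would then analyze both layer types through their action on the Pauli-coefficient vector. The entangling layers are unitary conjugations, hence orthogonal transformations that fix the identity and permute the remaining weight, so they leave $\sum_{P\neq I}r_P^2$ exactly invariant. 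Each $\Bar{\Phi}_{j,d}$ is an average of unitaries, hence unital and trace-preserving, so its single-qubit Pauli transfer matrix is block-diagonal $\mathrm{diag}(1,T_{j,d})$ with a $3\times 3$ Bloch block $T_{j,d}$; in particular $\{X,Y,Z\}$ is mapped into its own span, so a $U3$ layer preserves the support set of every Pauli string and acts on the coefficients of a support $A$ by $\bigotimes_{j\in A}T_{j,d}$.

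The heart of the argument, and the step I expect to be the main obstacle, is the Bloch-block estimate $\|T_{j,d}\|_{\mathrm{op}}\le \mathrm{e}^{-\sigma^2/2}$. Writing $U3=R_z(\theta_3)R_y(\theta_2)R_z(\theta_1)$ and using $\mathbb{E}[\cos x]=\mathrm{e}^{-s^2/2}\cos\mu$, $\mathbb{E}[\sin x]=\mathrm{e}^{-s^2/2}\sin\mu$ for $x\sim\mathcal{N}(\mu,s^2)$ (already invoked in the warm-up), each averaged single-rotation Bloch map factors as a contraction times a genuine rotation, e.g. $\Lambda_z(g_1)\mathrm{Rot}_z(\mu_1)$ with $\Lambda_z(g)=\mathrm{diag}(g,g,1)$ and $g_i=\mathrm{e}^{-s_i^2/2}\le \mathrm{e}^{-\sigma^2/2}$, so that $T_{j,d}=\Lambda_z(g_3)\mathrm{Rot}_z(\mu_3)\Lambda_y(g_2)\mathrm{Rot}_y(\mu_2)\Lambda_z(g_1)\mathrm{Rot}_z(\mu_1)$ with $\Lambda_y(g)=\mathrm{diag}(g,1,g)$. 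The subtlety is that every individual factor has operator norm $1$ (its rotation axis is unshrunk), so submultiplicativity alone is useless; the point is that the unshrunk directions of the two innermost maps are incompatible. Concretely, I would track a unit vector $v$ through the circuit and establish the elementary bound $\|\Lambda_y(g_2)\mathrm{Rot}_y(\mu_2)\Lambda_z(g_1)\mathrm{Rot}_z(\mu_1)v\|^2 \le g_1^2+(g_2^2-g_1^2)v_z^2 \le \max(g_1^2,g_2^2)$: a vector that survives the $z$-preserving $\Lambda_z(g_1)$ must point near $\hat z$ and is therefore fully exposed to the $z$-shrinking $\Lambda_y(g_2)$, and vice versa. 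Since the remaining factor $\Lambda_z(g_3)\mathrm{Rot}_z(\mu_3)$ has operator norm $1$, this gives $\|T_{j,d}\|_{\mathrm{op}}\le\max(g_1,g_2)\le \mathrm{e}^{-\sigma^2/2}$.

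Finally I would assemble the pieces. For a support $A\neq\emptyset$ the corresponding block of a $U3$ layer has operator norm $\prod_{j\in A}\|T_{j,d}\|_{\mathrm{op}}\le \mathrm{e}^{-|A|\sigma^2/2}\le \mathrm{e}^{-\sigma^2/2}$, so one $U3$ layer multiplies $\sum_{P\neq I}r_P^2$ by at most $\mathrm{e}^{-\sigma^2}$, while each entangling layer preserves it. Since the initial state $\op{0}{0}^{\otimes n}=2^{-n}\sum_{S\subseteq[n]}Z_S$ has exactly $2^n-1$ non-identity Pauli coefficients, each equal to $1$, iterating over the $D$ $U3$ layers yields $\sum_{P\neq I^{\otimes n}}\Bar{r}_P^2 \le (2^n-1)\mathrm{e}^{-D\sigma^2}$. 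Substituting into $D_2\lr{\Bar{\rho}\|\mathbbm{1}}=\log\lr{1+\sum_{P\neq I^{\otimes n}}\Bar{r}_P^2}$ then completes the proof of Eq.~\eqref{eq:bound_general_case}.
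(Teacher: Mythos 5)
Your proposal is correct and takes essentially the same route as the paper's proof: both expand the average state in the Pauli basis, note that the entangling layers act as norm-preserving (permutation/orthogonal) maps on the non-identity coefficients while each averaged $U3$ layer contracts that sector by $\mathrm{e}^{-\sigma^2/2}$ in operator norm, and iterate from the initial non-identity weight $2^n-1$ of $\op{0}{0}^{\otimes n}$. The only notable difference is local and technical: where the paper factorizes $TT^\top$ and applies two singular-value lemmas to get the per-layer contraction $\max(A_2,A_3)\le \mathrm{e}^{-\sigma^2/2}$, you derive the equivalent bound $\max(g_1,g_2)\le \mathrm{e}^{-\sigma^2/2}$ by directly tracking a unit Bloch vector through the composed shrink-rotate maps, an equally valid and slightly more elementary (and, for the entangling layers, slightly more general) execution of the same idea.
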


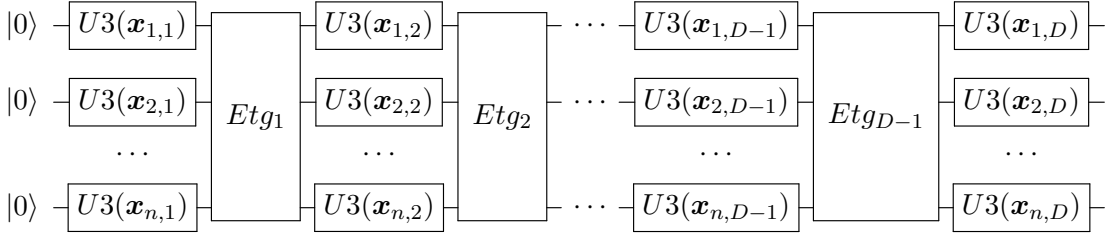
\begin{figure}[t]
    \centering
  \[\Qcircuit @C=0.5em @R=1.0em {
        \lstick{\ket{0}} & \gate{U3(\bm x_{1,1})} &\multigate{3}{Etg_1} &\gate{U3(\bm x_{1,2})} &\multigate{3}{Etg_2} &\qw &&\cdots& & &\gate{U3(\bm x_{1,D-1})} &\multigate{3}{Etg_{D-1}} &\gate{U3(\bm x_{1,D})} &\qw \\
        \lstick{\ket{0}} & \gate{U3(\bm x_{2,1})} &\ghost{Etg_1} &\gate{U3(\bm x_{2,2})} &\ghost{Etg_2}  &\qw&&\cdots& &&\gate{U3(\bm x_{2,D-1})}&  \ghost{Etg_{D-1}} &\gate{U3(\bm x_{2,D})} &\qw \\ & \cdots  & &\cdots &&&& &&& \cdots && \cdots & \\
        \lstick{\ket{0}} & \gate{U3(\bm x_{n,1})}&\ghost{Etg_1} &\gate{U3(\bm x_{n,2})}&\ghost{Etg_2} &\qw &&\cdots& &&\gate{U3(\bm x_{n,D-1})} &\ghost{Etg_{D-1}} &\gate{U3(\bm x_{n,D})} &\qw
        }\]
    \caption{Circuit for the data encoding strategy with $D$ layers of $U3$ gates and $D-1$ layers of entanglements. Here, each $\bm x_{j,d}$ represents three elements $x_{j,d,1}, x_{j,d,2}, x_{j,d,3}$, and each $Etg_i$ denotes an arbitrary group of entangled two-qubit gates, such as CNOT or CZ, where  $1\le j\le n, 1\le d\le D,  1\le i\le D-1$.}
    \label{fig:cir_u3_with_etg}
\end{figure}

This theorem shows that, when employing general PQC-based encoding strategies, we could also have an upper bound of the quantum divergence $D_2\lr{\Bar{\rho}\|\mathbbm{1}}$ which explicitly depends on $n$ and $D$.
By approximating the upper bound in Eq. \eqref{eq:bound_general_case} as follows
\begin{align}
D_2\lr{\Bar{\rho}\|\mathbbm{1}} \le  \log\lr{1+(2^n-1)\mathrm{e}^{-D\sigma^2}}
\approx \begin{cases}
n-\frac{\sigma^2}{\ln 2}D, & D\in O(\text{poly}\log(n)), \\
(2^n-1)\mathrm{e}^{-D\sigma^2}, & D\in \Omega(\text{poly}(n)).
\end{cases} 
\end{align}
We observe similarly that for some fixed $n$, the upper bound decays at an exponential speed as $D$ grows in $\Omega(\text{poly}(n))$. 
In addition, according to our proof analysis, even if each $U3$ gate is replaced by a $U2$ gate containing only two different kinds of Pauli rotations or even a $U1$ gate with only one proper Pauli rotation, we still get a similar bound as Eq. \eqref{eq:bound_general_case}. Therefore, we conclude that as long as $D$ grows within a reasonable scope, the average of the quantum states encoded by a wide family of PQCs will quickly concentrates on the maximally mixed state. Unlike the warm-up case, the proof for this theorem is quite non-straightforward due to the tricky entangled gates. For smooth reading, we defer the complete proof in Appendix \ref{sec:proof_etg_general}.

The average encoded state $\Bar{\rho}$ in Theorem \ref{thm:etg_general} is built on infinite data samples, but in practice we do not have infinite ones. Therefore, we provide the following helpful corollary.

\begin{coro}\label{coro:probability}
Assume there are $M$ classical vectors $\{\bm x^{(m)} \}^M_{m=1}$ sampled from the distributions described in Theorem \ref{thm:etg_general} and define $\Bar{\rho}_M \equiv \frac{1}{M}\sum_{m=1}^M  \rho(\bm x^{(m)})$. Let $H$ be a Hermitian matrix with its eigenvalues ranging in $[-1, 1]$, then given an arbitrary $\epsilon\in(0,1)$, as long as the encoding depth $D\ge \frac{1}{\sigma^2}\Lr{ (n+4)\ln{2} + 2 \ln\lr{1/\epsilon } }$, we have
\begin{align}
  \Big| \Tr \Lr{ H \lr{\Bar{\rho}_M - \mathbbm{1}} } \Big|  \le \epsilon
\end{align}
with a probability of at least $1-2\mathrm{e}^{-M\epsilon^2 / 8}$.
\end{coro}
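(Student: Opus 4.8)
The plan is to split the target quantity into a deterministic \emph{bias} term and a random \emph{fluctuation} term, and to control each by $\epsilon/2$. Writing $\bar\rho\equiv\mathbb E[\rho(\bm x)]$ for the true average encoded state of Theorem~\ref{thm:etg_general} and using $\mathbb E[\bar\rho_M]=\bar\rho$ (the samples are i.i.d.), the triangle inequality gives
\begin{align}
\bigl|\Tr[H(\bar\rho_M-\mathbbm{1})]\bigr|\le\bigl|\Tr[H(\bar\rho_M-\bar\rho)]\bigr|+\bigl|\Tr[H(\bar\rho-\mathbbm{1})]\bigr|.
\end{align}
First I would bound the second (bias) term deterministically using Theorem~\ref{thm:etg_general}, and then the first (fluctuation) term probabilistically via a concentration inequality; adding the two halves closes the argument.

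For the bias term, the key is to turn the $D_2$ bound into a Hilbert--Schmidt estimate. Since $\mathbbm{1}^{-1}=2^nI$, we have $D_2(\bar\rho\|\mathbbm{1})=\log(2^n\Tr[\bar\rho^2])$, so Theorem~\ref{thm:etg_general} reads $2^n\Tr[\bar\rho^2]\le 1+(2^n-1)\mathrm e^{-D\sigma^2}$. Expanding $\|\bar\rho-\mathbbm{1}\|_2^2=\Tr[\bar\rho^2]-2^{-n}$ then gives $\|\bar\rho-\mathbbm{1}\|_2^2\le(1-2^{-n})\mathrm e^{-D\sigma^2}\le\mathrm e^{-D\sigma^2}$. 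Cauchy--Schwarz for the Hilbert--Schmidt inner product together with $\|H\|_2^2=\Tr[H^2]\le 2^n$ (each eigenvalue lies in $[-1,1]$) yields
\begin{align}
\bigl|\Tr[H(\bar\rho-\mathbbm{1})]\bigr|\le\|H\|_2\,\|\bar\rho-\mathbbm{1}\|_2\le 2^{n/2}\mathrm e^{-D\sigma^2/2}.
\end{align}
Demanding the right-hand side be at most $\epsilon/2$ and taking logarithms produces a depth requirement of the form $D\ge\frac{1}{\sigma^2}\bigl[(n+O(1))\ln 2+2\ln(1/\epsilon)\bigr]$, which is exactly what the stated hypothesis (with the constant $n+4$) guarantees.

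For the fluctuation term I would view $f_m\equiv\Tr[H\rho(\bm x^{(m)})]$ as i.i.d.\ random variables. Because $H$ has spectrum in $[-1,1]$ and each $\rho(\bm x^{(m)})$ is a state, $f_m\in[-1,1]$, and $\tfrac1M\sum_m f_m=\Tr[H\bar\rho_M]$ has mean $\Tr[H\bar\rho]$. Hoeffding's inequality for bounded i.i.d.\ summands then gives
\begin{align}
\Pr\!\Bigl[\bigl|\Tr[H(\bar\rho_M-\bar\rho)]\bigr|\ge\tfrac{\epsilon}{2}\Bigr]\le 2\exp\!\Bigl(-\tfrac{M(\epsilon/2)^2}{2}\Bigr)=2\mathrm e^{-M\epsilon^2/8},
\end{align}
which matches the claimed failure probability; on the complementary event the fluctuation term is below $\epsilon/2$.

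The hard part is the bias step rather than the fluctuation step. The hypothesis is phrased through the Petz--R\'enyi divergence $D_2$, which controls only the \emph{quadratic} quantity $\Tr[\bar\rho^2]$ (equivalently the Hilbert--Schmidt distance to $\mathbbm{1}$), whereas the object of interest, $\Tr[H(\bar\rho-\mathbbm{1})]$, is \emph{linear} in the state. Bridging the two necessarily passes through the Cauchy--Schwarz step above, and I expect the delicate point to be tracking the dimensional factor $2^{n/2}=\|H\|_2$: it is precisely this factor that injects the $n\ln 2$ term into the required depth and explains why $D$ must scale linearly with the qubit number $n$. The concentration step, by contrast, is routine once the observable $\Tr[H\rho(\bm x)]$ is recognized to be bounded in $[-1,1]$.
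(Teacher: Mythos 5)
Your proposal is correct, and it follows the paper's overall architecture — the same triangle-inequality split into a bias term $|\Tr[H(\bar\rho-\mathbbm{1})]|$ and a fluctuation term $|\Tr[H(\bar\rho_M-\bar\rho)]|$, with an identical Hoeffding step giving the $1-2\mathrm{e}^{-M\epsilon^2/8}$ probability — but your treatment of the bias term is genuinely different from the paper's. The paper bounds $|\Tr[H(\bar\rho-\mathbbm{1})]|$ by the trace norm $\|\bar\rho-\mathbbm{1}\|_{\mathrm{tr}}$, then invokes the Fuchs--van de Graaf inequality $\tfrac12\|\rho-\sigma\|_{\mathrm{tr}}\le\sqrt{1-F(\rho,\sigma)}$ together with the fidelity bound $-\log F\le D_2$ to convert Theorem~\ref{thm:etg_general} into the estimate $F(\bar\rho,\mathbbm{1})\ge 16/(16+\epsilon^2)$, whence the bias is at most $\epsilon/2$. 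You instead observe that against the maximally mixed state the divergence is \emph{exactly} the log-purity, $D_2(\bar\rho\|\mathbbm{1})=\log(2^n\Tr[\bar\rho^2])$, deduce the Hilbert--Schmidt bound $\|\bar\rho-\mathbbm{1}\|_2^2=\Tr[\bar\rho^2]-2^{-n}\le\mathrm{e}^{-D\sigma^2}$, and finish with Cauchy--Schwarz and $\|H\|_2\le 2^{n/2}$. Your route is more elementary and self-contained (no fidelity machinery is needed), and it is quantitatively slightly sharper: you only need $D\ge\frac{1}{\sigma^2}[(n+2)\ln 2+2\ln(1/\epsilon)]$, so under the stated hypothesis with $(n+4)$ your bias term is in fact at most $\epsilon/4$. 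What the paper's fidelity route buys is generality — it would go through verbatim given \emph{any} divergence upper bound satisfying $-\log F\le D$, without relying on the special algebraic relation between $D_2$ and purity that holds only when the reference state is $\mathbbm{1}$. Both arguments carry the same intrinsic dimensional factor $2^{n/2}$ (yours via $\|H\|_2$, the paper's via $\sqrt{2^n-1}$ inside the fidelity bound), which is what forces $D$ to scale linearly in $n$, exactly as you noted.
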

This corollary implies that for a reasonable encoding depth $D$ and number of samples $M$, the practical average encoded state $\Bar{\rho}_M$ will also be infinitely close to the maximally mixed state with a high probability.
The proof is mainly derived from \textit{Hoeffding's inequality} \cite{hoeffding1994probability} and the relations between quantum divergence and trace norm. The proof details are deferred to Appendix \ref{sec:proof_of_corollary}.

\section{Applications in Quantum Supervised Learning}

In this section, we will show that the concentrated quantum states encoded by the above PQC-based data encoding strategies will severely limit the performances of quantum supervised learning tasks.
Before beginning, let's define the following necessary data set.

\begin{dfn}\label{def:encoded_data_set}
\textbf{(Data Set)} 
The $K$-class data set $\mathcal D \equiv \{(\bm x^{(m)}, \bm y^{(m)})\}^{KM}_{m=1} \subset \mathbb R^{3nD} \times \mathbb R^K$ totally has $KM$ data samples, including $M$ samples in each category.
Here, suppose elements in the same entry of all input vectors from the same category are sampled from the same IGD
with a variance of at least $\sigma^2$ and each $\bm x^{(m)}$ is encoded into the corresponding pure state $\rho(\bm x^{(m)})$ according to the circuit in Fig. \ref{fig:cir_u3_with_etg} with $n$ qubits and $D$ layers of $U3$ gates. 
The label $\bm y^{(m)}$ is a one-hot vector that indicates which of the $K$ classes $\bm x^{(m)}$ belongs to.
\end{dfn}

\subsection{In the Scenario of Quantum Classification}\label{sec:app_classifier}
Quantum classification, as one of the most significant branches in quantum machine learning, is widely studied nowadays, where quantum-enhanced classification models are expected to achieve quantum advantages against the classical ones in solving classification tasks \cite{Havlicek2019,Schuld2019}.
Specifically, in the NISQ era, plenty of variational quantum classifiers based on parameterized quantum circuits are developed to make full usage of NISQ devices \cite{bharti2022noisy,Farhi2018,Schuld2018,Venkatesan2018,Grant2018,Mitarai2018,li2021vsql}.
See \cite{li2022recent} for a review on the recent advances of quantum classifiers .

In general, a quantum classifier aims to learn a map from input to label by optimizing a loss function constructed through QNNs to predict the label of an unseen input as accurately as possible.
Now, we demonstrate the performance of a quantum classifier on the data set $\mathcal D$ defined in Def. \ref{def:encoded_data_set}.

In this paper, the loss function is defined from the cross-entropy loss with softmax function \cite{goodfellow2016deep}:
\begin{align}\label{eq:cross_entrpy_loss}
    L\lr{\bm\theta;\mathcal D}  \equiv \frac{1}{KM}\sum_{m=1}^{KM} L^{(m)} \quad \text{with} \quad L^{(m)} \lr{\bm\theta;(\bm x^{(m)}, \bm y^{(m)}) } 
     \equiv - \sum_{k=1}^K  y^{(m)}_k \ln \frac{\mathrm{e}^{h_k}}{\sum_{j=1}^K \mathrm{e}^{h_j}},
\end{align}
where $y^{(m)}_k$ denotes the $k$-th element of the label $\bm y^{(m)}$ and 
\begin{align}\label{eq:compute_hk}
    h_k\lr{\bm x^{(m)}, \bm\theta}= \Tr\Lr{ H_k U(\bm\theta) \rho(\bm x^{(m)}) U^{\dagger}(\bm\theta)},
\end{align}
which means the Hermitian operator $H_k$ is finally measured after the quantum neural network $U(\bm\theta)$. Here, each $H_k$ is chosen from tensor products of various Pauli matrices, such as $Z\otimes I$, $X\otimes Y \otimes Z$ and so on. By minimizing the loss function with a gradient descent method, we could obtain the final trained model $U(\bm\theta^{*})$ with the optimal or sub-optimal parameters $\bm\theta^*$. After that, when provided a new input quantum state $\rho(\bm x^\prime)$, we compute each $h_k^\prime$ with parameters $\bm \theta^*$ according to Eq. \eqref{eq:compute_hk}, and the index of the largest $h_k^\prime$ is 
exactly our designated label.

However, all these graceful expectations can only be established on gradients with relatively large absolute values. On the contrary, gradients with significantly small absolute values will cause a severe training problem, for example the barren plateau issue \cite{McClean2018}. Therefore, in the following we investigate the partial gradient of the cost defined in Eq. \eqref{eq:cross_entrpy_loss} with regard to its parameters. The results are exhibited in Proposition \ref{prop:bound_gradient}.
\begin{prop}\label{prop:bound_gradient}
Consider a $K$-classification task with the data set $\mathcal D$ defined in Def. \ref{def:encoded_data_set}.
If the encoding depth $D \ge \frac{1}{\sigma^2} \Lr{(n+4)\ln{2} + 2 \ln\lr{1/\epsilon }}$ for some $\epsilon \in (0,1)$, then the partial gradient of the loss function defined in Eq.~\eqref{eq:cross_entrpy_loss} with respect to each parameter $\theta_i$ of the employed QNN is bounded by
\begin{align}
    \Big| \frac{\partial L\lr{\bm\theta;\mathcal D}}{\partial \theta_i} \Big|\le K\epsilon
\end{align}
with a probability of at least $1-2\mathrm{e}^{-M\epsilon^2 / 8}$.
\end{prop}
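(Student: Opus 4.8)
The plan is to reduce the statement to Corollary~\ref{coro:probability} by computing the partial gradient in closed form and identifying, for each output channel, a fixed Hermitian observable against which the encoded states are averaged. First I would differentiate the softmax cross-entropy loss in Eq.~\eqref{eq:cross_entrpy_loss}. Writing $p_k^{(m)}\equiv \mathrm{e}^{h_k}/\sum_j \mathrm{e}^{h_j}$ for the softmax weights and using that each $\bm y^{(m)}$ is a one-hot label, the standard identity $\partial L^{(m)}/\partial h_l = p_l^{(m)}-y_l^{(m)}$ gives
\begin{align}
\frac{\partial L(\bm\theta;\mathcal D)}{\partial \theta_i} = \frac{1}{KM}\sum_{m=1}^{KM}\sum_{l=1}^{K}\lr{p_l^{(m)} - y_l^{(m)}}\frac{\partial h_l^{(m)}}{\partial \theta_i},
\end{align}
so the problem reduces to controlling the quantum partial derivatives $\partial h_l^{(m)}/\partial\theta_i$ together with the bounded coefficients $p_l^{(m)}-y_l^{(m)}\in[-1,1]$, which additionally satisfy the zero-sum identity $\sum_l(p_l^{(m)}-y_l^{(m)})=0$.

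The second step rewrites each $\partial h_l^{(m)}/\partial\theta_i$ as an expectation value of a \emph{data-independent} observable. Factoring the QNN as $U(\bm\theta)=U_+\, \mathrm{e}^{-i\theta_i P_i/2}\, U_-$ around the gate carrying $\theta_i$ and differentiating $h_l^{(m)}=\Tr[H_l\,U(\bm\theta)\rho(\bm x^{(m)})U^\dagger(\bm\theta)]$, I would collect the two resulting terms into a single Hermitian operator:
\begin{align}
\frac{\partial h_l^{(m)}}{\partial\theta_i} = \Tr\Lr{G_{l,i}\,\rho(\bm x^{(m)})},\qquad G_{l,i}\equiv B_{l,i}+B_{l,i}^\dagger,\quad B_{l,i}\equiv U^\dagger(\bm\theta)\,H_l\,\partial_{\theta_i}U(\bm\theta),
\end{align}
which depends only on the circuit and $H_l$, not on the sample $m$. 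Two properties make the corollary applicable. Since $\|H_l\|=1$ (each $H_l$ is a Pauli tensor product) and $\partial_{\theta_i}U=-\tfrac{i}{2}U_+P_i\,\mathrm{e}^{-i\theta_i P_i/2}U_-$ has operator norm $\le\tfrac12$, we get $\|B_{l,i}\|\le\tfrac12$ and hence $\|G_{l,i}\|\le 1$, so the spectrum of $G_{l,i}$ lies in $[-1,1]$. Moreover, cycling the unitaries shows $\Tr[B_{l,i}]=-\tfrac{i}{2}\Tr[U_+^\dagger H_l U_+ P_i]$, which is purely imaginary because it is $-\tfrac{i}{2}$ times the (real) trace of a product of two Hermitian matrices; therefore $\Tr[G_{l,i}]=0$, i.e.\ $\Tr[G_{l,i}\mathbbm{1}]=0$. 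This lets me write $\partial h_l^{(m)}/\partial\theta_i=\Tr[G_{l,i}(\rho(\bm x^{(m)})-\mathbbm{1})]$.

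The third step invokes the concentration. Grouping the $KM$ samples by class $c$ and letting $\bar\rho^{(c)}_M$ denote the class-$c$ empirical average, each $G_{l,i}$ satisfies the hypotheses of Corollary~\ref{coro:probability} (Hermitian, spectrum in $[-1,1]$). Hence, under the stated depth condition $D\ge\frac{1}{\sigma^2}[(n+4)\ln 2+2\ln(1/\epsilon)]$, one has $|\Tr[G_{l,i}(\bar\rho^{(c)}_M-\mathbbm{1})]|\le\epsilon$ with probability at least $1-2\mathrm{e}^{-M\epsilon^2/8}$. The remaining task is to assemble these per-class, per-channel estimates into the full-batch gradient, and a naive bound $|p_l^{(m)}-y_l^{(m)}|\le 1$ summed over the $K$ output channels would then deliver the target $|\partial L/\partial\theta_i|\le K\epsilon$.

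The hard part will be the coupling between the softmax weights and the quantum terms: because $p_l^{(m)}$ depends on the individual sample through $h_k^{(m)}$, it cannot simply be pulled outside the empirical average over $m$ that the corollary controls, and, crucially, the individual encoded states $\rho(\bm x^{(m)})$ are \emph{pure} and do not concentrate—only their class averages do. The covariance between the data-dependent weights $p_l^{(m)}-y_l^{(m)}$ and the data-dependent terms $\Tr[G_{l,i}(\rho(\bm x^{(m)})-\mathbbm{1})]$ is therefore the genuine obstacle. I would resolve it by exploiting that at the required depth the class means collapse toward $\mathbbm{1}$, which drives the logits $h_k^{(m)}$ (on the relevant averages) to their symmetric value and thereby replaces the weights by the $m$-independent quantities $1/K-y_l^{(m)}$; after this reduction the weighted sum regroups, class by class, into a combination of the controlled quantities $\Tr[G_{l,i}(\bar\rho^{(c)}_M-\mathbbm{1})]$, and the zero-sum identity $\sum_l(p_l^{(m)}-y_l^{(m)})=0$ absorbs any residual $l$-independent shift. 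The delicate accounting is to bound the error of this constant-weight reduction so that the weight fluctuations enter only at higher order and the failure probability is not inflated by a $K$-fold union bound beyond the single factor $1-2\mathrm{e}^{-M\epsilon^2/8}$.
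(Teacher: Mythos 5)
Your first three steps are essentially the paper's own proof: the paper also reduces the gradient to a data-independent Hermitian ``gradient observable'' (it constructs $\Tilde{H_l}$ via the parameter-shift rule rather than your direct differentiation, but $G_{l,i}=\Tilde{H_l}$, and both arguments yield spectrum in $[-1,1]$ and zero trace), and it then applies Corollary~\ref{coro:probability} to the per-class empirical averages obtained by regrouping the one-hot labels. The divergence is in the final assembly, and there your proposal has a genuine gap. The paper closes the argument bluntly: it bounds $\big|\partial L^{(m)}/\partial h_l\big|\le 1$ and discards the softmax weights entirely, bounding the gradient by $\big|\frac{1}{KM}\sum_{m}\sum_{l}\Tr\lr{\Tilde{H_l}\,\rho(\bm x^{(m)})}\big|$, which then regroups exactly into the class averages that the corollary controls. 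You decline to take that step---and you correctly sense that pulling sample-dependent signed coefficients out of a signed sum is delicate---but the substitute you propose does not work.

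The failure is in your ``constant-weight reduction.'' You propose to replace $p_l^{(m)}$ by $1/K$ on the grounds that at the required depth the logits are driven to their symmetric value. But $h_k^{(m)}=\Tr\Lr{H_k U(\bm\theta)\rho(\bm x^{(m)})U^{\dagger}(\bm\theta)}$ is a functional of the \emph{individual} encoded pure state, and---as you yourself note earlier in the proposal---individual encoded states do not concentrate; only their class averages do. For deep encodings the individual states are spread out (over the Bloch sphere when $n=1$, say), so $h_k^{(m)}$, and hence $p_l^{(m)}-1/K$, remains of order one. The residual term of your reduction, $\frac{1}{KM}\sum_{m}\sum_{l}\lr{p_l^{(m)}-1/K}\Tr\Lr{G_{l,i}\lr{\rho(\bm x^{(m)})-\mathbbm{1}}}$, is therefore an empirical covariance of two order-one, sample-dependent quantities. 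Corollary~\ref{coro:probability} only controls linear statistics $\Tr\Lr{H\lr{\Bar{\rho}_M-\mathbbm{1}}}$ with a \emph{fixed} observable $H$; it gives no handle on such a covariance, and nothing in your sketch bounds it. Controlling it would require genuinely new input (structural relations between the $H_k$ and the $G_{l,i}$, or randomness assumptions on the QNN), not the stated corollary. So as written your proof does not close; the paper's coarse weight-dropping inequality, followed by regrouping by class, is what actually completes its argument.
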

From this proposition, we observe that no matter what QNN structures are selected, the absolute gradient value can be arbitrarily small with a very high probability for the above data set $\mathcal D$, provided that the encoding depth $D$ and the number of data samples $M$ are sufficiently large. This vanishing of the gradients will severely restrict the trainability of QNNs.
Moreover, if before training $U(\bm\theta)$ is initialized to satisfy a certain randomness, such as unitary 2-design \cite{Dankert2009}, then each $h_k$ in Eq. \eqref{eq:compute_hk} will concentrate on 0 with a high probability, thus the loss in Eq. \eqref{eq:cross_entrpy_loss} will concentrate on $\ln K$.
This concentration of loss is also verified through numerical simulations, as presented in Sec.~\ref{sec:numerical_experiment}. This phenomenon, together with Proposition \ref{prop:bound_gradient}, implies that large encoding depth will significantly hinder the training of a quantum classifier and probably lead to poor classification accuracy. Please refer to Appendix \ref{sec:proof_of_bound_gradient} for the proof of Proposition \ref{prop:bound_gradient}.

\subsection{In the Scenario of Quantum State Discrimination}

Quantum state discrimination~\cite{Bae2015} is a central information-theoretic task in quantum information
and finds applications in various topics such as quantum cryptography~\cite{canetti2001universally}, quantum error mitigation~\cite{takagi2021fundamental}, and quantum data hiding~\cite{divincenzo2002quantum}. It aims to distinguish quantum states using a positive operator-valued measure (POVM), a set of positive semi-definite operators that sum to the identity operator.
Here, we have to seriously note that in quantum state discrimination, we can only measure each quantum state once, instead of measuring repeatedly and calculating the expectations as shown in quantum classification. 

In general, a perfect discrimination (i.e., a perfect POVM) can not be achieved if quantum states are non-orthogonal. 
A natural alternative option is by adopting some metrics such as the success probability so that the optimal POVM could be obtained via various kinds of optimization ways, e.g., Helstrom bound \cite{helstrom1969quantum} and semi-definite programming (SDP) \cite{Bae2013}.
Recently, researchers also try to train QNNs as substitutions for optimal POVMs \cite{chen2021universal,patterson2021quantum}.

Next, we demonstrate the impact of the encoded quantum states from the data set $\mathcal D$ defined in Def. \ref{def:encoded_data_set} on quantum state discrimination. 
Our goal is to obtain the maximum success probability $p_{\rm succ}$ by maximizing the success probability over all POVMs with $K$ operators:
\begin{align} \label{eq:define_p_guess}
    p_{\rm succ} \equiv \max_{\LR{\Pi_{k}}_k}
    \frac{1}{K} \sum_{k=1}^K \Tr\Lr{\Pi_k \Bar{\rho}_{k,M} }  \quad \text{with} \quad \Bar{\rho}_{k,M} \equiv \frac{1}{M}\sum_{m=1}^{KM} y^{(m)}_k \rho(\bm x^{(m)}),
\end{align}
where $y^{(m)}_k$ denotes the $k$-th element of the label $\bm y^{(m)}$ and $\LR{\Pi_{k}}_{k=1}^K$ denotes a POVM, which satisfies $\sum_{k=1}^K\Pi_{k} = I$.

\begin{prop}\label{prop:quantum_state_discrimination}
Consider a $K$-class discrimination task with the data set $\mathcal D$ defined in Def. \ref{def:encoded_data_set}. If the encoding depth $D\ge \frac{1}{\sigma^2}\Lr{ (n+4)\ln{2} + 2 \ln\lr{1/\epsilon } }$ for a given $\epsilon\in(0,1)$, then with a probability of at least $1-2\mathrm{e}^{-M\epsilon^2 / 8}$, the maximum success probability $p_{\rm succ}$ is bounded as
\begin{align}
    p_{\rm succ} \le {1}/{K}+\epsilon.
\end{align}
\end{prop}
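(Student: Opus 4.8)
The plan is to reduce the success probability to exactly the deviation that Corollary~\ref{coro:probability} already controls. First I would use the POVM completeness relation $\sum_{k=1}^K \Pi_k = I$ to peel off the ``random guessing'' baseline: since $\Tr[\mathbbm 1]=1$, for any POVM we have $\frac 1K \sum_k \Tr[\Pi_k \mathbbm 1] = \frac 1K \Tr[\mathbbm 1 \sum_k \Pi_k] = \frac 1K$, so
\begin{align}
\frac 1K \sum_{k=1}^K \Tr\Lr{\Pi_k \Bar{\rho}_{k,M}} = \frac 1K + \frac 1K \sum_{k=1}^K \Tr\Lr{\Pi_k \lr{\Bar{\rho}_{k,M} - \mathbbm 1}}.
\end{align}
Taking the maximum over POVMs, the whole task collapses to showing that $\max_{\LR{\Pi_k}}\frac 1K\sum_k \Tr[\Pi_k(\Bar{\rho}_{k,M}-\mathbbm 1)]\le \epsilon$.

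Next I would cast each summand into the precise form handled by Corollary~\ref{coro:probability}. Because every effect obeys $0\preceq \Pi_k \preceq I$, the operator $H_k \equiv 2\Pi_k - I$ is Hermitian with all eigenvalues in $[-1,1]$; moreover $\Bar{\rho}_{k,M}-\mathbbm 1$ is traceless, whence $\Tr[\Pi_k(\Bar{\rho}_{k,M}-\mathbbm 1)] = \tfrac 12 \Tr[H_k (\Bar{\rho}_{k,M}-\mathbbm 1)]$. Now $\Bar{\rho}_{k,M}$ is exactly the empirical average of the $M$ class-$k$ encoded states, i.e. the object $\Bar{\rho}_M$ of Corollary~\ref{coro:probability} specialized to category $k$, and the depth hypothesis $D\ge \frac{1}{\sigma^2}[(n+4)\ln 2 + 2\ln(1/\epsilon)]$ is identical to the one there. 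Applying Corollary~\ref{coro:probability} with observable $H_k$ gives $|\Tr[H_k(\Bar{\rho}_{k,M}-\mathbbm 1)]|\le \epsilon$ with high probability, so $\frac 1K\sum_k \Tr[\Pi_k(\Bar{\rho}_{k,M}-\mathbbm 1)]\le \tfrac 12\epsilon\le\epsilon$, yielding the claimed $p_{\rm succ}\le 1/K+\epsilon$.

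The main obstacle is that the optimal POVM — and hence the induced observables $H_k=2\Pi_k-I$ — depend on the random samples, whereas Corollary~\ref{coro:probability}, being a scalar Hoeffding estimate, is stated for an observable fixed in advance. To close this rigorously I would instead bound the POVM-free quantity $\max_{0\preceq\Pi\preceq I}\Tr[\Pi(\Bar{\rho}_{k,M}-\mathbbm 1)] = \tfrac 12\|\Bar{\rho}_{k,M}-\mathbbm 1\|_1$, i.e. control the trace distance, and split it as $\|\Bar{\rho}_{k,M}-\mathbbm 1\|_1 \le \|\Bar{\rho}_{k,M}-\Bar{\rho}_k\|_1 + \|\Bar{\rho}_k-\mathbbm 1\|_1$. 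The deterministic term is governed by Theorem~\ref{thm:etg_general} together with a Pinsker-type inequality (using $D_2\ge D_1$ and $\|\rho-\sigma\|_1^2\le 2\ln 2\, D_1(\rho\|\sigma)$), which the chosen threshold makes $O(\epsilon)$; the empirical fluctuation $\|\Bar{\rho}_{k,M}-\Bar{\rho}_k\|_1$ is a centered average of i.i.d. matrices whose extremizing observable $\mathrm{sgn}(\Bar{\rho}_{k,M}-\mathbbm 1)$ is still data-dependent, so here I expect to need a covering-net refinement of Hoeffding or a matrix concentration inequality — the price being a possible dimension factor. A secondary subtlety is the union bound over the $K$ classes, which a priori turns the failure probability into $2K\mathrm{e}^{-M\epsilon^2/8}$; recovering the clean $2\mathrm{e}^{-M\epsilon^2/8}$ of the statement requires either absorbing $K$ or arranging a single joint concentration step over all categories.
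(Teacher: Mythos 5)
Your first two paragraphs reconstruct what is, in fact, the paper's entire proof: the paper disposes of this proposition in one sentence, ``combine Eq.~\eqref{eq:define_p_guess} with Corollary~\ref{coro:probability}'', which amounts exactly to peeling off the $1/K$ baseline via $\sum_k \Pi_k = I$ and applying the corollary with $H=\Pi_k$ (your rescaling $H_k = 2\Pi_k - I$ is a harmless variant). Your union-bound worry about the $K$ classes is real but, as you yourself anticipate, repairable: for a \emph{fixed} POVM the quantity $\frac{1}{K}\sum_k \Tr\Lr{\Pi_k \Bar{\rho}_{k,M}}$ is an average of $KM$ independent $[0,1]$-valued random variables, so a single Hoeffding application over all samples covers all categories at once without the factor $K$ in the failure probability.

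The obstacle you flag in your last paragraph, however, is a genuine gap --- in the paper's argument as much as in your first two paragraphs. Corollary~\ref{coro:probability} is a fixed-observable concentration bound, while the POVM attaining the maximum in Eq.~\eqref{eq:define_p_guess} is a function of the samples, so no pointwise application of the corollary can control it. Moreover, your suspicion that a rigorous repair must pay a dimension factor is not a removable technicality: under the stated depth condition one has $\Tr\Lr{\Bar{\rho}_k \Bar{\rho}_{k'}} \le 2^{-n}\lr{1+\epsilon^2/16}$, so the expected pairwise overlaps of the $KM$ encoded pure states are of order $2^{-n}$; when $KM \ll 2^{n/2}$ these states are, with high probability, pairwise nearly orthogonal, the empirical averages $\Bar{\rho}_{k,M}$ are low-rank with nearly orthogonal supports, and the data-dependent POVM obtained by (orthogonalizing and) projecting onto those supports achieves $p_{\rm succ}$ close to $1$ --- violating the claimed bound in a regime where $1-2\mathrm{e}^{-M\epsilon^2/8}$ is close to $1$. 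Hence the proposition as stated only holds where your trace-norm route can actually close, i.e.\ when the empirical fluctuation $\|\Bar{\rho}_{k,M}-\Bar{\rho}_k\|_{\rm tr}$ is small, which requires $M$ on the order of $2^n$ up to polynomial factors (this is the regime of the paper's experiments, where $M\approx 10^4 \gg 2^n$), or else the maximization must be restricted to POVMs chosen independently of the data, in which case the paper's one-line argument --- your first two paragraphs --- is correct as it stands.
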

This proposition implies that as long as the encoding depth $D$ and the data numbers $M$ are large enough, the optimal success probability $p_{\rm succ}$ could be arbitrarily close to $\frac{1}{K}$ with a remarkably high probability for the data set $\mathcal D$.
This nearly blind-guessing success probability shows that the concentration of the encoded quantum states in the data set $\mathcal D$ will lead to the failure of state discrimination via POVM. As POVMs are the most general kind of measurements one can implement to extract classical information from quantum systems~\cite{Nielsen2002}, we conclude that the above different classes of encoded states are indistinguishable from the perspective of quantum information. 
The proof of Proposition \ref{prop:quantum_state_discrimination} could be derived straightforwardly by combining Eq.~\eqref{eq:define_p_guess} with Corollary~\ref{coro:probability}.

\begin{figure}[t]
 \[\Qcircuit @C=0.5em @R=1em {
        \lstick{\ket{0}} & \gate{R_y(x_{1})}  &\ctrl{1} & \qw & \qw & \targ &  \gate{R_y(x_{4d+1})} &\qw & \barrier[-0.5em]{3} \cdots  & & \gate{U3(\theta_{1,5,9})}  &\ctrl{1} & \qw & \qw & \targ &  \gate{R_y(\theta_{4l+9})} &\qw & \rstick{\langle H_k\rangle} \\
        \lstick{\ket{0}} & \gate{R_y(x_{2})} & \targ & \ctrl{1} & \qw & \qw & \gate{R_y(x_{4d+2})} &\qw&\cdots & & \gate{U3(\theta_{2,6,10})} & \targ & \ctrl{1} & \qw & \qw & \gate{R_y(\theta_{4l+10})} &\qw \\ 
        \lstick{\ket{0}} & \gate{R_y(x_{3})} & \qw & \targ & \ctrl{1} & \qw & \gate{R_y(x_{4d+3})} &\qw &\cdots & &  \gate{U3(\theta_{3,7,11})} & \qw & \targ & \ctrl{1} & \qw & \gate{R_y(\theta_{4l+11})} &\qw \\
        \lstick{\ket{0}} & \gate{R_y(x_{4})} & \qw & \qw & \targ & \ctrl{-3} & \gate{R_y(x_{4d+4})}&\qw &\cdots & &  \gate{U3(\theta_{4,8,12})} & \qw & \qw & \targ & \ctrl{-3} & \gate{R_y(\theta_{4l+12})} &\qw 
         \gategroup{1}{3}{4}{7}{0.5em}{--}  \gategroup{1}{12}{4}{16}{0.5em}{--} \\
  &&&&&&  \times (D-1) &&&&&&&&& \times L_{\rm QNN}
        }\]
        \caption{Circuits for data encoding (before the barrier line) and quantum neural network (after the barrier line) in the 4-qubit case. Here the input $\bm x\in \mathbb R^{4D}$ and $d\in[1, D-1]$. The QNN totally has $L_{\rm QNN}+1$ layers with parameters $\bm \theta\in\mathbb R^{4L_{\rm QNN}+12}$, where the first layer $U3 $ gates consists of 12 parameters. After QNN, there are $K$ expectations $\LR{\langle H_k\rangle}_{k=1}^K$ for $K$-classification tasks.}
        \label{fig:cir_ry_cnot_and_qnn}
 \end{figure}
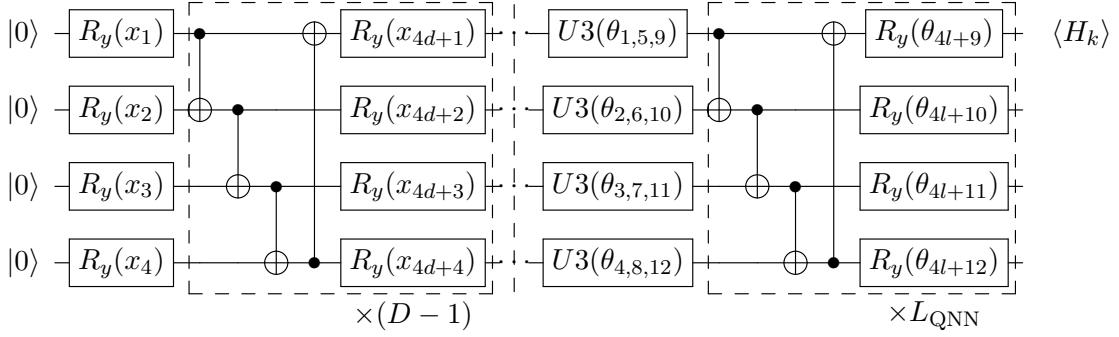

\section{Numerical Experiments}
\label{sec:numerical_experiment}

Previous sections demonstrate that the average encoded state will concentrate on the maximally mixed state under PQC-based data encoding strategies with large depth. These encoded states theoretically cannot be utilized to train QNNs or distinguished by POVMs. In this section, we verify these results on both synthetic and public data sets by choosing a commonly employed strongly entangling circuit \cite{Schuld2018}, which helps to understand the concentration rate intuitively for realistic encoding circuits.
All the simulations and optimization loop are implemented via Paddle Quantum\footnote{https://github.com/paddlepaddle/Quantum} on the PaddlePaddle Deep Learning Platform~\cite{Ma2019p}.

\begin{figure}[h]
    \centering
    \subfigure[Encoding Strategy in Fig. \ref{fig:cir_ry_without_etg}]{\label{fig:synthetic_D2_vs_depth_warmup}
		\includegraphics[width=0.4\textwidth]{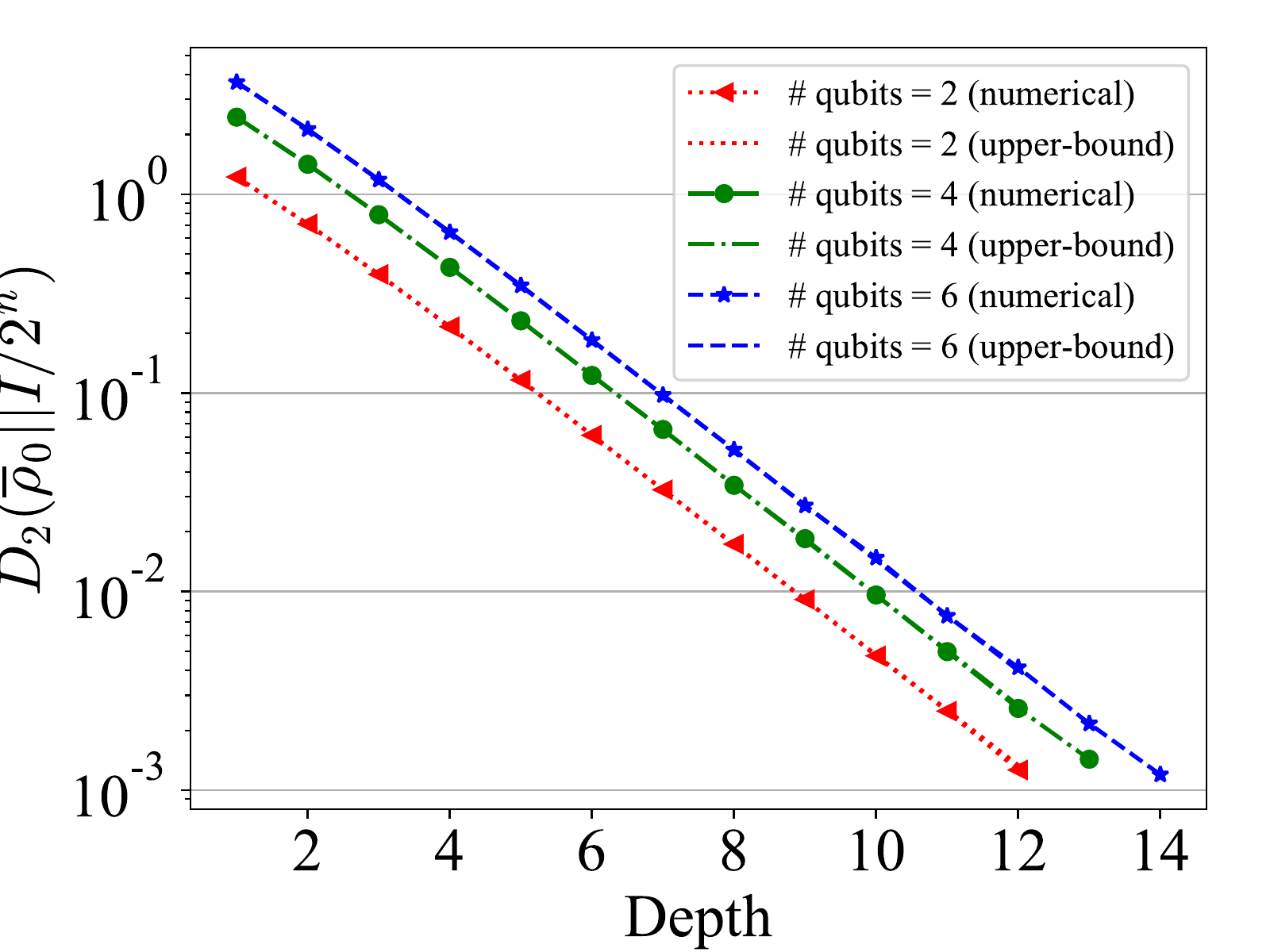}}
		\subfigure[Encoding Strategy in Fig. \ref{fig:cir_ry_cnot_and_qnn}]{\label{fig:synthetic_D2_vs_depth}
		\includegraphics[width=0.4\textwidth]{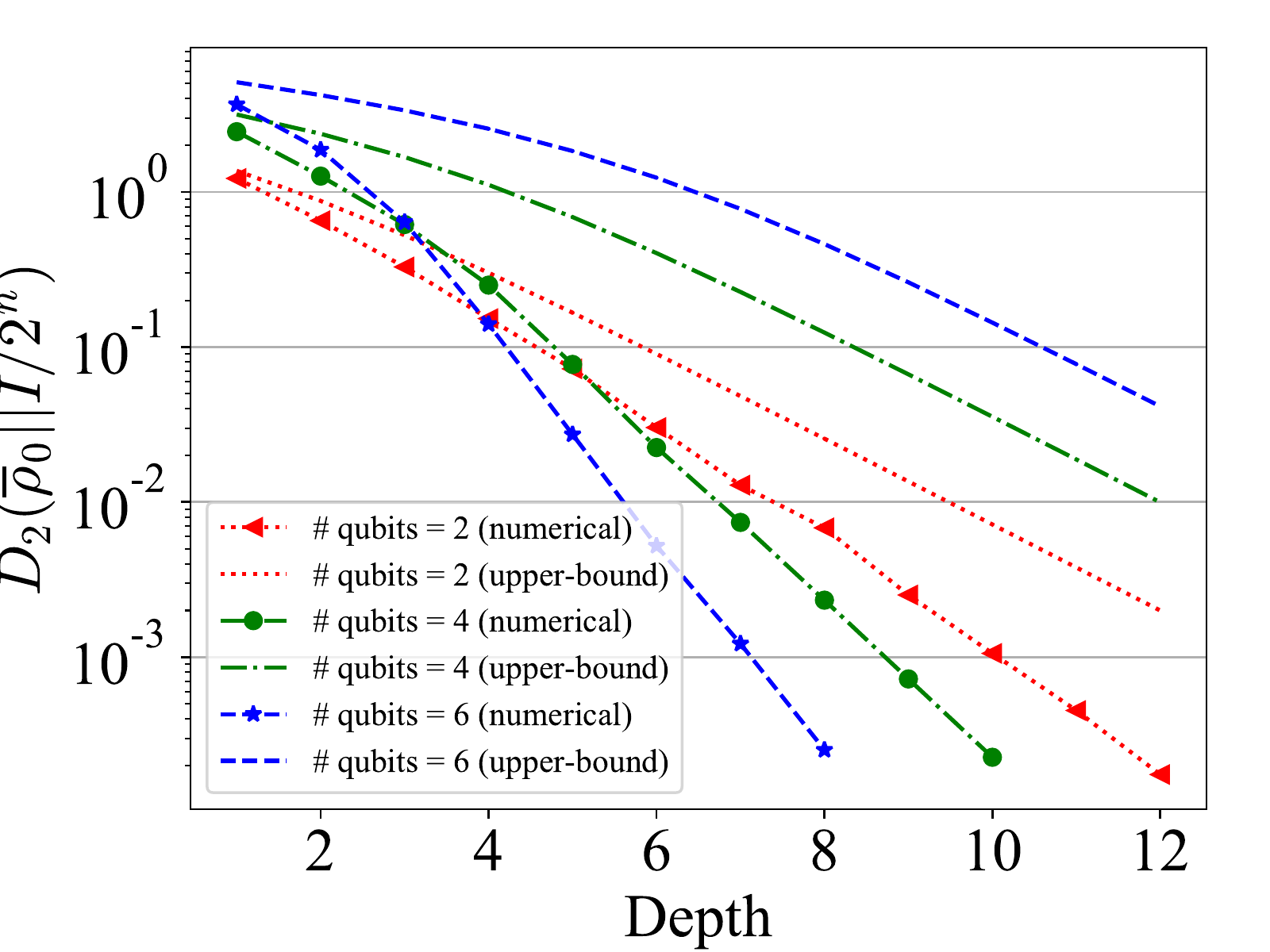}}
    \caption{Exponential decay of quantum  divergence $D_2(\bar{\rho}_0||\mathbbm{1})$ vs. encoding depth under different qubit cases for synthetic data set. Here, there are one million data samples for calculating average encoded state $\bar{\rho}_0$ for class 0 at each point in numerical lines. And
    the upper-bounds come from (a) Theorem \ref{thm:only_Ry}, (b) Theorem \ref{thm:etg_general}, respectively. }
    \label{fig:D2_vs_depth}
\end{figure}

\subsection{On Synthetic Data Set}

\paragraph{Data Set} The synthetic two-class data set $\LR{(\bm x^{(m)},\bm y^{(m)})}_{m=1}^{M}$ is generated following the distributions depicted in Fig. \ref{fig:gaussian_data}, where each $x^{(m)}_j\sim \mathcal N(\mu_{j},\sigma^2_{j})$ for $1\le j\le t$ and $\bm y^{(m)}$ denotes a one-hot label.
Here, we assume all means come from two lines, i.e., $\mu_j=\frac{2\pi}{16}(j-1)\ {\rm mod}\ 2\pi$ for class 0 and $\mu_j=\frac{2\pi}{16}(16-j)\ {\rm mod}\ 2\pi$ for class 1, and all $\sigma_j$'s  are set as 0.8. Note that the same variance is selected for both classes to facilitate the demonstration of the experiment. Other choices of $\sigma_j$'s would have similar effects.

We first verify our two main upper bounds given in Theorems \ref{thm:only_Ry} and \ref{thm:etg_general} by encoding the $nD$-dimensional inputs that belong to the same class into $n$-qubit quantum states with $D$ encoding depths under the encoding strategies illustrated in Figs. \ref{fig:cir_ry_without_etg} and \ref{fig:cir_ry_cnot_and_qnn}, respectively.
Here, $n$ is set as $2,4,6$ and $D\in [1,14]$. The results are displayed in Fig. \ref{fig:D2_vs_depth}, from which we can intuitively see that the divergences decrease exponentially on depth. Specifically, from Fig. \ref{fig:synthetic_D2_vs_depth_warmup}, we know the upper bound in Theorem \ref{fig:cir_ry_without_etg} is tight, which is also easily verified from our proof. From Fig. \ref{fig:synthetic_D2_vs_depth}, we learn that the upper bound in Theorem \ref{thm:etg_general} is quite loose, which suggests that the real situation is much worse than our theoretical analysis. We also notice in Fig.~\ref{fig:synthetic_D2_vs_depth} that for this strongly entangling encoding strategy, the larger the qubit number is, the faster the divergence decreases. This unexpected phenomenon reveals the possibility that specific structures of encoding circuits may lead to more severe concentrations for larger numbers of qubits and is worthy of further studies.

\begin{figure}[t]
    \centering
    \subfigure[]{\label{fig:synthetic_acc_vs_depth}
		\includegraphics[width=0.32\textwidth]{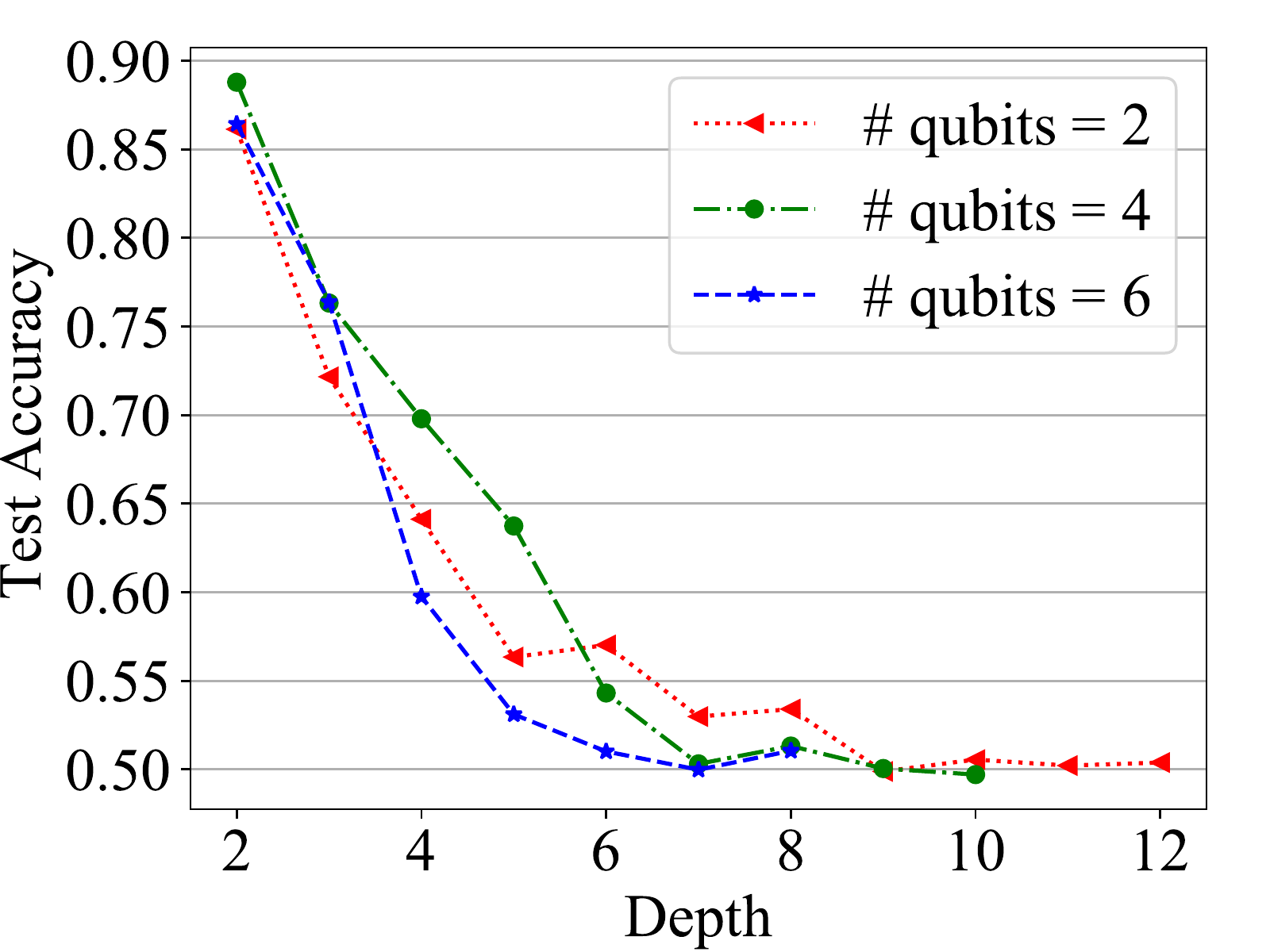}}
	\subfigure[]{\label{fig:povm_depth}
		\includegraphics[width=0.32\textwidth]{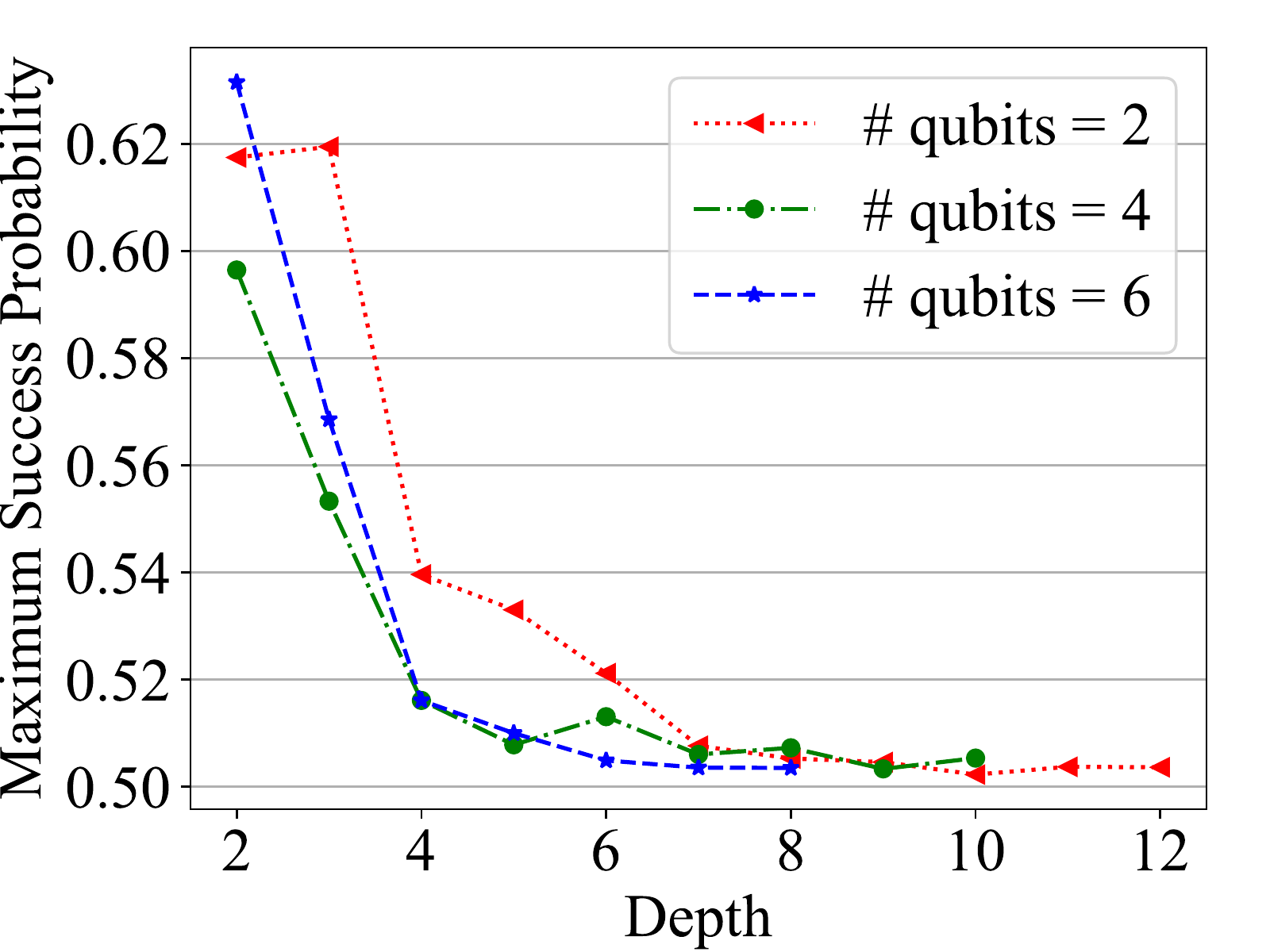}}
	\subfigure[]{\label{fig:synthetic_loss}
		\includegraphics[width=0.32\textwidth]{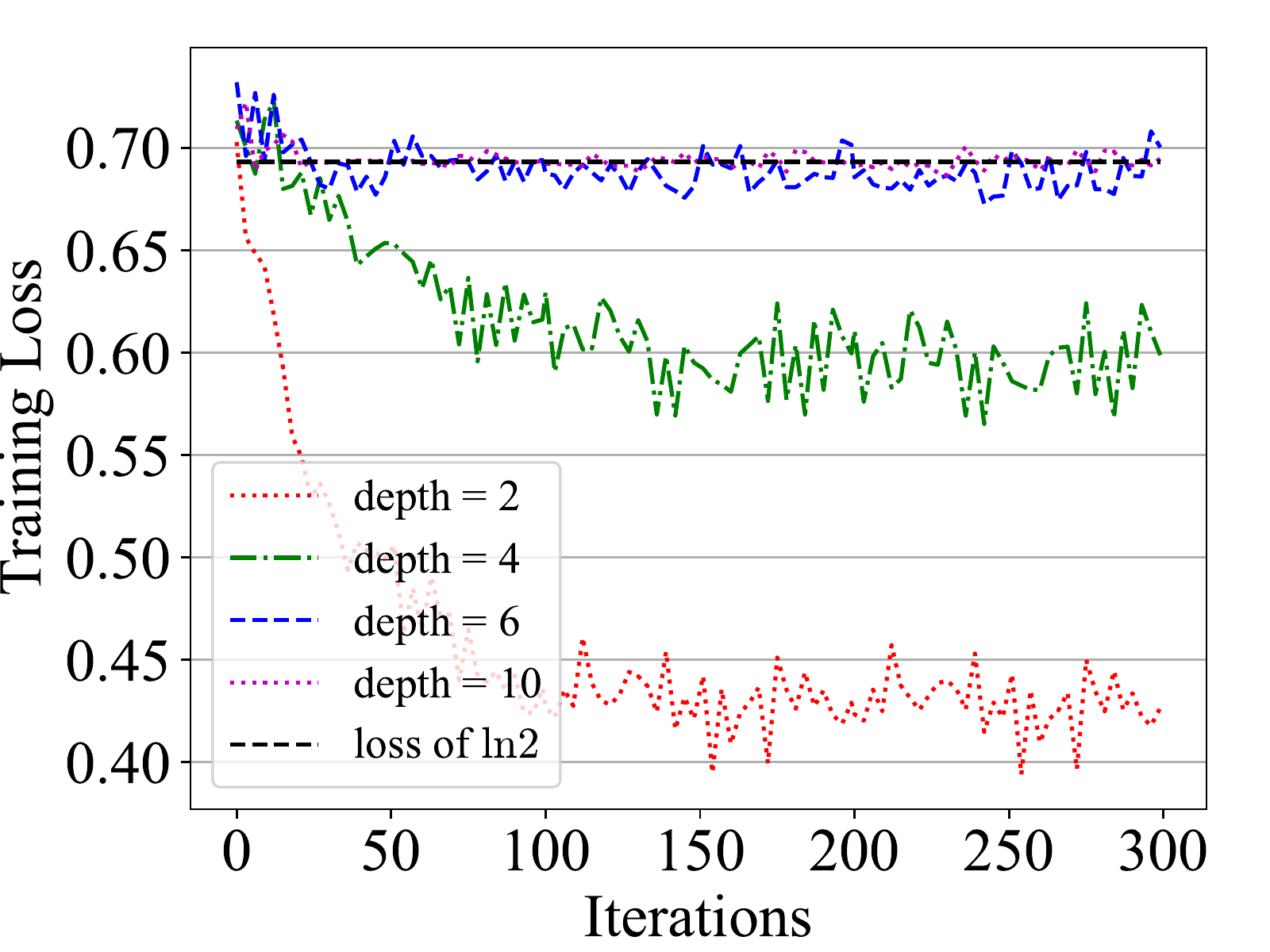}}
    \caption{Numerical results for synthetic data sets under the encoding strategy in Fig. \ref{fig:cir_ry_cnot_and_qnn}. In all qubit cases, (a) the test accuracy of QNN (or (b) the maximum success probability of POVM) will eventually decay to 50\% or so with the depth growing; (c) In the 4 qubit case, for instance, the training losses of QNN do not decrease and stay at about $\ln 2$ in the training process when the depth becomes large enough.}
    \label{fig:synthetic_results}
\end{figure}

Next, we examine the performance of QNNs and POVMs by generating 20k data samples for training and 4k for testing under the encoding strategy in Fig. \ref{fig:cir_ry_cnot_and_qnn}, where half of the data belong to class 0, and the others belong to class 1.
The QNNs are designed according to the right hand side of Fig. \ref{fig:cir_ry_cnot_and_qnn}, where the number of layers $L_{\rm QNN}$ is set as $n+2$,
the finally measured Hermitian operators are set as $H_1 = Z$ and $H_2 = X$ on the first qubit, and all parameters $\bm \theta$ are initialized randomly in $[0,2\pi]$. During the optimization, we adopt the Adam optimizer~\cite{Kingma2014} with a batch size of 200 and a learning rate of $0.02$.
In the POVM setting, we directly employ semi-definite programming \cite{Bae2013} to obtain the maximum success probability $P_{\rm succ}$ on the training data samples. The results are illustrated in Fig. \ref{fig:synthetic_results}. We observe that both the test accuracy of the QNNs and the maximum success probability $P_{\rm succ}$ of the POVMs eventually decay to about $0.5$ as the encoding depth grows, indicating that the classification abilities of both the QNNs and the POVMs are no better than random guessing. In addition, the training losses of QNNs in Fig. \ref{fig:synthetic_loss} gradually approach $\ln 2$ as the depth grows and finally do not go down anymore during the whole training process, which implies that the concentration of this data set on the maximally mixed state would limit the trainability of QNNs as we predicted in Sec.~\ref{sec:app_classifier}. All these results are in line with our theoretical expectations.

\subsection{On Public Data Set}\label{sec:exp_public_data}

\paragraph{Data Set and Preprocessing} The handwritten digit data set MNIST \cite{lecun1998gradient} consists of 70k images labeled from `0' to `9', each of which contains $28\times 28$ gray scale pixels valued in $[0, 255]$. In order to facilitate encoding, these images are first resized to $4\times 4$ and then normalized to values between $0$ and $\pi$. Finally, we select all images corresponding to two pairs of labels, i.e., $(2,9)$ and $(3,6)$, for two binary classification tasks. For each task, there are about 12k training samples and 2k testing samples, and each category accounts for half or so.

\begin{figure}[t]
    \centering
    \subfigure[]{\label{fig:mnist_D2}
		\includegraphics[width=0.32\textwidth]{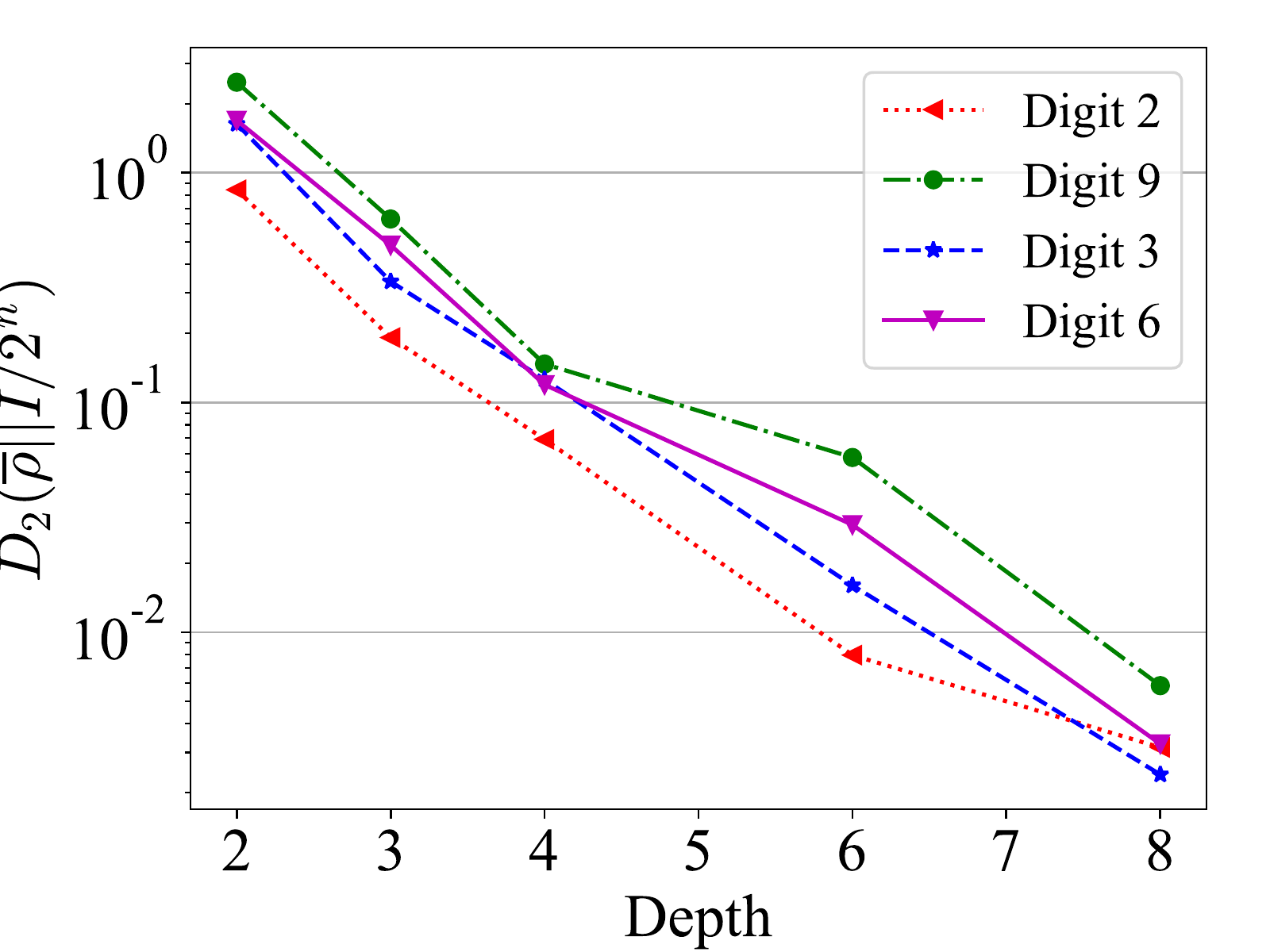}}
		\subfigure[]{\label{fig:mnist_acc}
		\includegraphics[width=0.32\textwidth]{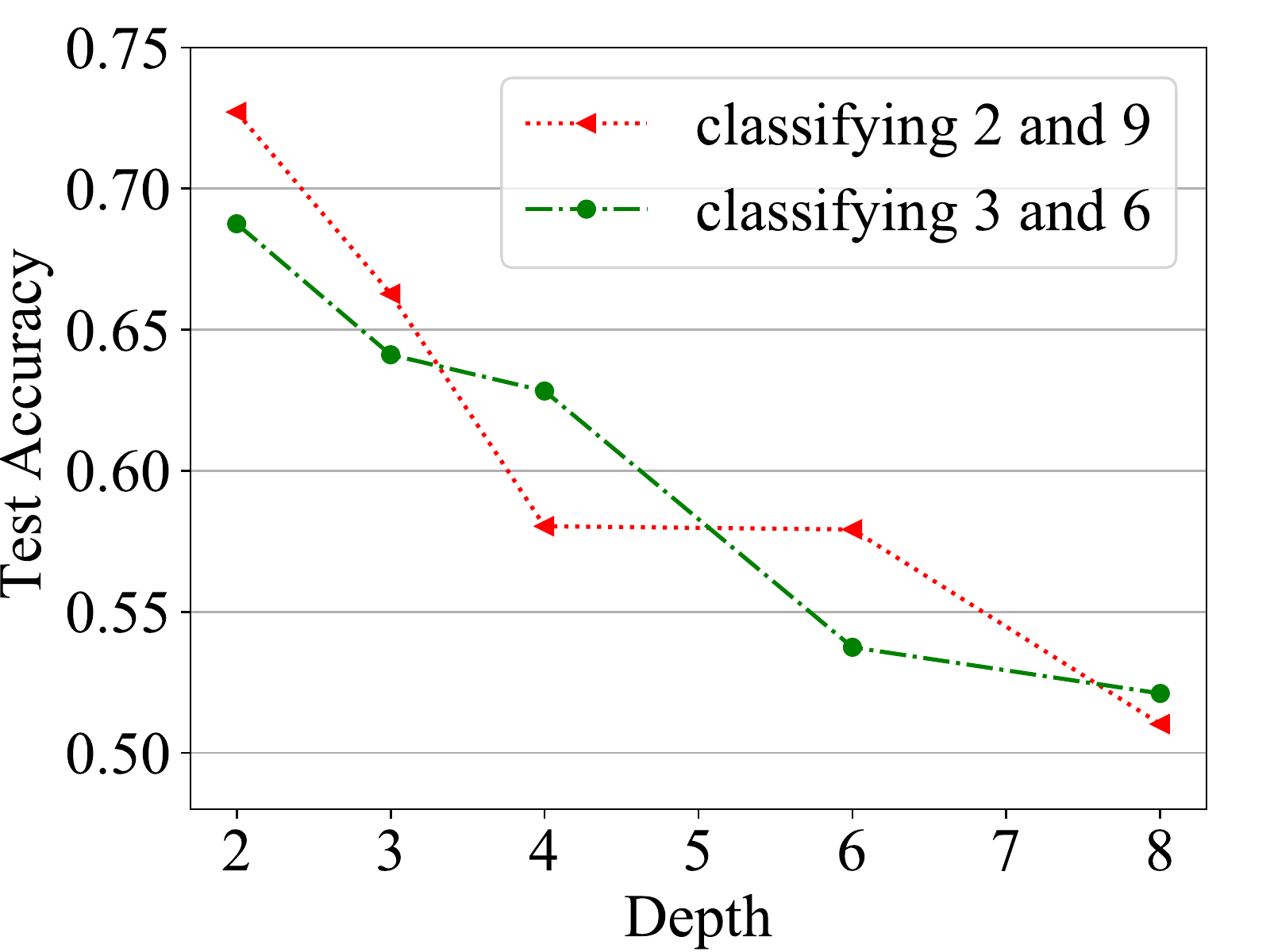}}
	\subfigure[]{\label{fig:mnist_loss}
		\includegraphics[width=0.32\textwidth]{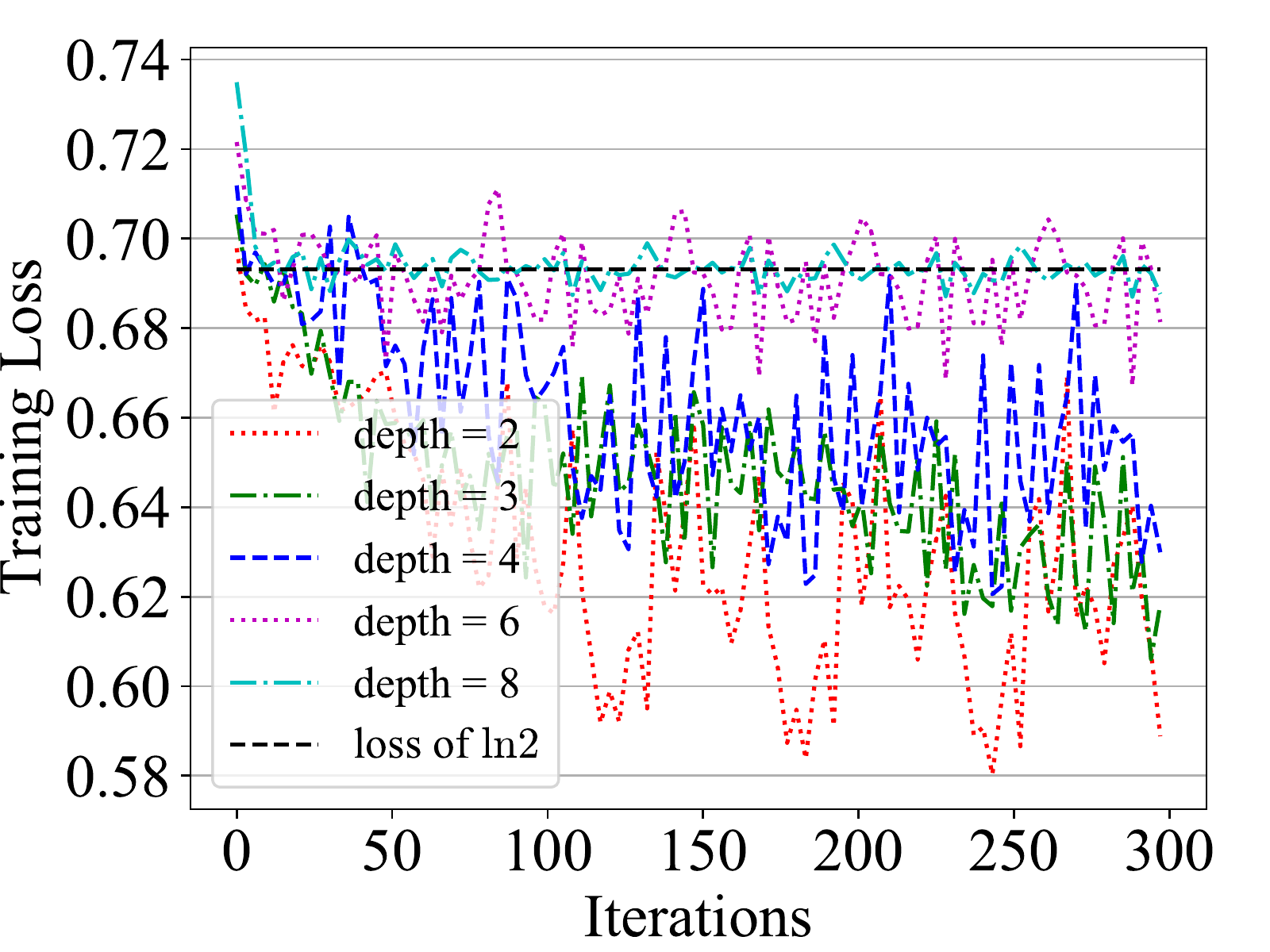}}
    \caption{Numerical results of QNN for MNIST data set under the encoding strategy in Fig. \ref{fig:cir_ry_cnot_and_qnn}. (a) The curves for the quantum  divergence between the averaged encoded state $\Bar{\rho}$ of each handwritten digit and the maximally mixed state $\mathbbm{1}$ decrease exponentially on depth.
    (b) The test accuracy reduce rapidly with a larger encoding depth; (c) In the case of classifying digits 3 and 6, when the depth is large (e.g., 8), it is difficult to keep
    the training loss away from $\ln 2$ in the training process.}
    \label{fig:mnist_result}
\end{figure}

Here we mainly consider the performance of QNN on this data set because POVMs are generally not suitable for prediction.
These 16-dimensional preprocessed images are first encoded into $n$-qubit quantum states with encoding depth $D$ and then fed into a QNN (cf. Fig. \ref{fig:cir_ry_cnot_and_qnn}). We set $n$ as 2,3,4,6,8 and $D$ as 8,6,4,3,2 accordingly. The settings of QNN are almost the same as those used in the synthetic case, except for a new learning rate of $0.05$.
From Fig. \ref{fig:mnist_D2}, we see that the average state of each digit class concentrates on the maximally mixed state at an approximately exponential speed on depth, which is consistent with our main result. Furthermore, the outcomes in Figs. \ref{fig:mnist_acc} and \ref{fig:mnist_loss} also confirm the incapability of training of QNNs, provided that the classical inputs are encoded by a higher depth PQC.

\section{Discussion}

We have witnessed both theoretically and experimentally that for usual PQC-based data encoding strategies with higher depth, the average encoded state concentrates on the maximally mixed state. We further show that such concentration severely limits the capabilities of quantum classifiers for practical tasks. Such limitation indicates that we should pay more attention to methods encoding classical data into PQCs in quantum supervised learning.

Our work suggests that the distance between the average encoded state and the maximally mixed state may be a reasonable metric to quantify how well the quantum encoding preserves the features in quantum supervised learning. The result on the encoding concentration also motivates us to consider how to design PQC-based encoding strategies better to avoid the exponentially decayed distance. An obvious way this paper implies might be to keep the depth shallow while accompanied by a higher width. Still, it will render poor generalization performance \cite{banchi2021generalization} as well as the notorious barren plateau issue \cite{McClean2018}. Therefore, it will be desirable to develop nontrivial quantum encoding strategies to guarantee the effectiveness and efficiency of quantum supervised learning as well as quantum kernel methods \cite{Schuld2021,Huang2021a,Peters2021}. Recent works on data re-uploading \cite{perez2020data,schuld2021effect,Jerbi2021} and pooling \cite{banchi2021generalization,Lloyd2020} of quantum neural networks may also provide potential methods for improving the quantum encoding efficiency.

\section*{Acknowledgements}
We would like to thank Sanjiang Li, Yuan Feng, Hongshun Yao, and Zhan Yu for helpful discussions.
G. L. acknowledges the support from the Baidu-UTS AI Meets Quantum project, the China Scholarship Council (No. 201806070139), and the Australian Research Council project (Grant No: DP180100691). 
Part of this work was done when R. Y. and X. Z. were research interns at Baidu Research.

\bibliographystyle{unsrt}
\bibliography{ref}

\begin{thebibliography}{10}

\bibitem{Biamonte2017}
Jacob Biamonte, Peter Wittek, Nicola Pancotti, Patrick Rebentrost, Nathan
  Wiebe, and Seth Lloyd.
\newblock {Quantum machine learning}.
\newblock {\em Nature}, 549(7671):195--202, sep 2017.

\bibitem{Schuld2015a}
Maria Schuld, Ilya Sinayskiy, and Francesco Petruccione.
\newblock {An introduction to quantum machine learning}.
\newblock {\em Contemporary Physics}, 56(2):172--185, apr 2015.

\bibitem{Ciliberto2018}
Carlo Ciliberto, Mark Herbster, Alessandro~Davide Ialongo, Massimiliano Pontil,
  Andrea Rocchetto, Simone Severini, and Leonard Wossnig.
\newblock {Quantum machine learning: a classical perspective}.
\newblock {\em Proceedings of the Royal Society A: Mathematical, Physical and
  Engineering Sciences}, 474(2209):20170551, jan 2018.

\bibitem{Lloyd2013}
Seth Lloyd, Masoud Mohseni, and Patrick Rebentrost.
\newblock {Quantum algorithms for supervised and unsupervised machine
  learning}.
\newblock {\em arXiv:1307.0411}, jul 2013.

\bibitem{harrow2017quantum}
Aram~W Harrow and Ashley Montanaro.
\newblock Quantum computational supremacy.
\newblock {\em Nature}, 549(7671):203--209, 2017.

\bibitem{Preskill2018}
John Preskill.
\newblock {Quantum Computing in the NISQ era and beyond}.
\newblock {\em Quantum}, 2:79, aug 2018.

\bibitem{arute2019quantum}
Frank Arute, Kunal Arya, Ryan Babbush, Dave Bacon, Joseph~C Bardin, Rami
  Barends, Rupak Biswas, Sergio Boixo, Fernando~GSL Brandao, David~A Buell,
  et~al.
\newblock Quantum supremacy using a programmable superconducting processor.
\newblock {\em Nature}, 574(7779):505--510, 2019.

\bibitem{zhong2020quantum}
Han-Sen Zhong, Hui Wang, Yu-Hao Deng, Ming-Cheng Chen, Li-Chao Peng, Yi-Han
  Luo, Jian Qin, Dian Wu, Xing Ding, Yi~Hu, et~al.
\newblock Quantum computational advantage using photons.
\newblock {\em Science}, 370(6523):1460--1463, 2020.

\bibitem{bharti2022noisy}
Kishor Bharti, Alba Cervera-Lierta, Thi~Ha Kyaw, Tobias Haug, Sumner
  Alperin-Lea, Abhinav Anand, Matthias Degroote, Hermanni Heimonen, Jakob~S
  Kottmann, Tim Menke, et~al.
\newblock Noisy intermediate-scale quantum algorithms.
\newblock {\em Reviews of Modern Physics}, 94(1):015004, 2022.

\bibitem{cerezo2021variational}
Marco Cerezo, Andrew Arrasmith, Ryan Babbush, Simon~C Benjamin, Suguru Endo,
  Keisuke Fujii, Jarrod~R McClean, Kosuke Mitarai, Xiao Yuan, Lukasz Cincio,
  et~al.
\newblock Variational quantum algorithms.
\newblock {\em Nature Reviews Physics}, 3(9):625--644, 2021.

\bibitem{Schuld2018}
Maria Schuld, Alex Bocharov, Krysta~M. Svore, and Nathan Wiebe.
\newblock {Circuit-centric quantum classifiers}.
\newblock {\em Physical Review A}, 101(3):032308, mar 2020.

\bibitem{Venkatesan2018}
Iris Cong, Soonwon Choi, and Mikhail~D. Lukin.
\newblock {Quantum convolutional neural networks}.
\newblock {\em Nature Physics}, 15(12):1273--1278, dec 2019.

\bibitem{Grant2018}
Edward Grant, Marcello Benedetti, Shuxiang Cao, Andrew Hallam, Joshua Lockhart,
  Vid Stojevic, Andrew~G. Green, and Simone Severini.
\newblock {Hierarchical quantum classifiers}.
\newblock {\em npj Quantum Information}, 4(1):65, dec 2018.

\bibitem{ostaszewski2021reinforcement}
Mateusz Ostaszewski, Lea~M Trenkwalder, Wojciech Masarczyk, Eleanor Scerri, and
  Vedran Dunjko.
\newblock Reinforcement learning for optimization of variational quantum
  circuit architectures.
\newblock {\em Advances in Neural Information Processing Systems},
  34:18182--18194, 2021.

\bibitem{ostaszewski2021structure}
Mateusz Ostaszewski, Edward Grant, and Marcello Benedetti.
\newblock Structure optimization for parameterized quantum circuits.
\newblock {\em Quantum}, 5:391, 2021.

\bibitem{Zhang2020}
Shi-Xin Zhang, Chang-Yu Hsieh, Shengyu Zhang, and Hong Yao.
\newblock {Differentiable Quantum Architecture Search}.
\newblock {\em arXiv:2010.08561}, oct 2020.

\bibitem{du2020quantum}
Yuxuan Du, Tao Huang, Shan You, Min-Hsiu Hsieh, and Dacheng Tao.
\newblock Quantum circuit architecture search: error mitigation and
  trainability enhancement for variational quantum solvers.
\newblock {\em arXiv preprint arXiv:2010.10217}, 2020.

\bibitem{caro2021encoding}
Matthias~C Caro, Elies Gil-Fuster, Johannes~Jakob Meyer, Jens Eisert, and Ryan
  Sweke.
\newblock Encoding-dependent generalization bounds for parametrized quantum
  circuits.
\newblock {\em Quantum}, 5:582, 2021.

\bibitem{banchi2021generalization}
Leonardo Banchi, Jason Pereira, and Stefano Pirandola.
\newblock Generalization in quantum machine learning: A quantum information
  standpoint.
\newblock {\em PRX Quantum}, 2(4):040321, 2021.

\bibitem{Huang2021a}
Hsin-Yuan Huang, Michael Broughton, Masoud Mohseni, Ryan Babbush, Sergio Boixo,
  Hartmut Neven, and Jarrod~R. McClean.
\newblock {Power of data in quantum machine learning}.
\newblock {\em Nature Communications}, 12(1):2631, dec 2021.

\bibitem{liu2021rigorous}
Yunchao Liu, Srinivasan Arunachalam, and Kristan Temme.
\newblock A rigorous and robust quantum speed-up in supervised machine
  learning.
\newblock {\em Nature Physics}, 17(9):1013--1017, 2021.

\bibitem{li2019sublinear}
Tongyang Li, Shouvanik Chakrabarti, and Xiaodi Wu.
\newblock Sublinear quantum algorithms for training linear and kernel-based
  classifiers.
\newblock In {\em International Conference on Machine Learning}, pages
  3815--3824. PMLR, 2019.

\bibitem{kubler2021inductive}
Jonas K{\"u}bler, Simon Buchholz, and Bernhard Sch{\"o}lkopf.
\newblock The inductive bias of quantum kernels.
\newblock {\em Advances in Neural Information Processing Systems}, 34, 2021.

\bibitem{Schuld2021}
Maria Schuld.
\newblock {Quantum machine learning models are kernel methods}.
\newblock {\em arXiv:2101.11020}, pages 1--26, jan 2021.

\bibitem{Lloyd2020}
Seth Lloyd, Maria Schuld, Aroosa Ijaz, Josh Izaac, and Nathan Killoran.
\newblock {Quantum embeddings for machine learning}.
\newblock {\em arXiv:2002.08953}, pages 1--11, jan 2020.

\bibitem{perez2020data}
Adri{\'a}n P{\'e}rez-Salinas, Alba Cervera-Lierta, Elies Gil-Fuster, and
  Jos{\'e}~I Latorre.
\newblock Data re-uploading for a universal quantum classifier.
\newblock {\em Quantum}, 4:226, 2020.

\bibitem{Benedetti2019}
Marcello Benedetti, Erika Lloyd, Stefan Sack, and Mattia Fiorentini.
\newblock {Parameterized quantum circuits as machine learning models}.
\newblock {\em Quantum Science and Technology}, 4(4):043001, jun 2019.

\bibitem{Peters2021}
Evan Peters, Jo{\~{a}}o Caldeira, Alan Ho, Stefan Leichenauer, Masoud Mohseni,
  Hartmut Neven, Panagiotis Spentzouris, Doug Strain, and Gabriel~N. Perdue.
\newblock {Machine learning of high dimensional data on a noisy quantum
  processor}.
\newblock {\em arXiv:2101.09581}, pages 1--20, jan 2021.

\bibitem{Havlicek2019}
Vojt{\v{e}}ch Havl{\'{i}}{\v{c}}ek, Antonio~D. C{\'{o}}rcoles, Kristan Temme,
  Aram~W. Harrow, Abhinav Kandala, Jerry~M. Chow, and Jay~M. Gambetta.
\newblock {Supervised learning with quantum-enhanced feature spaces}.
\newblock {\em Nature}, 567(7747):209--212, mar 2019.

\bibitem{schuld2021effect}
Maria Schuld, Ryan Sweke, and Johannes~Jakob Meyer.
\newblock Effect of data encoding on the expressive power of variational
  quantum-machine-learning models.
\newblock {\em Physical Review A}, 103(3):032430, 2021.

\bibitem{larose2020robust}
Ryan LaRose and Brian Coyle.
\newblock Robust data encodings for quantum classifiers.
\newblock {\em Physical Review A}, 102(3):032420, 2020.

\bibitem{yano2020efficient}
Hiroshi Yano, Yudai Suzuki, Rudy Raymond, and Naoki Yamamoto.
\newblock Efficient discrete feature encoding for variational quantum
  classifier.
\newblock In {\em 2020 IEEE International Conference on Quantum Computing and
  Engineering (QCE)}, pages 11--21. IEEE, 2020.

\bibitem{Nielsen2002}
Michael~A. Nielsen and Isaac Chuang.
\newblock {Quantum Computation and Quantum Information}.
\newblock {\em American Journal of Physics}, 70(5):558--559, may 2002.

\bibitem{Muller-Lennert2013}
Martin M{\"{u}}ller-Lennert, Fr{\'{e}}d{\'{e}}ric Dupuis, Oleg Szehr, Serge
  Fehr, and Marco Tomamichel.
\newblock {On quantum R{\'{e}}nyi entropies: A new generalization and some
  properties}.
\newblock {\em Journal of Mathematical Physics}, 54(12):122203, dec 2013.

\bibitem{Wilde2014a}
Mark~M Wilde, Andreas Winter, and Dong Yang.
\newblock {Strong Converse for the Classical Capacity of Entanglement-Breaking
  and Hadamard Channels via a Sandwiched R{\'{e}}nyi Relative Entropy}.
\newblock {\em Communications in Mathematical Physics}, 331(2):593--622, oct
  2014.

\bibitem{Petz1986a}
D{\'{e}}nes Petz.
\newblock {Quasi-entropies for finite quantum systems}.
\newblock {\em Reports on Mathematical Physics}, 23(1):57--65, feb 1986.

\bibitem{Audenaert2007}
Koenraad M~R Audenaert.
\newblock {A sharp continuity estimate for the von Neumann entropy}.
\newblock {\em Journal of Physics A: Mathematical and Theoretical},
  40(28):8127--8136, jul 2007.

\bibitem{Nussbaum2009}
Michael Nussbaum and Arleta Szko{\l}a.
\newblock {The Chernoff lower bound for symmetric quantum hypothesis testing}.
\newblock {\em The Annals of Statistics}, 37(2):1040--1057, 2009.

\bibitem{Salzmann2021}
Robert Salzmann, Nilanjana Datta, Gilad Gour, Xin Wang, and Mark~M. Wilde.
\newblock {Symmetric distinguishability as a quantum resource}.
\newblock {\em New Journal of Physics}, 23(8):083016, aug 2021.

\bibitem{Kieferova2021}
Maria Kieferova, Ortiz~Marrero Carlos, and Nathan Wiebe.
\newblock {Quantum Generative Training Using R\'enyi Divergences}.
\newblock {\em arXiv:2106.09567}, jun 2021.

\bibitem{Fang2019a}
Kun Fang and Hamza Fawzi.
\newblock {Geometric R{\'{e}}nyi Divergence and its Applications in Quantum
  Channel Capacities}.
\newblock {\em Communications in Mathematical Physics}, 384(3):1615--1677, jun
  2021.

\bibitem{hoeffding1994probability}
Wassily Hoeffding.
\newblock Probability inequalities for sums of bounded random variables.
\newblock In {\em The collected works of Wassily Hoeffding}, pages 409--426.
  Springer, 1994.

\bibitem{Schuld2019}
Maria Schuld and Nathan Killoran.
\newblock {Quantum Machine Learning in Feature Hilbert Spaces}.
\newblock {\em Physical Review Letters}, 122(4):040504, feb 2019.

\bibitem{Farhi2018}
Edward Farhi and Hartmut Neven.
\newblock {Classification with Quantum Neural Networks on Near Term
  Processors}.
\newblock {\em arXiv:1802.06002}, pages 1--21, feb 2018.

\bibitem{Mitarai2018}
K.~Mitarai, M.~Negoro, M.~Kitagawa, and K.~Fujii.
\newblock {Quantum circuit learning}.
\newblock {\em Physical Review A}, 98(3):032309, sep 2018.

\bibitem{li2021vsql}
Guangxi Li, Zhixin Song, and Xin Wang.
\newblock Vsql: variational shadow quantum learning for classification.
\newblock In {\em Proceedings of the AAAI Conference on Artificial
  Intelligence}, volume~35, pages 8357--8365, 2021.

\bibitem{li2022recent}
Weikang Li and Dong-Ling Deng.
\newblock Recent advances for quantum classifiers.
\newblock {\em Science China Physics, Mechanics \& Astronomy}, 65(2):1--23,
  2022.

\bibitem{goodfellow2016deep}
Ian Goodfellow, Yoshua Bengio, and Aaron Courville.
\newblock {\em Deep learning}.
\newblock MIT press, 2016.

\bibitem{McClean2018}
Jarrod~R. McClean, Sergio Boixo, Vadim~N. Smelyanskiy, Ryan Babbush, and
  Hartmut Neven.
\newblock {Barren plateaus in quantum neural network training landscapes}.
\newblock {\em Nature Communications}, 9(1):4812, dec 2018.

\bibitem{Dankert2009}
Christoph Dankert, Richard Cleve, Joseph Emerson, and Etera Livine.
\newblock {Exact and approximate unitary 2-designs and their application to
  fidelity estimation}.
\newblock {\em Physical Review A}, 80(1):012304, jul 2009.

\bibitem{Bae2015}
Joonwoo Bae and Leong~Chuan Kwek.
\newblock {Quantum state discrimination and its applications}.
\newblock {\em Journal of Physics A: Mathematical and Theoretical},
  48(8):083001, feb 2015.

\bibitem{canetti2001universally}
Ran Canetti.
\newblock Universally composable security: A new paradigm for cryptographic
  protocols.
\newblock In {\em Proceedings 42nd IEEE Symposium on Foundations of Computer
  Science}, pages 136--145. IEEE, 2001.

\bibitem{takagi2021fundamental}
Ryuji Takagi, Suguru Endo, Shintaro Minagawa, and Mile Gu.
\newblock Fundamental limits of quantum error mitigation.
\newblock {\em arXiv preprint arXiv:2109.04457}, 2021.

\bibitem{divincenzo2002quantum}
DP~DiVincenzo, DW~Leung, and BM~Terhal.
\newblock Quantum data hiding.
\newblock {\em IEEE Transactions on Information Theory}, 48(3):580--598, 2002.

\bibitem{helstrom1969quantum}
Carl~W Helstrom.
\newblock Quantum detection and estimation theory.
\newblock {\em Journal of Statistical Physics}, 1(2):231--252, 1969.

\bibitem{Bae2013}
Joonwoo Bae and Won-Young Hwang.
\newblock Minimum-error discrimination of qubit states: Methods, solutions, and
  properties.
\newblock {\em Phys. Rev. A}, 87:012334, Jan 2013.

\bibitem{chen2021universal}
Hongxiang Chen, Leonard Wossnig, Simone Severini, Hartmut Neven, and Masoud
  Mohseni.
\newblock Universal discriminative quantum neural networks.
\newblock {\em Quantum Machine Intelligence}, 3(1):1--11, 2021.

\bibitem{patterson2021quantum}
Andrew Patterson, Hongxiang Chen, Leonard Wossnig, Simone Severini, Dan Browne,
  and Ivan Rungger.
\newblock Quantum state discrimination using noisy quantum neural networks.
\newblock {\em Physical Review Research}, 3(1):013063, 2021.

\bibitem{Ma2019p}
Yanjun Ma, Dianhai Yu, Tian Wu, and Haifeng Wang.
\newblock {PaddlePaddle: An Open-Source Deep Learning Platform from Industrial
  Practice}.
\newblock {\em Frontiers of Data and Domputing}, 1(1):105--115, 2019.

\bibitem{Kingma2014}
Diederik~P. Kingma and Jimmy~Lei Ba.
\newblock {Adam: A method for stochastic optimization}.
\newblock {\em 3rd International Conference on Learning Representations, ICLR
  2015 - Conference Track Proceedings}, dec 2015.

\bibitem{lecun1998gradient}
Yann LeCun, L{\'e}on Bottou, Yoshua Bengio, and Patrick Haffner.
\newblock Gradient-based learning applied to document recognition.
\newblock {\em Proceedings of the IEEE}, 86(11):2278--2324, 1998.

\bibitem{Jerbi2021}
Sofiene Jerbi, Lukas~J. Fiderer, Hendrik~Poulsen Nautrup, Jonas~M.
  K{\"{u}}bler, Hans~J. Briegel, and Vedran Dunjko.
\newblock {Quantum machine learning beyond kernel methods}.
\newblock {\em arXiv:2110.13162}, pages 1--13, 2021.

\bibitem{fuchs1999cryptographic}
Christopher~A Fuchs and Jeroen Van De~Graaf.
\newblock Cryptographic distinguishability measures for quantum-mechanical
  states.
\newblock {\em IEEE Transactions on Information Theory}, 45(4):1216--1227,
  1999.

\end{thebibliography}

\appendix

\newpage

\onecolumn
\begin{center}
{\textbf{\Large Supplementary Material for\\ Concentration of Data Encoding in Parameterized Quantum Circuits}}
\end{center}

\renewcommand{\theequation}{S\arabic{equation}}
\renewcommand{\thethm}{S\arabic{thm}}
\renewcommand{\thefigure}{S\arabic{figure}}
\renewcommand{\thealgorithm}{S\arabic{algorithm}}
\setcounter{equation}{0}
\setcounter{figure}{0}
\setcounter{table}{0}
\setcounter{section}{0}
\setcounter{algorithm}{0}
\setcounter{thm}{0}

\section{Proof of Theorem \ref{thm:only_Ry}}
\label{sec:proof_only_Ry}
\renewcommand\thethm{1}
\setcounter{thm}{\arabic{thm}-1}
\begin{thm}
\label{thm:appendix:only_Ry}
Assume each element of an $nD$-dimensional vector $\bm x$ obeys an IGD, i.e., $x_{j,d}\sim \mathcal N(\mu_{j,d},\sigma^2_{j,d})$, where $\sigma_{j,d}\ge\sigma$ for some constant $\sigma$ and $1\le j\le n, 1\le d\le D$. If $\bm x$ is encoded into an $n$-qubit pure state $\rho(\bm x)$ according to the circuit in Fig.~\ref{fig:appendix:cir_ry_without_etg}, then the quantum divergence
between the average encoded state $\Bar{\rho}=\mathbb E\Lr{ \rho(\bm x)}$ and the maximally mixed state $\mathbbm{1}  = \frac{I}{2^n}$ is upper bounded as
\begin{align}
D_2\lr{\Bar{\rho}\|\mathbbm{1}} \equiv \log\Tr\lr{\Bar{\rho}^2\cdot \mathbbm{1}^{-1}} \le  n\log\lr{1+\mathrm{e}^{-D\sigma^2}}.
\end{align}
\end{thm}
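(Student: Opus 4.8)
The plan is to exploit two structural features of the warm-up circuit: the absence of entangling gates, which makes the encoded state a tensor product across qubits, and the closed-form Bloch-vector representation of a single-qubit $R_y$-rotated state. First I would rewrite the target quantity using $\mathbbm{1}^{-1} = 2^n I$, so that $D_2\lr{\Bar{\rho}\|\mathbbm{1}} = \log\lr{2^n \Tr(\Bar{\rho}^2)}$; the entire problem then reduces to upper-bounding the purity $\Tr(\Bar{\rho}^2)$ of the average encoded state.

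For a single qubit $j$, I would collapse the $D$ consecutive rotations into one using $R_y(x_1)R_y(x_2) = R_y(x_1+x_2)$, giving the effective angle $x_{j,\mathrm{sum}} = \sum_d x_{j,d}$, which is Gaussian with variance $\Sigma_j = \sum_d \sigma_{j,d}^2$. Writing the single-qubit state in the Bloch form $\rho(\bm x_j) = \tfrac{1}{2}\Lr{I + \cos(x_{j,\mathrm{sum}})\,Z + \sin(x_{j,\mathrm{sum}})\,X}$ and applying the Gaussian identities $\mathbb E[\cos x] = \mathrm e^{-\sigma^2/2}\cos\mu$ and $\mathbb E[\sin x] = \mathrm e^{-\sigma^2/2}\sin\mu$, the averaged Bloch vector shrinks isotropically by the factor $\mathrm e^{-\Sigma_j/2}$. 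Its squared length is therefore $\mathrm e^{-\Sigma_j}$, so the single-qubit purity is $\Tr(\Bar{\rho}_j^2) = \tfrac12(1 + \mathrm e^{-\Sigma_j})$, which the hypothesis $\sigma_{j,d}\ge\sigma$ bounds by $\tfrac12(1 + \mathrm e^{-D\sigma^2})$.

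To lift this to $n$ qubits, I would argue that $\Bar{\rho}$ factorizes as $\bigotimes_j \Bar{\rho}_j$. This uses the IGD assumption in an essential way: since the circuit applies $\rho(\bm x) = \bigotimes_j \rho(\bm x_j)$ with no cross-qubit gates, and since the angle sets $\LR{x_{j,d}}_d$ for distinct $j$ are mutually independent, the expectation of the tensor product equals the tensor product of the expectations. Multiplicativity of purity under tensor products then gives $\Tr(\Bar{\rho}^2) = \prod_j \Tr(\Bar{\rho}_j^2) \le 2^{-n}(1+\mathrm e^{-D\sigma^2})^n$, and substituting back yields $D_2\lr{\Bar{\rho}\|\mathbbm{1}} \le \log\lr{(1+\mathrm e^{-D\sigma^2})^n} = n\log(1+\mathrm e^{-D\sigma^2})$, as claimed.

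The step I expect to require the most care is the factorization $\mathbb E\bigl[\bigotimes_j \rho(\bm x_j)\bigr] = \bigotimes_j \mathbb E\Lr{\rho(\bm x_j)}$: it is not linearity of expectation alone but crucially relies on the independence across qubits granted by the IGD assumption, and it is precisely this feature that breaks down once entangling gates correlate the qubits, which is why the general case of Theorem \ref{thm:etg_general} demands a genuinely different argument. Everything else reduces to a routine single-qubit computation on the Bloch sphere.
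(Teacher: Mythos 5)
Your proposal is correct and follows essentially the same route as the paper's proof: collapse the $D$ rotations per qubit via $R_y(x_1)R_y(x_2)=R_y(x_1+x_2)$, apply the Gaussian identities $\mathbb{E}[\cos x]=\mathrm{e}^{-\sigma^2/2}\cos\mu$ and $\mathbb{E}[\sin x]=\mathrm{e}^{-\sigma^2/2}\sin\mu$, factorize $\Bar{\rho}$ across qubits by independence, and multiply the single-qubit purities. The only cosmetic difference is that you phrase the single-qubit computation in terms of the shrinking Bloch vector, whereas the paper computes the same quantity as the squared Frobenius norm of the explicit $2\times 2$ averaged density matrix.
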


\begin{figure}[H]
 \[\Qcircuit @C=1.5em @R=1em {
        \lstick{\ket{0}} & \gate{R_y(x_{1,1})}  &\gate{R_y(x_{1,2})}&\qw &\cdots & &\gate{R_y(x_{1,D})} &\qw \\
        \lstick{\ket{0}} & \gate{R_y(x_{2,1})} &\gate{R_y(x_{2,2})} &\qw&\cdots & &\gate{R_y(x_{2,D})} &\qw \\ & \cdots &\cdots&&& &\cdots& \\
        \lstick{\ket{0}} & \gate{R_y(x_{n,1})} &\gate{R_y(x_{n,2})}&\qw &\cdots & &\gate{R_y(x_{n,D})} &\qw
        }\]
        \caption{Circuit for the data encoding strategy with $R_y$ rotations only.}
        \label{fig:appendix:cir_ry_without_etg}
 \end{figure}
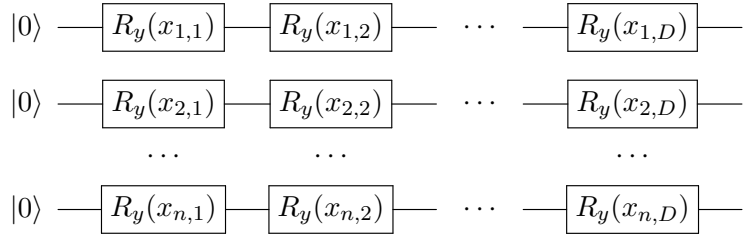

\begin{proof}
Let $\rho(\bm x_j) \equiv R_y(x_{j,1}+\cdots+x_{j,D})\op{0}{0}R_y^\dagger(x_{j,1}+\cdots+x_{j,D})$, then 
\begin{align}
\rho(\bm x)=\rho(\bm x_1) \otimes \rho(\bm x_2)\otimes\cdots\otimes \rho(\bm x_n).
\end{align}
Due to the independence of each $\rho(\bm x_j)$, we have 
\begin{align}\label{eq:rho_bar_tensor_product}
    \Bar{\rho}=\mathbb E\Lr{\rho(\bm x)}= \mathbb E\Lr{\rho(\bm x_1) } \otimes \mathbb E\Lr{\rho(\bm x_2) }\otimes\cdots\otimes \mathbb E\Lr{\rho(\bm x_n) }.
\end{align}
What's more, for $j=1,\ldots, n$,
\begin{align}
    \mathbb E\Lr{\rho(\bm x_j) } & = \frac{1}{2} \mathbb E
    \begin{bmatrix}
    1+\cos{(\sum_d x_{j,d})} & \sin{(\sum_d x_{j,d})} \\
    \sin{(\sum_d x_{j,d})} & 1- \cos{(\sum_d x_{j,d})}
    \end{bmatrix}  \\ &=  \frac{1}{2}
    \begin{bmatrix}
    1+\mathbb E\Lr{\cos{(\sum_d x_{j,d})} }&\mathbb E \Lr{\sin{(\sum_d x_{j,d})} }\\
    \mathbb E \Lr{\sin{(\sum_d x_{j,d})}} & 1- \mathbb E \Lr{\cos{(\sum_d x_{j,d})}}
    \end{bmatrix}. \label{eq:exp_rho_single_qubit}
\end{align}

\renewcommand\thethm{S1}
\setcounter{thm}{\arabic{thm}-1}
\begin{lem}\label{lem:appendix:exp_cos}
Assume a variable $x \sim \mathcal N(\mu,\sigma^2)$, then 
\begin{align}
    \mathbb E \Lr{\cos(x)} = \mathrm{e}^{-\frac{\sigma^2}{2}} \cos(\mu); \qquad
    \mathbb E \Lr{\sin(x)} = \mathrm{e}^{-\frac{\sigma^2}{2}} \sin(\mu).
\end{align}
\end{lem}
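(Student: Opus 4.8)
The plan is to establish both identities at once by computing the single complex expectation $\mathbb{E}\lr{\mathrm{e}^{ix}}$, which is precisely the characteristic function of the Gaussian evaluated at the point $1$, and then reading off the two statements from Euler's formula $\mathrm{e}^{ix}=\cos(x)+i\sin(x)$. Writing the expectation explicitly as a Gaussian integral and substituting $u=x-\mu$ to pull out the factor $\mathrm{e}^{i\mu}$, I would reduce the problem to evaluating the centered integral $\frac{1}{\sqrt{2\pi}\,\sigma}\int_{-\infty}^{\infty}\mathrm{e}^{iu}\,\mathrm{e}^{-u^2/(2\sigma^2)}\,du$.

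The central computation is to complete the square in the exponent, using the identity $iu-\frac{u^2}{2\sigma^2}=-\frac{1}{2\sigma^2}\lr{u-i\sigma^2}^2-\frac{\sigma^2}{2}$. This factors the integral as $\mathrm{e}^{-\sigma^2/2}$ times a Gaussian integral whose integration variable has been shifted into the complex plane. Concluding that this shifted Gaussian integral still equals $\sqrt{2\pi}\,\sigma$ then gives $\mathbb{E}\lr{\mathrm{e}^{ix}}=\mathrm{e}^{-\sigma^2/2}\,\mathrm{e}^{i\mu}=\mathrm{e}^{-\sigma^2/2}\lr{\cos(\mu)+i\sin(\mu)}$, after which equating real and imaginary parts yields both claimed formulas simultaneously.

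The one point requiring care is justifying that the contour shift $u\mapsto u-i\sigma^2$ leaves the Gaussian integral unchanged. Since $\mathrm{e}^{-z^2/(2\sigma^2)}$ is entire and decays rapidly as $\lvert\operatorname{Re}(z)\rvert\to\infty$ throughout the horizontal strip between $\mathbb{R}$ and $\mathbb{R}-i\sigma^2$, Cauchy's theorem forces the two horizontal contour integrals to agree while the side contributions vanish. To keep the argument elementary and sidestep complex analysis, I would alternatively compute the centered integral $I(t)=\frac{1}{\sqrt{2\pi}\,\sigma}\int_{-\infty}^{\infty}\cos(tu)\,\mathrm{e}^{-u^2/(2\sigma^2)}\,du$ by differentiating under the integral sign and integrating by parts: using $u\,\mathrm{e}^{-u^2/(2\sigma^2)}=-\sigma^2\frac{d}{du}\mathrm{e}^{-u^2/(2\sigma^2)}$ gives the ordinary differential equation $I'(t)=-t\sigma^2 I(t)$ with $I(0)=1$, whose solution $I(t)=\mathrm{e}^{-t^2\sigma^2/2}$ evaluated at $t=1$ delivers the required value $\mathrm{e}^{-\sigma^2/2}$. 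Either route supplies the single scalar constant that, combined with the angle-addition identities, produces both formulas.
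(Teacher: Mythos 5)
Your proposal is correct. Note that the paper itself supplies no proof of this lemma at all: it is stated in the appendix as a known fact and immediately applied, the content being exactly the Gaussian characteristic function $\mathbb{E}\lr{\mathrm{e}^{itx}}=\mathrm{e}^{it\mu-t^2\sigma^2/2}$ evaluated at $t=1$. Your derivation is the standard one for that fact — completing the square and shifting the contour, with Cauchy's theorem justifying the shift — and your elementary alternative (the ODE $I'(t)=-t\sigma^2 I(t)$ obtained by differentiating under the integral sign and integrating by parts, combined with the angle-addition formulas and the vanishing of $\mathbb{E}\Lr{\sin(u)}$ for centered $u$ by symmetry) is equally valid and avoids complex analysis entirely; either route cleanly fills in what the paper takes for granted.
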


We know $\sum_d x_{j,d} \sim  \mathcal N(\sum_d\mu_{j,d}, \sqrt{\sum_d\sigma^2_{j,d}})$, and combining Eq. \eqref{eq:exp_rho_single_qubit} with Lemma \ref{lem:appendix:exp_cos}, we have
\begin{align}
    \Big|\Big|\mathbb E\Lr{\rho(\bm x_j) }\Big|\Big|_F^2 &=
    \frac{1}{2}+\frac{1}{2}\mathbb {E}^2\Lr{\cos{(\sum_d x_{j,d})}}+\frac{1}{2}\mathbb {E}^2 \Lr{\sin{(\sum_d x_{j,d})} } \\
    &= \frac{1}{2}+\frac{1}{2} \lr{\mathrm{e}^{-\frac{\sum_d\sigma^2_{j,d}}{2}} \cos(\sum_d\mu_{j,d}) }^2+\frac{1}{2}  \lr{\mathrm{e}^{-\frac{\sum_d\sigma^2_{j,d}}{2}} \sin(\sum_d\mu_{j,d}) }^2 \\
    &= \frac{1}{2}+\frac{1}{2} \mathrm{e}^{-\sum_d\sigma^2_{j,d}} \\
    &\le \frac{1}{2}+\frac{1}{2} \mathrm{e}^{-D\sigma^2}.
\end{align}
Finally, from Eq. \eqref{eq:rho_bar_tensor_product}, we have
\begin{align}
    \log\Tr\lr{\Bar{\rho}^2\cdot \lr{\frac{I}{2^n}}^{-1}} &= \log \lr{2^n \| \Bar{\rho} \|_F^2} \\
    &= \log \lr{2^n \prod_{j=1}^n \Big|\Big|\mathbb E\Lr{\rho(\bm x_j) }\Big|\Big|_F^2} 
    \le  \log \lr{2^n \lr{\frac{1+\mathrm{e}^{-D\sigma^2}}{2}}^n } \\
    &= n\log(1+\mathrm{e}^{-D\sigma^2}).
\end{align}
This completes the proof.
\end{proof}

\section{Proof of Theorem \ref{thm:etg_general}}
\label{sec:proof_etg_general}
\renewcommand\thethm{2}
\setcounter{thm}{\arabic{thm}-1}
\begin{thm} 
\label{thm:appendix:etg_general}
\textbf{(Data Encoding Concentration)}
Assume each element of a $3nD$-dimensional vector $\bm x$ obeys an IGD, i.e., $x_{j,d,k}\sim \mathcal N(\mu_{j,d,k},\sigma^2_{j,d,k})$, where $\sigma_{j,d,k}\ge \sigma$ for some constant $\sigma$ and $1\le j\le n, 1\le d\le D, 1\le k\le 3$.
If $\bm x$ is encoded into an $n$-qubit pure state $\rho(\bm x)$ according to the circuit in Fig. \ref{fig:appendix:cir_u3_with_etg},  the quantum divergence between the average encoded state $\Bar{\rho}$ and the maximally mixed state $\mathbbm{1}$ is upper bounded as
\begin{align}
D_2\lr{\Bar{\rho}\|\mathbbm{1}} \equiv \log\Tr\lr{\Bar{\rho}^2\cdot \mathbbm{1}^{-1}} \le  \log\lr{1+(2^n-1)\mathrm{e}^{-D\sigma^2}}.
\end{align}
\end{thm}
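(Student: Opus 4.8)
The plan is to reduce the statement to a Frobenius-norm contraction estimate for the averaged encoding channel, and then to prove that estimate layer by layer. By definition $D_2(\bar\rho\|\mathbbm{1})=\log\Tr(\bar\rho^2\cdot\mathbbm{1}^{-1})=\log\lr{2^n\Tr(\bar\rho^2)}$, so it suffices to show $2^n\Tr(\bar\rho^2)\le 1+(2^n-1)\mathrm{e}^{-D\sigma^2}$. Write $\rho_0=\op{0}{0}^{\otimes n}$ and let $\mathcal N$ be the averaged channel $\mathcal N(\cdot)=\mathbb E[U(\bm x)(\cdot)U^\dagger(\bm x)]$, so that $\bar\rho=\mathcal N(\rho_0)$. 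Since each conjugation $U(\bm x)\,\cdot\,U^\dagger(\bm x)$ is unital and trace preserving, so is $\mathcal N$; hence $\mathcal N(\mathbbm{1})=\mathbbm{1}$ and $\mathcal N$ maps the traceless subspace into itself. Decomposing $\rho_0=\mathbbm{1}+(\rho_0-\mathbbm{1})$ with $\rho_0-\mathbbm{1}$ traceless, the Hilbert--Schmidt cross term vanishes and
\[
\Tr(\bar\rho^2)=\Tr(\mathbbm{1}^2)+\|\mathcal N(\rho_0-\mathbbm{1})\|_F^2=\frac{1}{2^n}+\|\mathcal N(\rho_0-\mathbbm{1})\|_F^2 .
\]
Because $\|\rho_0-\mathbbm{1}\|_F^2=1-2^{-n}=(2^n-1)/2^n$, the whole theorem follows once I show that the restriction of $\mathcal N$ to the traceless subspace has operator norm at most $\mathrm{e}^{-D\sigma^2/2}$, since squaring then yields the factor $\mathrm{e}^{-D\sigma^2}$.

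To get this contraction I would factor the circuit as an alternation of averaged $U3$ layers $\mathcal L_1,\dots,\mathcal L_D$ and entangling layers $\mathcal E_1,\dots,\mathcal E_{D-1}$, so that $\mathcal N=\mathcal L_D\circ\mathcal E_{D-1}\circ\cdots\circ\mathcal E_1\circ\mathcal L_1$; the averages over the independent Gaussian parameters factor across layers. Each entangling layer is conjugation by a fixed unitary, hence an isometry of the Hilbert--Schmidt space that preserves the traceless subspace, so it contributes a factor $1$ to the operator norm and, crucially, can be ignored for the contraction. Thus it remains only to bound the contraction of one averaged $U3$ layer on traceless operators, and I expect the clean bound $\mathrm{e}^{-\sigma^2/2}$ per layer, which multiplied over the $D$ layers gives $\mathrm{e}^{-D\sigma^2/2}$.

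The core estimate is single-qubit. For one $U3=R_z(\theta_3)R_y(\theta_2)R_z(\theta_1)$, each averaged rotation acts in the Pauli picture as a rotation in a coordinate plane, scaled by $\mathrm{e}^{-\sigma_{j,d,k}^2/2}\le\mathrm{e}^{-\sigma^2/2}$ on its two rotated axes and by $1$ on its fixed axis; this is exactly the Gaussian moment identities $\mathbb E[\cos x]=\mathrm{e}^{-\sigma^2/2}\cos\mu$ and $\mathbb E[\sin x]=\mathrm{e}^{-\sigma^2/2}\sin\mu$ of Lemma~\ref{lem:appendix:exp_cos}. The averaged $U3$ channel is therefore block diagonal: it fixes $I$ and acts on $\mathrm{span}\{X,Y,Z\}$ by a $3\times 3$ matrix $T$ equal to the product of these scaled rotations. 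Computing $\|Tv\|^2$ for a unit Bloch vector $v$ and tracking the action plane of each rotation shows $\|T\|_{\rm op}\le\mathrm{e}^{-\sigma^2/2}$. Lifting to $n$ qubits, the averaged $U3$ layer is the tensor product $\bigotimes_j(1\oplus T_j)$; in the Pauli-string basis it is block diagonal across the support patterns, and on the traceless subspace its operator norm is $\max_j\|T_j\|_{\rm op}\le\mathrm{e}^{-\sigma^2/2}$, because the largest product of at-most-unital factors over a nonempty support is achieved by a singleton.

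The main obstacle I anticipate is the single-qubit bound $\|T\|_{\rm op}\le\mathrm{e}^{-\sigma^2/2}$: since each of the three averaged rotations fixes one Bloch axis, bounding the product of their norms naively only gives operator norm $1$, and one must argue geometrically that rotations about two distinct axes leave no common invariant direction, so that \emph{every} unit vector is damped by at least one factor $\mathrm{e}^{-\sigma^2/2}$ (concretely, the worst direction is the axis fixed by the outer $R_z$'s, which is still contracted by the middle $R_y$). Once this estimate and the tensor-product/entangling bookkeeping are in place, composing the per-layer bound over $D$ layers gives $\|\mathcal N(\rho_0-\mathbbm{1})\|_F^2\le\mathrm{e}^{-D\sigma^2}(2^n-1)/2^n$, whence $2^n\Tr(\bar\rho^2)\le 1+(2^n-1)\mathrm{e}^{-D\sigma^2}$ and the theorem follows. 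I note that this route handles the ``tricky'' entangling gates for free, since they are norm-preserving and all of the decay is produced by the $U3$ layers.
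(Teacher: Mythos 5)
Your proposal is correct, and at its core it is the paper's own proof translated into channel language: the paper likewise works in the Pauli (Hilbert--Schmidt) picture, splits off the identity component (its coefficient vector $\mathring\pi_0$ with $\|\mathring\pi_0\|_2^2=(2^n-1)/2^{2n}$ is exactly your $\|\rho_0-\mathbbm{1}\|_F^2=(2^n-1)/2^n$ after the normalization $\Tr[P^2]=2^n$), treats the entangling layers as norm-preserving, and shows each averaged $U3$ layer contracts the traceless block by $\mathrm{e}^{-\sigma^2/2}$ in operator norm, composing over the $D$ layers. You genuinely diverge in the two supporting steps, in ways worth comparing. For the entangling layers, the paper verifies norm preservation only for CNOT and CZ via an explicit transition table of Pauli strings and then asserts other two-qubit entanglers "have a similar effect"; your observation that conjugation by \emph{any} fixed unitary is a Hilbert--Schmidt isometry preserving the traceless subspace disposes of all entangling layers at once and is strictly more general. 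For the single-qubit contraction, the paper factors the $3\times 3$ transition block into exact rotations sandwiching the diagonal scaling $\mathrm{diag}(A_2,\,A_2A_3,\,A_3)$ and applies two singular-value lemmas, whereas you argue geometrically that the only direction spared by the outer averaged $R_z$'s (the $Z$ axis) is damped by the middle averaged $R_y$. That claim is true and yields the same per-layer factor $\max\{A_2,A_3\}\le\mathrm{e}^{-\sigma^2/2}$, but be aware it is not a product-of-norms argument: the rotations mix components, so you must actually decompose a unit Bloch vector into its $Z$ and $XY$ parts and track norms through the three maps (doing so is, in effect, rediscovering the paper's factorization). What each route buys: the paper's explicit transition matrices make its appendix generalizations immediate (e.g., two-rotation layers, or the $R_y$-only circuit where contraction appears only every two layers, giving $\lfloor D/2\rfloor$ in the exponent), while your route is shorter, coordinate-free, and needs no case analysis on the entanglers.
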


\begin{figure}[t]
    \centering
  \[\Qcircuit @C=0.5em @R=1em {
        \lstick{\ket{0}} & \gate{U3(\bm x_{1,1})} &\multigate{3}{Etg_1} &\gate{U3(\bm x_{1,2})} &\multigate{3}{Etg_2} &\qw &&\cdots& & &\gate{U3(\bm x_{1,D-1})} &\multigate{3}{Etg_{D-1}} &\gate{U3(\bm x_{1,D})} &\qw \\
        \lstick{\ket{0}} & \gate{U3(\bm x_{2,1})} &\ghost{Etg_1} &\gate{U3(\bm x_{2,2})} &\ghost{Etg_2}  &\qw&&\cdots& &&\gate{U3(\bm x_{2,D-1})}&  \ghost{Etg_{D-1}} &\gate{U3(\bm x_{2,D})} &\qw \\ & \cdots  & &\cdots &&&& &&& \cdots && \cdots & \\
        \lstick{\ket{0}} & \gate{U3(\bm x_{n,1})}&\ghost{Etg_1} &\gate{U3(\bm x_{n,2})}&\ghost{Etg_2} &\qw &&\cdots& &&\gate{U3(\bm x_{n,D-1})} &\ghost{Etg_{D-1}} &\gate{U3(\bm x_{n,D})} &\qw
        }\]
    \caption{Circuit for the data encoding strategy with $D$ layers of $U3$ gates and $D-1$ layers of entanglements. Here, each $\bm x_{j,d}$ represents three elements $x_{j,d,1}, x_{j,d,2}, x_{j,d,3}$, and each $Etg_i$ denotes an arbitrary group of entangled two-qubit gates, such as CNOT or CZ, where  $1\le j\le n, 1\le d\le D,  1\le i\le D-1$.}
    \label{fig:appendix:cir_u3_with_etg}
\end{figure}
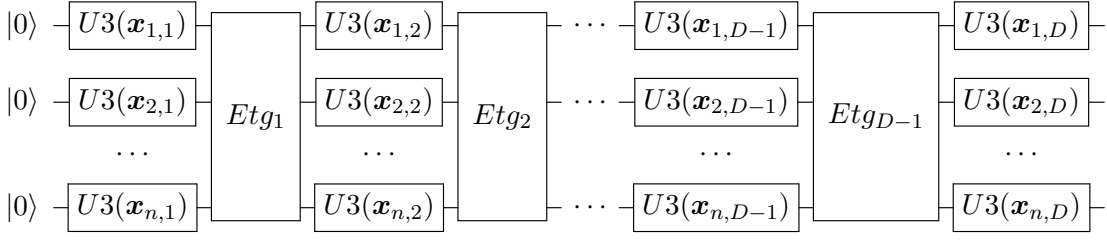

\begin{proof}
In this proof, we consider the $U3(x_{j,d,1}, x_{j,d,2}, x_{j,d,3})$ gate as $R_z(x_{j,d,3})\cdot R_y(x_{j,d,2}) \cdot R_z(x_{j,d,1})$, which is one of the most commonly used ones. Of course, other forms of U3 gate are similar.

\paragraph{Outline of Proof.}
\textbf{1) Decomposing initial state.}
Firstly, we decompose the initial state according to Pauli bases; \textbf{2) Vectors transition.} Then by taking the corresponding coefficients as a row vector, we state that each action of a group of entangled gates $Etg_i$ or a column of $U3$ gates is equivalent to multiplying the previous coefficient vector by a transition matrix; \textbf{3) Bound by singular value.} Finally, we get the upper bound by investigating the singular values of these transition matrices.

\textbf{1) Decomposing initial state.}
The state after the first column of $U3$ gates becomes
\begin{align}\label{eq:rho_1_tensor_product}
\setlength\arraycolsep{1pt}
	\rho_1= \frac{1}{2} \begin{bmatrix}
1+  \cos(x_{1,1,2}) & \mathrm{e}^{-ix_{1,1,3}} \sin(x_{1,1,2}) \\
 \mathrm{e}^{ix_{1,1,3}}\sin(x_{1,1,2}) & 1-  \cos(x_{1,1,2})
\end{bmatrix}
	& \otimes \setlength\arraycolsep{1pt}
	\frac{1}{2}
	\begin{bmatrix}
 1+  \cos(x_{2,1,2}) &  \mathrm{e}^{-ix_{2,1,3}} \sin(x_{2,1,2}) \\
\mathrm{e}^{ix_{2,1,3}}  \sin(x_{2,1,2}) & 1-  \cos(x_{2,1,2})
\end{bmatrix} \nonumber \\ 
   \otimes \cdots \cdots   \otimes 
   \frac{1}{2} &    \setlength\arraycolsep{1pt}
	\begin{bmatrix}
 1+  \cos(x_{n,1,2}) &  \mathrm{e}^{-ix_{n,1,3}} \sin(x_{n,1,2}) \\
\mathrm{e}^{ix_{n,1,3}}  \sin(x_{n,1,2}) & 1-  \cos(x_{n,1,2})
\end{bmatrix}.
\end{align}
Now we define $\rho_1 \equiv\rho_{1,1}\otimes\rho_{1,2}\otimes\cdots \otimes\rho_{1,n}$, where
\begin{align}
	\rho_{1,j} \equiv \frac{1}{2}
	\begin{bmatrix}
 1+  \cos(x_{j,1,2}) &  \mathrm{e}^{-ix_{j,1,3}} \sin(x_{j,1,2}) \\
\mathrm{e}^{ix_{j,1,3}}  \sin(x_{j,1,2}) & 1-  \cos(x_{j,1,2})
\end{bmatrix}.
\end{align}
And from Lemma \ref{lem:appendix:exp_cos}, we have
\begin{align}
	\mathbb E\Lr{\rho_{1,j}} = \frac{1}{2}
	\begin{bmatrix}
 1+  A_{j,1,2}\cos(\mu_{j,1,2}) & A_{j,1,3} \mathrm{e}^{-i\mu_{j,1,3}} A_{j,1,2}\sin(\mu_{j,1,2}) \\
A_{j,1,3}\mathrm{e}^{i\mu_{j,1,3}} A_{j,1,2} \sin(\mu_{j,1,2}) & 1-  A_{j,1,2}\cos(\mu_{j,1,2})
\end{bmatrix},
\end{align}
where we define $A_{j,d,k}=\mathrm{e}^{-\frac{\sigma_{j,d,k}^2}{2}}$ for writing convenience. 
Here we note that due to all $x_{j,d,k}$'s being independent of each other, calculating the expectation of $\rho_{1,j}$ in advance does not affect the following computations.
Next we decompose $\mathbb E\Lr{\rho_{1,j}}$ according to the Pauli bases, i.e., $I,Z,X,Y$,
\begin{align}
	\mathbb E\Lr{\rho_{1,j}} = \frac{1}{2} I + \frac{A_{j,1,2}\cos(\mu_{j,1,2})}{2} Z &+\frac{A_{j,1,3}\cos(\mu_{j,1,3}) A_{j,1,2} \sin(\mu_{j,1,2})}{2} X \nonumber\\
	&+ \frac{A_{j,1,3}\sin(\mu_{j,1,3}) A_{j,1,2} \sin(\mu_{j,1,2})}{2} Y.
\end{align}
Then from Eq. \eqref{eq:rho_1_tensor_product}, we could derive that $\mathbb E\Lr{\rho_{1}}=\bigotimes_{j=1}^n 	\mathbb E\Lr{\rho_{1,j}}$ can also be decomposed in accordance with various tensor products of Pauli bases.
Therefore, studying the state after the gate $Etg_1$ could be transferred to what performance it will be when entangled two-qubit gates act on the tensor products of Pauli bases.

\textbf{2) Vectors transition.}
Here, we focus on the two widely employed two-qubit entangled gates CNOT and CZ, and the calculations are concluded in Table \ref{table:transition_for_IZXY}. 
The results in the table show that the transitions are closed for tensor products of Pauli bases.
Here we note that other entangled two-qubit gates will have a similar effect.

\begin{table}[t]
\centering
\caption{The transition table for tensor products of Pauli bases when applying CNOT or CZ gates.}
\begin{tabular}{ccc}
\toprule
Pauli bases & Apply CNOT & Apply CZ \\ \midrule 
$I \otimes I$ &   $I \otimes I$ & $I \otimes I$     \\
$I \otimes Z$ &  $Z \otimes Z$&  $I \otimes Z$ \\
$I \otimes X$ &  $I \otimes X$&  $Z \otimes X$ \\
$I \otimes Y$ &  $Z \otimes Y$&  $Z \otimes Y$ \\\midrule
$Z \otimes I$ &  $Z \otimes I$&  $Z \otimes I$ \\
$Z \otimes Z$ &  $I \otimes Z$&  $Z \otimes Z$ \\
$Z \otimes X$ &  $Z \otimes X$&  $I \otimes X$ \\
$Z \otimes Y$ &  $I \otimes Y$&  $I \otimes Y$ \\\midrule
$X \otimes I$ &  $X \otimes X$&  $X \otimes Z$ \\
$X \otimes Z$ & -$Y \otimes Y$&  $X \otimes I$ \\
$X \otimes X$ &  $X \otimes I$&  $Y \otimes Y$ \\
$X \otimes Y$ &  $Y \otimes Z$&  -$Y \otimes X$ \\\midrule
$Y \otimes I$ &  $Y \otimes X$&  $Y \otimes Z$ \\
$Y \otimes Z$ &  $X \otimes Y$&  $Y \otimes I$ \\
$Y \otimes X$ &  $Y \otimes I$&  -$X \otimes Y$ \\
$Y \otimes Y$ & -$X \otimes Z$&  $X \otimes X$ \\ \bottomrule
\end{tabular}\label{table:transition_for_IZXY}
\end{table}

Next, we consider the effects of applying the gate $U3(x_1,x_2,x_3)=R_z(x_{3})R_y(x_{2})R_z(x_{1})$ to four Pauli matrices. And the results of the calculations are as follows:
\begin{align}\label{eq:transM_first_line}
\mathbb E\Lr{U3\cdot I\cdot U3^\dagger}&=I \\
\mathbb E\Lr{U3\cdot Z\cdot U3^\dagger}&=p_{zz} Z+ p_{zx}X+ p_{zy}Y \\
\mathbb E\Lr{U3\cdot X\cdot U3^\dagger}&=p_{xz} Z+ p_{xx}X+ p_{xy}Y \\
\mathbb E\Lr{U3\cdot Y\cdot U3^\dagger}&=p_{yz} Z+ p_{yx}X+ p_{yy}Y, \label{eq:transM_last_line}
\end{align}
where 
\begin{align}
	p_{zz} & =A_2\cos(\mu_2) \\
	p_{zx} & =A_2\sin(\mu_2) A_3\cos(\mu_3) \\
	p_{zy} & =A_2\sin(\mu_2) A_3\sin(\mu_3) \\
	p_{xz} & =-A_2\sin(\mu_2)A_1\cos(\mu_1) \\
	p_{xx} & =A_2\cos(\mu_2) A_1\cos(\mu_1) A_3\cos(\mu_3)-A_1\sin(\mu_1) A_3\sin(\mu_3) \\
	p_{xy} & =A_2\cos(\mu_2) A_1\cos(\mu_1) A_3\sin(\mu_3)+ A_1\sin(\mu_1) A_3\cos(\mu_3) \\
	p_{yz} & =A_2\sin(\mu_2)A_1\sin(\mu_1)  \\
	p_{yx} & = - A_2\cos(\mu_2) A_1\sin(\mu_1) A_3\cos(\mu_3)-A_1\cos(\mu_1) A_3\sin(\mu_3)  \\
	p_{yy} & = - A_2\cos(\mu_2) A_1\sin(\mu_1) A_3\sin(\mu_3)+ A_1\cos(\mu_1) A_3\cos(\mu_3).
\end{align}
Here, $x_k,A_k,\mu_k$ are the abbreviations for $x_{j,d,k},A_{j,d,k},\mu_{j,d,k}$, respectively.

Now we record Eqs. \eqref{eq:transM_first_line}-\eqref{eq:transM_last_line} as a matrix $T$, which we call \textit{transition matrix}, i.e.,
\begin{align} \label{eq:transition_matrix}
	T \equiv\begin{bmatrix}
		1 &0 &0 &0 \\
		0 & p_{zz} &p_{zx} &p_{zy} \\
		0 & p_{xz}&p_{xx} & p_{xy}\\
		0 &p_{yz} & p_{yx}&p_{yy}
	\end{bmatrix}.
\end{align}
By carefully calculating Eqs. \eqref{eq:transM_first_line}-\eqref{eq:transition_matrix} again, we also have
\begin{align}\label{eq:transition_zyz}
 \setlength\arraycolsep{2pt}
\tiny
	T
\tiny = 
	\begin{bmatrix}
		 1 & 0 & 0 &0 \\
		 0 &1 & 0 & 0  \\
		 0 &0& A_1\cos(\mu_1) & A_1\sin(\mu_1) \\
		0 &0 & -A_1\sin(\mu_1) & A_1\cos(\mu_1)
	\end{bmatrix} 
	\begin{bmatrix}
	1 & 0 & 0 &0 \\
		  0 &A_2\cos(\mu_2) & A_2\sin(\mu_2) &0\\
		0 &-A_2\sin(\mu_2) & A_2\cos(\mu_2)&0 \\
		0 &0&0&1
	\end{bmatrix} 
	\begin{bmatrix}
	1 & 0 & 0 &0 \\
		 0 &1 & 0 & 0  \\
		 0 &0& A_3\cos(\mu_3) & A_3\sin(\mu_3) \\
		0 &0 & -A_3\sin(\mu_3) & A_3\cos(\mu_3)
	\end{bmatrix},
\end{align}
where the three matrices correspond to the effects of applying $R_z(x_1)$, $R_y(x_2)$ and $R_z(x_3)$, respectively.

If we further record an arbitrary input $\rho_{in} \equiv \alpha_1 I +\alpha_2 Z +\alpha_3 X + \alpha_4 Y$ as a row vector $\pi_{in}=\begin{bmatrix}
	\alpha_1 & \alpha_2 & \alpha_3 & \alpha_4
\end{bmatrix}$, then applying the gate $U3(x_1,x_2,x_3)$ to $\rho_{in}$ will result in the output $\rho_{out} \equiv\beta_1 I +\beta_2 Z +\beta_3 X + \beta_4 Y$, where
\begin{align}\label{eq:transition_relation}
	\pi_{out}  \equiv\begin{bmatrix}
	\beta_1 & \beta_2 & \beta_3 & \beta_4 
\end{bmatrix}=\begin{bmatrix}
	\alpha_1 & \alpha_2 & \alpha_3 & \alpha_4
\end{bmatrix} \begin{bmatrix}
		1 &0 &0 &0 \\
		0 & p_{zz} &p_{zx} &p_{zy} \\
		0 & p_{xz}&p_{xx} & p_{xy}\\
		0 &p_{yz} & p_{yx}&p_{yy}
	\end{bmatrix} =  \pi_{in} T.
\end{align}

This is a fundamental relationship in this proof, which can be easily verified in multi-qubit and multi-depth cases.
Hence, we could rewrite each $\mathbb E\Lr{\rho_d}$, $0\le d\le D$, as follows
\begin{align} \label{eq:def_pi0}
	\mathbb E\Lr{\rho_0} \qquad & \longleftrightarrow \qquad  \pi_0 = \otimes_{j=1}^n \begin{bmatrix}
	\frac{1}{2} & \frac{1}{2} & 0 & 0
\end{bmatrix} \\
	\mathbb E\Lr{\rho_1}  \qquad & \longleftrightarrow  \qquad  \pi_1 = \pi_0 \cdot \otimes_{j=1}^n T_{j,1}\cdot \widetilde{Etg_1} \\
	& \cdots \nonumber\\ 
	\mathbb E\Lr{\rho_d}  \qquad & \longleftrightarrow \qquad   \pi_d = \pi_{d-1} \cdot \otimes_{j=1}^n T_{j,d}\cdot \widetilde{Etg_d} \\
		& \cdots \nonumber \\
	\mathbb E\Lr{\rho_{D-1}} \qquad  & \longleftrightarrow   \qquad \pi_{D-1} = \pi_{D-2} \cdot \otimes_{j=1}^n T_{j,D-1}\cdot \widetilde{Etg_{D-1}} \\
	\mathbb E\Lr{\rho_D}  \qquad & \longleftrightarrow  \qquad  \pi_D = \pi_{D-1} \cdot \otimes_{j=1}^n T_{j,D}, \label{eq:E_rhoD_with_piD}
\end{align}
where each $T_{j,d}$ represents that this transition matrix is constructed based on the gate $U3(x_{j,d,1}$, $ x_{j,d,2}$, $x_{j,d,3})$ and each $\widetilde{Etg_i}$ means rearranging the elements of the previously multiplied row vector, which is equivalent to the effect after applying $Etg_i$, $1\le i\le D-1$. 
Here, please note that we omit the possible negative sign described in Table \ref{table:transition_for_IZXY}, because in the following proof, it has no influence.

From the fact that $\Tr\lr{P_i^2}=2$, $\Tr\lr{P_iP_j}=0$, where $P_i, P_j$ denote different Pauli matrices, and combining the relationship in Eq. \eqref{eq:E_rhoD_with_piD}, we have
\begin{align}
		\Tr\lr{\mathbb E\Lr{\rho_D}}^2 = 2^n \cdot \pi_D \lr{\pi_D}^{\top}.
\end{align}
What's more, we also find that every $\otimes_{j=1}^n T_{j,d}$ and $\widetilde{Etg_i}$ always have an element 1 in the top left corner, i.e.,
\begin{align}\label{eq:def_T_d}
	\otimes_{j=1}^n T_{j,d}  \equiv \begin{bmatrix}
		1 & \\
		& \mathcal T_d
	\end{bmatrix}, \qquad \widetilde{Etg_i} \equiv  \begin{bmatrix}
		1 & \\
		& \mathcal E_i
	\end{bmatrix},
\end{align}
where $\mathcal T_d, \mathcal E_i\in \mathbb R^{(4^n-1)\times (4^n-1)}$ and $1\le d\le D, 1\le i\le D-1$. Therefore, 
\begin{align}
	\Tr\lr{\mathbb E\Lr{\rho_D}}^2 & =2^n \cdot \pi_D \lr{\pi_D}^{\top} \\
	&= 2^n \cdot \pi_0  \begin{bmatrix}
		1 & \\
		& \mathcal T_1\mathcal E_1\mathcal T_2\mathcal E_2\cdots \mathcal T_D
	\end{bmatrix} \begin{bmatrix}
		1 & \\
		& \mathcal T_D^\top \cdots \mathcal E_2^{\top}\mathcal T_2^\top \mathcal E_1^{\top} \mathcal T_1^{\top}
	\end{bmatrix} \lr{\pi_0}^{\top} \\
	&=  2^n \cdot \begin{bmatrix}
		\frac{1}{2^n} & \mathring \pi_0
	\end{bmatrix} 
	\begin{bmatrix}
		1 & \\
		& \mathcal T_1\mathcal E_1\mathcal T_2\mathcal E_2\cdots \mathcal T_D  \mathcal T_D^\top \cdots \mathcal E_2^{\top}\mathcal T_2^\top \mathcal E_1^{\top} \mathcal T_1^{\top}
	\end{bmatrix} 
	\begin{bmatrix}
		\frac{1}{2^n} \\
		\mathring \pi_0^\top
	\end{bmatrix} \\
	&= \frac{1}{2^n} + 2^n\cdot  \mathring \pi_0 \mathcal T_1\mathcal E_1\cdots  \mathcal T_{D-1}\mathcal E_{D-1} \mathcal T_D  \mathcal T_D^\top \mathcal E_{D-1}^{\top}\mathcal T_{D-1}^\top \cdots \mathcal E_1^{\top} \mathcal T_1^{\top}\mathring \pi_0^\top, \label{eq:pi0_T_E_T_E_T_pi0}
\end{align}
where $\mathring \pi_0$ means that the row vector $\pi_0$ removes the first element.

\textbf{3) Bound by singular value.}
Next, in order to further calculate it, we need to prove first the following two Lemmas.

\renewcommand\thethm{S2}
\setcounter{thm}{\arabic{thm}-1}
\begin{lem}\label{lem:xHx}
Given a Hermitian matrix $H\in \mathbb C^{n\times n}$ with all its eigenvalues no larger than $\lambda$, and an $n$-dimensional vector $\bm x$, then 
\begin{align}
    \bm x^\dagger H \bm x \le \|\bm x\|_2^2 \lambda,
\end{align}
where $\|\cdot \|_2$ denotes the $l_2$-norm.
\end{lem}
\begin{proof}
Assume $H$ has the spectral decomposition 
\begin{align}
    H=\sum_{i=1}^n \lambda_i\bm u_i\bm u_i^\dagger,
\end{align}
then $\bm x$ can be uniquely decomposed as $\bm x=\sum_{i=1}^n\alpha_i\bm u_i$ with $\sum_{i=1}^n|\alpha_i|^2= \|\bm x\|_2^2$.
Finally we have
\begin{align}
    \bm x^\dagger H \bm x = \sum_{i=1}^n |\alpha_i|^2 \lambda_i \le \sum_i^n |\alpha_i|^2 \lambda =  \|\bm x\|_2^2 \lambda.
\end{align}
\end{proof}

\renewcommand\thethm{S3}
\setcounter{thm}{\arabic{thm}-1}
\begin{lem}\label{lem:xQHQx}
Given a Hermitian matrix $H\in \mathbb C^{n\times n}$ with all its eigenvalues no larger than $\lambda$, and an arbitrary matrix $Q\in \mathbb C^{n\times n}$ with all its singular values no larger than $s$, then the largest eigenvalue of $Q H Q^\dagger$ is no larger than $s^2 \lambda$.
\end{lem}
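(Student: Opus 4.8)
The plan is to exploit the variational (Rayleigh quotient) characterization of the largest eigenvalue together with the two facts already at hand, namely Lemma \ref{lem:xHx} and the definition of the operator norm as the largest singular value. First I would observe that since $H$ is Hermitian, so is $QHQ^\dagger$; hence its largest eigenvalue admits the characterization
\begin{align}
    \lambda_{\max}\lr{QHQ^\dagger} = \max_{\|\bm x\|_2 = 1} \bm x^\dagger QHQ^\dagger \bm x.
\end{align}
This reduces the spectral statement to bounding a single quadratic form uniformly over unit vectors $\bm x$.

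Next I would perform the change of variables $\bm y \equiv Q^\dagger \bm x$, so that $\bm x^\dagger QHQ^\dagger \bm x = \bm y^\dagger H \bm y$. Applying Lemma \ref{lem:xHx} to the Hermitian matrix $H$ (whose eigenvalues are all at most $\lambda$) yields $\bm y^\dagger H \bm y \le \|\bm y\|_2^2\, \lambda$. It then remains to control $\|\bm y\|_2 = \|Q^\dagger \bm x\|_2$. Since the largest singular value of $Q^\dagger$ coincides with that of $Q$, and is therefore at most $s$, the operator-norm inequality gives $\|Q^\dagger \bm x\|_2 \le s\|\bm x\|_2 = s$ for a unit vector $\bm x$. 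Chaining these bounds produces $\bm x^\dagger QHQ^\dagger \bm x \le s^2 \lambda$ for every unit $\bm x$, and taking the maximum completes the argument.

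I do not expect a serious obstacle here, as the result is essentially a routine consequence of submultiplicativity of the operator norm combined with Lemma \ref{lem:xHx}. The only point requiring a modicum of care is the passage to the quadratic form: one must justify that the largest eigenvalue of the Hermitian matrix $QHQ^\dagger$ is indeed captured by the supremum of its Rayleigh quotient, and that the singular values of $Q^\dagger$ equal those of $Q$ so that the factor $s$ (rather than some other bound) applies. Both are standard, so the estimate $\lambda_{\max}(QHQ^\dagger) \le s^2\lambda$ follows cleanly.
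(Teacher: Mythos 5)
Your proof is correct and follows essentially the same route as the paper's: both reduce the largest eigenvalue of the Hermitian matrix $QHQ^\dagger$ to its Rayleigh quotient over unit vectors, transfer the quadratic form onto $H$, bound the $l_2$-norm of the transformed vector by $s$, and invoke Lemma \ref{lem:xHx}. The only cosmetic difference is that you bound $\|Q^\dagger \bm x\|_2 \le s\|\bm x\|_2$ directly via the operator-norm meaning of the largest singular value, whereas the paper obtains the identical bound by explicitly expanding $Q=USV^\dagger$ and applying Lemma \ref{lem:xHx} to $V^\dagger H V$ with the vector $SU^\dagger\bm x$; these coincide because $Q^\dagger\bm x = V\lr{SU^\dagger\bm x}$ and $V$ is unitary.
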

\begin{proof}
The largest eigenvalue of $Q H Q^\dagger$ can be computed as $\lambda_{max} \equiv\max_{\bm x}\bm x^\dagger Q H Q^\dagger\bm x$, where $\bm x$ denotes a unit vector. Assume $Q$ has the singular value decomposition 
\begin{align}
    Q=USV^\dagger =\sum_{i=1}^n s_i\bm u_i\bm v_i^\dagger=\begin{bmatrix}
    \bm u_1 & \bm u_2 &\cdots& \bm u_n
    \end{bmatrix}
    \begin{bmatrix}
    s_1 & 0 &0 &0\\
    0 & s_2 &0 &0\\
   0 & 0 &\ddots &0\\
    0 & 0 &0 &s_n
    \end{bmatrix}
    \begin{bmatrix}
    \bm v_1^\dagger \\ \bm v_2^\dagger\\  \vdots \\ \bm v_n^\dagger
    \end{bmatrix}
\end{align}
 and $\bm x=\sum_{i=1}^n\alpha_i\bm u_i$ with $\sum_{i=1}^n|\alpha_i|^2= 1$, then 
\begin{align}
    \bm x^\dagger QH Q^\dagger\bm x =\bm x^\dagger USV^\dagger H VSU^\dagger\bm x =
    \begin{bmatrix}
    \alpha_1^\dagger s_1 & \alpha_2^\dagger s_2 &\cdots &\alpha_n^\dagger s_n
    \end{bmatrix} V^\dagger H V
    \begin{bmatrix}
    \alpha_1 s_1 \\ \alpha_2 s_2 \\ \cdots \\ \alpha_n s_n
    \end{bmatrix}.
\end{align}
Consider $ V^\dagger H V$ as a new Hermitian matrix and $SU^\dagger \bm x$ as a new vector $\Tilde{\bm x}$, then all the eigenvalues of $ V^\dagger H V$ are still no larger than $\lambda$ and the square of the $l_2$-norm of $\Tilde{\bm x}$ is computed as
\begin{align}
    \|\Tilde{\bm x}\|_2^2= \sum_{i=1}^n |\alpha_i|^2 s_i^2 \le \sum_{i=1}^n |\alpha_i|^2 s^2 =s^2.
\end{align}
From Lemma \ref{lem:xHx}, we have 
\begin{align}
   \bm x^\dagger Q H Q^\dagger\bm x \le \|\Tilde{\bm x}\|_2^2 \lambda \le s^2\lambda.
\end{align}
As $\bm x$ is arbitrary, we can obtain that $\lambda_{max} \equiv\max_{\bm x}\bm x^\dagger Q H Q^\dagger\bm x$ is no larger than $s^2\lambda$ as well.
\end{proof}

Now, let us investigate the singular values of $T_{j,d}$. From Eq. \eqref{eq:transition_matrix}, we know it always has the trivial biggest singular value 1. The second-biggest singular value $s_m$ can be derived from 
\begin{align} \label{eq:compute_second_singularvalue}
s_{m}^2 =	\max_{\bm u} \bm u^\dagger  \begin{bmatrix}
		 p_{zz} &p_{zx} &p_{zy} \\
		 p_{xz}&p_{xx} & p_{xy}\\
		p_{yz} & p_{yx}&p_{yy}
	\end{bmatrix}  \begin{bmatrix}
		 p_{zz} &p_{xz} &p_{yz} \\
		 p_{zx}&p_{xx} & p_{yx}\\
		p_{zy} & p_{xy}&p_{yy}
	\end{bmatrix} \bm u,
\end{align}
where $\bm u\in \mathbb C^3$ denotes a unit column vector.
From Eq. \eqref{eq:transition_zyz}, we derive that
\begin{align}
 \setlength\arraycolsep{1pt}
\tiny
	\begin{bmatrix}
		 p_{zz} &p_{zx} &p_{zy} \\
		 p_{xz}&p_{xx} & p_{xy}\\
		p_{yz} & p_{yx}&p_{yy}
	\end{bmatrix}
		& \setlength\arraycolsep{1pt}
\tiny = 
	\begin{bmatrix}
		 1 & 0 & 0  \\
		 0& A_1\cos(\mu_1) & A_1\sin(\mu_1) \\
		0 & -A_1\sin(\mu_1) & A_1\cos(\mu_1)
	\end{bmatrix} 
	\begin{bmatrix}
		  A_2\cos(\mu_2) & A_2\sin(\mu_2) &0\\
		-A_2\sin(\mu_2) & A_2\cos(\mu_2)&0 \\
		0&0&1
	\end{bmatrix} 
	\begin{bmatrix}
		 1 & 0 & 0  \\
		 0& A_3\cos(\mu_3) & A_3\sin(\mu_3) \\
		0 & -A_3\sin(\mu_3) & A_3\cos(\mu_3)
	\end{bmatrix}   \\
	& \setlength\arraycolsep{1pt}
\tiny = 
	\begin{bmatrix}
		 1 & 0 & 0  \\
		 0& A_1\cos(\mu_1) & A_1\sin(\mu_1) \\
		0 & -A_1\sin(\mu_1) & A_1\cos(\mu_1)
	\end{bmatrix}  
	\begin{bmatrix}
		  \cos(\mu_2) & \sin(\mu_2) &0\\
		-\sin(\mu_2) & \cos(\mu_2)&0 \\
		0&0&1
	\end{bmatrix}
	\begin{bmatrix} 
		  A_2 &  &\\ 
		 & A_2A_3 & 0 \\
		0 & 0 & A_3
	\end{bmatrix}
	\begin{bmatrix} 
		 1 & 0 & 0  \\
		 0& \cos(\mu_3) & \sin(\mu_3) \\
		0 & -\sin(\mu_3) & \cos(\mu_3)
	\end{bmatrix},
\end{align}
hence, 
\begin{align}\label{eq:T_T_transpose}
 \setlength\arraycolsep{1pt}
\scriptsize
	\begin{bmatrix}
		 p_{zz} &p_{zx} &p_{zy} \\
		 p_{xz}&p_{xx} & p_{xy}\\
		p_{yz} & p_{yx}&p_{yy}
	\end{bmatrix} \begin{bmatrix}
		 p_{zz} &p_{xz} &p_{yz} \\
		 p_{zx}&p_{xx} & p_{yx}\\
		p_{zy} & p_{xy}&p_{yy}
	\end{bmatrix} = Q \begin{bmatrix}
		  \cos(\mu_2) & \sin(\mu_2) &0\\
		-\sin(\mu_2) & \cos(\mu_2)&0 \\
		0&0&1
	\end{bmatrix}
	\begin{bmatrix} 
		  A_2^2 &  &\\ 
		 & \lr{A_2A_3}^2 & 0 \\
		0 & 0 & A_3^2
	\end{bmatrix} \begin{bmatrix}
		  \cos(\mu_2) & -\sin(\mu_2) &0\\
		\sin(\mu_2) & \cos(\mu_2)&0 \\
		0&0&1
	\end{bmatrix} Q^\top,
\end{align}
where $Q \equiv \tiny \begin{bmatrix}
		 1 & 0 & 0  \\
		 0& A_1\cos(\mu_1) & A_1\sin(\mu_1) \\
		0 & -A_1\sin(\mu_1) & A_1\cos(\mu_1)
	\end{bmatrix} $ has the largest singular value 1.

From Lemma \ref{lem:xQHQx}, we deduce that the largest eigenvalue of the matrix in Eq. \eqref{eq:T_T_transpose} is $\max\LR{A_2^2,A_3^2}$, which is no larger than $\mathrm{e}^{-\sigma^2}$. Further combining Eq. \eqref{eq:compute_second_singularvalue}, we infer that $s_m$ is no larger than $\mathrm{e}^{-\frac{\sigma^2}{2}}$, i.e., the second-biggest singular value of each $T_{j,d}$ is no larger than $\mathrm{e}^{-\frac{\sigma^2}{2}}$. What's more, we could derive that their tensor product $\otimes_{j=1}^n T_{j,d}$ also has the trivial largest singular value 1 and the second-largest singular value which is no larger than $\mathrm{e}^{-\frac{\sigma^2}{2}}$.

From the definition of $\mathcal T_d$ in Eq. \eqref{eq:def_T_d}, we declare that the largest singular value of each $\mathcal T_d$ is no larger than $\mathrm{e}^{-\frac{\sigma^2}{2}}$.
Let's go back to the following formula in Eq. \eqref{eq:pi0_T_E_T_E_T_pi0} to continue estimating $\Tr\lr{\mathbb E\Lr{\rho_D}}^2$, i.e.,
\begin{align}
	\mathring \pi_0 \mathcal T_1\mathcal E_1\cdots  \mathcal T_{D-1}\mathcal E_{D-1} \mathcal T_D  \mathcal T_D^\top \mathcal E_{D-1}^{\top}\mathcal T_{D-1}^\top \cdots \mathcal E_1^{\top} \mathcal T_1^{\top}\mathring \pi_0^\top.
\end{align}
Since the largest eigenvalue of $\mathcal T_D  \mathcal T_D^\top $ is no larger than $\mathrm{e}^{-\sigma^2}$, and each $\mathcal E_{i}$, defined in Eq. \eqref{eq:def_T_d}, is a unitary matrix,   by repeatedly applying Lemma \ref{lem:xQHQx}, we obtain that largest eigenvalue of $\mathcal T_1\mathcal E_1\cdots $ $  \mathcal T_D  \mathcal T_D^\top \cdots $ $\mathcal E_1^{\top} \mathcal T_1^{\top}$ is no larger than $\mathrm{e}^{-D\sigma^2}$.
Furthermore, from Eq. \eqref{eq:def_pi0} and the definition of $\mathring \pi_0$, we know $\mathring \pi_0$ has $4^n-1$ dimensions, where $2^n-1$ elements are $\frac{1}{2^n}$ and the others are $0$. Hence, $\|\mathring \pi_0\|^2_2=\frac{2^n-1}{2^{2n}}$.
Combining these with Lemma \ref{lem:xHx}, we have
\begin{align}
	\mathring \pi_0 \mathcal T_1\mathcal E_1\cdots  \mathcal T_{D-1}\mathcal E_{D-1} \mathcal T_D  \mathcal T_D^\top \mathcal E_{D-1}^{\top}\mathcal T_{D-1}^\top \cdots \mathcal E_1^{\top} \mathcal T_1^{\top}\mathring \pi_0^\top \le \|\mathring \pi_0\|^2_2 \mathrm{e}^{-D\sigma^2} =\frac{2^n-1}{2^{2n}} \mathrm{e}^{-D\sigma^2}.
\end{align}
Go further, and we have, together with Eq. \eqref{eq:pi0_T_E_T_E_T_pi0},
\begin{align}
	\Tr\lr{\mathbb E\Lr{\rho_D}}^2 & = \frac{1}{2^n} + 2^n\cdot  \mathring \pi_0 \mathcal T_1\mathcal E_1\cdots  \mathcal T_{D-1}\mathcal E_{D-1} \mathcal T_D  \mathcal T_D^\top \mathcal E_{D-1}^{\top}\mathcal T_{D-1}^\top \cdots \mathcal E_1^{\top} \mathcal T_1^{\top}\mathring \pi_0^\top \\
	& \le \frac{1}{2^n} + 2^n\cdot \frac{2^n-1}{2^{2n}} \mathrm{e}^{-D\sigma^2} 
	= \frac{1+\lr{2^n-1} \mathrm{e}^{-D\sigma^2}}{2^n}.
\end{align}
Finally, we have 
\begin{align}
	 \log\Tr\lr{\Bar{\rho}^2\cdot \lr{\frac{I}{2^n}}^{-1}} &= \log\lr{2^n \cdot  \Tr\lr{\mathbb E\Lr{\rho_D}}^2 } 
	 \le \log\lr{1+(2^n-1)\mathrm{e}^{-D\sigma^2}}.
\end{align}
This completes the proof of Theorem \ref{thm:appendix:etg_general}.

Without stopping here, we also analyse some generalizations of Theorem \ref{thm:appendix:etg_general}. 

(I) From Eqs. \eqref{eq:compute_second_singularvalue}-\eqref{eq:T_T_transpose}, we find that removing the matrix $Q$ will have no influence on the final result. Hence, we directly generalize that if there is only one column of $R_zR_y$ or $R_yR_z$ gates in each layer, we will get the same upper bound. In fact, according to our proof method, we infer that as long as there are two different kinds of rotation gates in each encoding layer, this upper bound is valid. 

(II) What is the result for the case with only  $R_y$ rotation gates in each encoding layer? Since each $\mathcal T_d$ has the largest singular value 1, it is not suitable for the above proof. However, through analyzing the transition rule in Table \ref{table:transition_for_IZXY}, we find that the largest singular value of $\mathcal T_{d-1}\mathcal E_{d-1}\mathcal T_d$ is still no larger than $\mathrm{e}^{-\frac{\sigma^2}{2}}$, which means every two encoding layers have the same effect as above with one layer. Therefore, the final upper bound can be changed to $\log\lr{1+(2^n-1)\mathrm{e}^{-\lfloor\frac{D}{2}\rfloor \sigma^2}}$. Since it has the same trend as the original bound, it has no impact on our final analysis.
\end{proof}

\section{Proof of Corollary \ref{coro:probability}}
\label{sec:proof_of_corollary}
\begin{coro}\label{coro:appendix:probability}
Assume there are $M$ classical vectors $\{\bm x^{(m)} \}^M_{m=1}$ sampled from the distributions described in Theorem \ref{thm:appendix:etg_general} and define $\Bar{\rho}_M \equiv \frac{1}{M}\sum_{m=1}^M  \rho(\bm x^{(m)})$. Let $H$ be a Hermitian matrix with its eigenvalues ranging in $[-1, 1]$, then given an arbitrary $\epsilon\in(0,1)$, as long as the depth $D\ge \frac{1}{\sigma^2}\Lr{ (n+4)\ln{2} + 2 \ln\lr{1/\epsilon } }$, we have
$\Big| \Tr \Lr{ H \lr{\Bar{\rho}_M - \mathbbm{1}} } \Big|  \le \epsilon$
with a probability of at least $1-2\mathrm{e}^{-M\epsilon^2 / 8}$.
\end{coro}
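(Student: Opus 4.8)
The plan is to control the linear functional $\Tr[H(\Bar\rho_M - \mathbbm 1)]$ by splitting it, via the triangle inequality, into a deterministic bias term and a statistical fluctuation term. Writing $\Bar\rho = \mathbb E\Lr{\rho(\bm x)}$ for the ideal average encoded state, I would use
\begin{align}
\Big|\Tr\Lr{H\lr{\Bar\rho_M - \mathbbm 1}}\Big| \le \Big|\Tr\Lr{H\lr{\Bar\rho_M - \Bar\rho}}\Big| + \Big|\Tr\Lr{H\lr{\Bar\rho - \mathbbm 1}}\Big|,
\end{align}
and budget $\epsilon/2$ to each summand. The second term is exactly the concentration on the maximally mixed state already supplied by Theorem~\ref{thm:appendix:etg_general}, while the first measures the sampling error of the empirical mean and will be handled by a concentration inequality.

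For the deterministic term I would feed the depth hypothesis into Theorem~\ref{thm:appendix:etg_general}. Unpacking $D_2(\Bar\rho\|\mathbbm 1)=\log\lr{2^n\Tr[\Bar\rho^2]}$ turns the divergence bound into $2^n\Tr[\Bar\rho^2]\le 1+(2^n-1)\mathrm e^{-D\sigma^2}$. Since $\Tr[\Bar\rho\,\mathbbm 1]=\Tr[\mathbbm 1^2]=2^{-n}$, this gives
\begin{align}
\|\Bar\rho-\mathbbm 1\|_F^2 = \Tr[\Bar\rho^2]-2^{-n} \le \frac{2^n-1}{2^n}\,\mathrm e^{-D\sigma^2} < \mathrm e^{-D\sigma^2}.
\end{align}
Applying Cauchy--Schwarz for the Hilbert--Schmidt inner product, together with $\|H\|_F^2=\sum_i\lambda_i^2\le 2^n$ (the eigenvalues lie in $[-1,1]$), yields $\big|\Tr[H(\Bar\rho-\mathbbm 1)]\big|\le 2^{n/2}\mathrm e^{-D\sigma^2/2}$, and substituting the stated lower bound $D\ge\frac{1}{\sigma^2}\Lr{(n+4)\ln 2 + 2\ln(1/\epsilon)}$ forces this below $\epsilon/2$.

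For the fluctuation term I would observe that $\Tr[H\Bar\rho_M]=\frac1M\sum_{m=1}^M \Tr[H\rho(\bm x^{(m)})]$ is an empirical mean of i.i.d. random variables $X_m\equiv\Tr[H\rho(\bm x^{(m)})]$, each lying in $[-1,1]$ because $-I\preceq H\preceq I$ and $\rho(\bm x^{(m)})$ is a state, with common mean $\mathbb E[X_m]=\Tr[H\Bar\rho]$. Hoeffding's inequality~\cite{hoeffding1994probability} for bounded variables of range $2$ then gives
\begin{align}
\Pr\Lr{\Big|\Tr\Lr{H\lr{\Bar\rho_M - \Bar\rho}}\Big|\ge \tfrac{\epsilon}{2}} \le 2\,\mathrm e^{-M\epsilon^2/8}.
\end{align}
On the complementary event both terms are at most $\epsilon/2$, so their sum is at most $\epsilon$, which is the assertion.

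The main obstacle is the passage in the second step: Theorem~\ref{thm:appendix:etg_general} only controls the quadratic quantity $\Tr[\Bar\rho^2]$, whereas the corollary demands a bound on the \emph{linear} functional $\Tr[H(\Bar\rho-\mathbbm 1)]$. Bridging the two through Cauchy--Schwarz unavoidably introduces the dimension factor $\|H\|_F\le 2^{n/2}$, and the whole argument succeeds only because the depth-driven decay $\mathrm e^{-D\sigma^2/2}$ absorbs this $2^{n/2}$ and still leaves slack below $\epsilon/2$; this is exactly what the hypothesis on $D$ secures, the constant $(n+4)$ being a comfortable sufficient choice. The remaining care is routine: checking that the $X_m$ are genuinely i.i.d. and lie in $[-1,1]$ so that Hoeffding applies with the correct range, and matching the resulting exponent $M(\epsilon/2)^2/2$ to the advertised $M\epsilon^2/8$.
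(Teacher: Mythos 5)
Your proposal is correct, and it reaches the same probability bound with the same overall skeleton as the paper: the identical triangle-inequality split into a statistical term and a deterministic bias term, with Hoeffding's inequality applied exactly as in the paper (range-$2$ variables, budget $\epsilon/2$, exponent $M\epsilon^2/8$). Where you genuinely diverge is in the deterministic term. The paper goes through quantum-information machinery: it bounds $\bigl|\Tr[H(\Bar\rho-\mathbbm 1)]\bigr|$ by the trace norm $\|\Bar\rho-\mathbbm 1\|_{\mathrm{tr}}$ (using only $\|H\|_\infty\le 1$), then invokes the Fuchs--van de Graaf inequality to pass to fidelity, and finally the relation $-\log F(\rho,\sigma)\le D_2(\rho\|\sigma)$ to import Theorem 2, obtaining $2\epsilon/\sqrt{16+\epsilon^2}\le\epsilon/2$. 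You instead unpack $D_2(\Bar\rho\|\mathbbm 1)=\log(2^n\Tr[\Bar\rho^2])$ into a purity bound, convert it to $\|\Bar\rho-\mathbbm 1\|_F^2\le \mathrm e^{-D\sigma^2}$, and apply Cauchy--Schwarz with $\|H\|_F\le 2^{n/2}$, getting $\epsilon/4$. Your route is more elementary (pure linear algebra, no fidelity inequalities) and your accounting of the trade-off is exactly right: Cauchy--Schwarz costs a dimension factor $2^{n/2}$, and the $(n+4)\ln 2$ in the depth hypothesis is precisely what absorbs it --- the paper's fidelity route pays the same $2^n$ inside the bound $F\ge 1/(1+(2^n-1)\mathrm e^{-D\sigma^2})$, so neither approach is wasteful relative to the other. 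One thing the paper's formulation buys that yours does not is an intermediate bound on the trace distance itself, which is the operationally natural quantity (and the one relevant to the state-discrimination application in Proposition 3); your Frobenius bound controls only the specific linear functionals $\Tr[H(\cdot)]$, which is all this corollary asserts.
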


\begin{proof}
    Let $\Bar{\rho}  \equiv \mathbb E\Lr{\rho\lr{\bm x^{(m)}}}$, then we have 
\begin{align}
   \Big| \Tr \Lr{ H \lr{\Bar{\rho}_M - \frac{I}{2^n}} } \Big|
    &=  \Big| \Tr\Lr{ H\lr{\Bar{\rho}_M-\Bar{\rho}+\Bar{\rho}-\frac{I}{2^n}} }\Big| \\
    &\le \Big| \Tr\Lr{ H\lr{\Bar{\rho}_M-\Bar{\rho} } }\Big| + \Big| \Tr\Lr{ H\lr{\Bar{\rho}-\frac{I}{2^n} } }\Big|, \label{eq:two_term}
\end{align}
where the inequality is due to triangle inequality.

Now we first consider the first term in Eq. \eqref{eq:two_term}. Since $\frac{1}{M}\Tr\lr{ H\rho(\bm x^{(1)}) }$, $\ldots$, $\frac{1}{M} \Tr\lr{ H\rho(\bm x^{(M)}) }$ are i.i.d. and $\frac{-1}{M}\le \frac{1}{M} \Tr\lr{ H\rho(\bm x^{(m)}) } \le \frac{1}{M}$, through \textit{Hoeffding's inequality} \cite{hoeffding1994probability}, we have 
\begin{align}
    P\lr{\Big| \sum_{m=1}^M  \frac{1}{M} \Tr\lr{ H\rho(\bm x^{(m)}) }-\mathbb E\Lr{\Tr\lr{ H\rho(\bm x^{(m)}) }}  \Big| \le t } \ge 1- 2 \mathrm{e}^{-\frac{Mt^2}{2}}.
\end{align}
From the fact that $\sum_{m=1}^M  \frac{1}{M} \Tr\lr{ H\rho(\bm x^{(m)}) }= \Tr\lr{H\Bar{\rho}_M}$ and  $\mathbb E\Lr{\Tr\lr{ H\rho(\bm x^{(m)}) }} = \Tr\lr{H\Bar{\rho}}$, we obtain
\begin{align}\label{eq:bound_1st_term}
    \Big| \Tr\lr{H\Bar{\rho}_M} -\Tr\lr{H\Bar{\rho}}  \Big| \le \frac{\epsilon}{2}
\end{align}
with a probability of at least $ 1-2 \mathrm{e}^{-\frac{M\epsilon^2}{8}}$.

Next we consider the second term in Eq. \eqref{eq:two_term}. Since the eigenvalues of $H$ range in $[-1, 1]$, we obtain
\begin{align} \label{eq:bound_2nd_term}
    \Big| \Tr\lr{ H\lr{\Bar{\rho}-\frac{I}{2^n} } }\Big| \le  \Big|\Big| \Bar{\rho}-\frac{I}{2^n}  \Big|\Big|_{\text{tr}} \le 2\sqrt{1-F\lr{\Bar{\rho},\frac{I}{2^n}  }},
\end{align}
where $\|\cdot\|_{\text{tr}}$ denotes the trace norm and the second inequality is from the \textit{Fuchs–van de Graaf inequalities} \cite{fuchs1999cryptographic}, i.e., $1-\sqrt{F(\rho,\sigma)}\le \frac{1}{2} \| \rho-\sigma \|_{\text{tr}}  \le \sqrt{1-F(\rho,\sigma)}$. By combining the upper bound in Theorem \ref{thm:appendix:etg_general} with the fact that $-\log F(\rho,\sigma) \le D_2(\rho,\sigma)$ \cite{Petz1986a}, we have 
\begin{align}
    F\lr{\Bar{\rho},\frac{I}{2^n} } \ge \frac{1}{2^{D_2\lr{\Bar{\rho}\|\frac{I}{2^n}} }} & \ge \frac{1}{1+ (2^n-1)\mathrm{e}^{-D\sigma^2} } \\ \label{eq:use_condition_of_D}
    & \ge \frac{1}{1+ \frac{(2^n-1) \epsilon^2}{2^{n+4 }}  } 
     \ge \frac{1}{1+ \frac{2^n \epsilon^2}{16 \cdot 2^{n}}  } \\
    & = \frac{16}{16+\epsilon^2}, \label{eq:bound_of_Fid}
\end{align}
where in Eq. \eqref{eq:use_condition_of_D} we use the condition $D\ge \frac{1}{\sigma^2}\Lr{ (n+4)\ln{2} + 2 \ln\lr{1/\epsilon } }$. By inserting Eq. \eqref{eq:bound_of_Fid} into Eq. \eqref{eq:bound_2nd_term}, we can get 
\begin{align} \label{eq:bound_2nd_term_final}
    \Big| \Tr\lr{ H\lr{\Bar{\rho}-\frac{I}{2^n} } }\Big| \le  2\sqrt{1-\frac{16}{16+\epsilon^2}}= \frac{2\epsilon}{\sqrt{16+\epsilon^2}}\le \frac{\epsilon}{2}.
\end{align}
Bringing Eqs. \eqref{eq:bound_1st_term} and \eqref{eq:bound_2nd_term_final} into Eq. \eqref{eq:two_term}, we complete the proof of Corollary \ref{coro:appendix:probability}.
\end{proof}

\section{Proof of Proposition \ref{prop:bound_gradient}}
\label{sec:proof_of_bound_gradient}
\renewcommand\thethm{4}
\setcounter{thm}{\arabic{thm}-1}
\begin{prop}\label{prop:appendix:bound_gradient}
Consider a $K$-classification task with the data set $\mathcal D$ defined in Def. \ref{def:encoded_data_set}.
If the encoding depth $D \ge \frac{1}{\sigma^2} \Lr{(n+4)\ln{2} + 2 \ln\lr{1/\epsilon }}$ for some $\epsilon \in (0,1)$, then the partial gradient of the loss function defined in Eq.~\eqref{eq:cross_entrpy_loss} with respect to each parameter $\theta_i$ of the employed QNN is bounded as
\begin{align}
    \Big| \frac{\partial L\lr{\bm\theta;\mathcal D}}{\partial \theta_i} \Big|\le K\epsilon
\end{align}
with a probability of at least $1-2\mathrm{e}^{-M\epsilon^2 / 8}$.
\end{prop}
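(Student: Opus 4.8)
The plan is to reduce the bound to the concentration estimate of Corollary~\ref{coro:appendix:probability} after computing the gradient in closed form. Writing $p_k^{(m)}\equiv \mathrm{e}^{h_k}/\sum_{j}\mathrm{e}^{h_j}$ for the softmax output on the $m$-th sample, the standard derivative of the cross-entropy-with-softmax loss in Eq.~\eqref{eq:cross_entrpy_loss} is
\begin{align}
\frac{\partial L\lr{\bm\theta;\mathcal D}}{\partial\theta_i} = \frac{1}{KM}\sum_{m=1}^{KM}\sum_{k=1}^K \lr{p_k^{(m)} - y_k^{(m)}}\frac{\partial h_k^{(m)}}{\partial\theta_i}.
\end{align}
First I would turn each $\partial h_k^{(m)}/\partial\theta_i$ into an expectation value against the encoded state. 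Since $\theta_i$ enters $U(\bm\theta)$ only through a single Pauli rotation $\mathrm{e}^{-i\theta_i P_i/2}$, differentiating $U(\bm\theta)\rho(\bm x^{(m)})U^\dagger(\bm\theta)$ and using the identity $\Tr(M[P,\rho])=\Tr([M,P]\rho)$ gives $\partial h_k^{(m)}/\partial\theta_i=\Tr[\,\widetilde H_{k,i}\,\rho(\bm x^{(m)})\,]$, where $\widetilde H_{k,i}=-\tfrac{i}{2}\,W_i^\dagger[\,A_i^\dagger H_k A_i,\,P_i\,]W_i$ for unitaries $A_i,W_i$ fixed by the circuit. I would record that $\widetilde H_{k,i}$ is Hermitian with $\|\widetilde H_{k,i}\|\le 1$ (since $\|A_i^\dagger H_k A_i\|\le 1$ and $\|P_i\|\le 1$ force the commutator to have norm at most $2$), so its eigenvalues lie in $[-1,1]$; moreover $\Tr\widetilde H_{k,i}=0$, as the trace of a commutator vanishes, whence $\Tr[\widetilde H_{k,i}\mathbbm{1}]=0$.

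Next I would split by class and apply the triangle inequality, $|\partial L/\partial\theta_i|\le\sum_{k=1}^K|T_k|$ with $T_k\equiv\frac{1}{KM}\sum_m\lr{p_k^{(m)}-y_k^{(m)}}\Tr[\widetilde H_{k,i}\rho(\bm x^{(m)})]$, so that it suffices to prove $|T_k|\le\epsilon$ for each $k$ to obtain the target $K\epsilon$. Inside $T_k$ the label contribution $-\frac{1}{KM}\sum_m y_k^{(m)}\Tr[\widetilde H_{k,i}\rho(\bm x^{(m)})]$ is immediate: the surviving terms are exactly the $M$ samples of class $k$, so it equals $-\frac{1}{K}\Tr[\widetilde H_{k,i}\Bar{\rho}_{k,M}]=-\frac{1}{K}\Tr[\widetilde H_{k,i}(\Bar{\rho}_{k,M}-\mathbbm{1})]$ by tracelessness. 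As $\|\widetilde H_{k,i}\|\le 1$ and each class meets the hypotheses of Theorem~\ref{thm:appendix:etg_general}, Corollary~\ref{coro:appendix:probability} bounds this piece by $\epsilon/K$ with probability at least $1-2\mathrm e^{-M\epsilon^2/8}$ once $D\ge\frac{1}{\sigma^2}[(n+4)\ln 2+2\ln(1/\epsilon)]$.

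The remaining softmax contribution $T_k^{(p)}\equiv\frac{1}{KM}\sum_m p_k^{(m)}\Tr[\widetilde H_{k,i}\rho(\bm x^{(m)})]$ is where I expect the main obstacle. Because the weights $p_k^{(m)}$ depend on the datum $\bm x^{(m)}$, $T_k^{(p)}$ is not a fixed observable averaged against $\Bar{\rho}_M$, so Corollary~\ref{coro:appendix:probability} does not apply verbatim, and a term-by-term bound is hopeless since each pure $\rho(\bm x^{(m)})$ is far from $\mathbbm{1}$. The route I would take is to view the bounded scalars $\phi^{(m)}\equiv p_k^{(m)}\Tr[\widetilde H_{k,i}\rho(\bm x^{(m)})]\in[-1,1]$ as independent within a class and apply Hoeffding's inequality to concentrate $T_k^{(p)}$ about its mean $\Tr[\widetilde H_{k,i}\,\mathbb E(p_k(\bm x)\rho(\bm x))]$, keeping the rate $M\epsilon^2/8$. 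Controlling this reweighted first moment is the genuinely delicate step: using $0\le p_k\le 1$ together with Cauchy--Schwarz it is at most $\sqrt{\mathbb E[(\Tr[\widetilde H_{k,i}\rho(\bm x)])^2]}=\sqrt{\Tr[(\widetilde H_{k,i}\otimes\widetilde H_{k,i})\,\mathbb E(\rho\otimes\rho)]}$, which reduces the problem to a second-moment (variance) estimate for the observable values rather than the first-moment concentration of Theorem~\ref{thm:appendix:etg_general}. Establishing that this variance is also exponentially small in $D$, and then assembling the label and softmax estimates over all $K$ classes to reach $|\partial L/\partial\theta_i|\le K\epsilon$, is the part of the argument I expect to require the most care.
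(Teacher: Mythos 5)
Your setup matches the paper's proof more closely than you might expect: the closed-form gradient, the conversion of $\partial h_k/\partial\theta_i$ into a \emph{fixed} traceless Hermitian observable $\widetilde H_{k,i}$ with spectrum in $[-1,1]$ (the paper gets this via the parameter-shift rule, you via the commutator identity --- the two are equivalent), and the treatment of the label term by reducing to class-averaged states $\Bar{\rho}_{k,M}$ and invoking Corollary~\ref{coro:appendix:probability} are all exactly the paper's steps. The genuine gap is where you said it is: your softmax-weighted term $T_k^{(p)}$ is never bounded. You reduce it, via Hoeffding and Cauchy--Schwarz, to the second moment $\mathbb E\big[(\Tr[\widetilde H_{k,i}\rho(\bm x)])^2\big]=\Tr\big[(\widetilde H_{k,i}\otimes\widetilde H_{k,i})\,\mathbb E(\rho\otimes\rho)\big]$, but nothing in the paper controls $\mathbb E(\rho\otimes\rho)$: Theorem~\ref{thm:appendix:etg_general} and Corollary~\ref{coro:appendix:probability} are purely first-moment statements about $\mathbb E[\rho]$. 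Showing this second moment is small (let alone exponentially small in $D$) would amount to an approximate 2-design property of the encoding ensemble, a substantially stronger claim needing its own proof. So, as written, your argument does not establish the proposition.

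You should know, however, that the step blocking you is precisely the step the paper's own proof (Appendix~\ref{sec:proof_of_bound_gradient}) elides. After noting $\big|\partial L^{(m)}/\partial h_l\big|\le 1$, the paper passes from $\Big|\frac{1}{KM}\sum_{m,l}\frac{\partial L^{(m)}}{\partial h_l}\frac{\partial h_l}{\partial\theta_i}\Big|$ directly to $\Big|\frac{1}{KM}\sum_{m,l}\frac{\partial h_l}{\partial\theta_i}\Big|$, i.e.\ it simply discards the data-dependent softmax weights inside the sum; once they are gone, everything reduces to fixed observables against class averages and Corollary~\ref{coro:appendix:probability} finishes. But $\big|\sum_m a_m b_m\big|\le\big|\sum_m b_m\big|$ with $|a_m|\le 1$ is not a valid inequality when the $a_m$ vary with $m$ (take $a_m=\mathrm{sign}(b_m)$), and here both $a_{m,l}=\partial L^{(m)}/\partial h_l$ and $b_{m,l}=\Tr[\widetilde H_l\,\rho(\bm x^{(m)})]$ depend on the sample. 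In other words, your decomposition is the more careful one, and the correlation between the softmax weights and the observable values that you flag as the ``genuinely delicate step'' is never actually handled by the paper either; closing it honestly requires second-moment control of the encoded ensemble or some other argument decoupling $p_k^{(m)}$ from $\Tr[\widetilde H_{k,i}\rho(\bm x^{(m)})]$.
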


\begin{proof}

From the chain rule, we know
\begin{align}\label{eq:appendix:gradient_chain_rule}
    \frac{\partial L\lr{\bm\theta;\mathcal D}}{\partial \theta_i}= \frac{1}{KM}\sum_{m=1}^{KM} \frac{\partial L^{(m)}}{\partial \theta_i} =\frac{1}{KM}\sum_{m=1}^{KM} \sum_{l=1}^K  \frac{\partial L^{(m)}}{\partial h_l} \frac{\partial h_l}{\partial \theta_i}.
\end{align}
We first calculate $\frac{\partial L^{(m)}}{\partial h_l}$ as follows
\begin{align}
    \frac{\partial L^{(m)}}{\partial h_l} =\frac{\partial \sum_{k=1}^K  y^{(m)}_k \lr{ \ln{\sum_{j=1}^K \mathrm{e}^{h_j}}-h_k }}{\partial h_l} =\begin{cases}
\sum_{k=1}^K  y^{(m)}_k \frac{\mathrm{e}^{h_l}}{\sum_{j=1}^K \mathrm{e}^{h_j}}, & l \neq k \\
\sum_{k=1}^K  y^{(m)}_k \lr{ \frac{\mathrm{e}^{h_l}}{\sum_{j=1}^K \mathrm{e}^{h_j}}-1}, & l=k
\end{cases}
\end{align}
Since ${\mathrm{e}^{h_l}}/{\sum_{j=1}^K \mathrm{e}^{h_j}} \in (0,1)$ and $\bm y^{(m)}$ is one-hot, we can get its upper bound as $ |\frac{\partial L^{(m)}}{\partial h_l} | \le 1$.
Next, from the parameter-shift rule \cite{Mitarai2018}, we calculate $\frac{\partial h_l}{\partial \theta_i}$ as follows
\begin{align}\label{eq:appendix:hl_to_theta}
    \frac{\partial h_l}{\partial \theta_i} = \frac{1}{2} \Lr{ \Tr\lr{ H_l U(\bm\theta_{+\frac{\pi}{2}}) \rho(\bm x^{(m)}) U^{\dagger}(\bm\theta_{+\frac{\pi}{2}})}- \Tr\lr{ H_l U(\bm\theta_{-\frac{\pi}{2}}) \rho(\bm x^{(m)}) U^{\dagger}(\bm\theta_{-\frac{\pi}{2}})} },
\end{align}
where $\bm\theta_{+\frac{\pi}{2}}$ means adding $\frac{\pi}{2}$ to $\theta_i$ and keeping the others unchanged, and $\bm\theta_{-\frac{\pi}{2}}$ is similarly defined.
If we define 
\begin{align}\label{eq:define_H_tilde}
   \Tilde{H_l} \equiv \frac{1}{2}\Lr{ U^{\dagger}(\bm\theta_{+\frac{\pi}{2}}) H_l U(\bm\theta_{+\frac{\pi}{2}}) - U^{\dagger}(\bm\theta_{-\frac{\pi}{2}}) H_l U(\bm\theta_{-\frac{\pi}{2}}) },
\end{align}
then together with Eqs.  \eqref{eq:appendix:gradient_chain_rule}-\eqref{eq:appendix:hl_to_theta}, we could bound the gradient as 
\begin{align}
    \Big| \frac{\partial L\lr{\bm\theta;\mathcal D}}{\partial \theta_i} \Big| \le \Big| \frac{1}{KM}\sum_{m=1}^{KM} \sum_{l=1}^K \frac{\partial h_l}{\partial \theta_i} \Big| &= \Big| \frac{1}{KM}\sum_{m=1}^{KM} \sum_{l=1}^K \Tr\lr{\Tilde{H_l}\rho(\bm x^{(m)})} \Big| \\
    &\le \sum_{l=1}^K \Big| \frac{1}{KM}\sum_{m=1}^{KM} \Tr\lr{\Tilde{H_l}\rho(\bm x^{(m)})} \Big| \\
    &= \sum_{l=1}^K \Big| \frac{1}{KM}  \sum_{m=1}^{KM} \sum_{k=1}^{K} y^{(m)}_k \Tr\lr{\Tilde{H_l}\rho(\bm x^{(m)})} \Big| \\ \label{eq:appendix:bound_gradient1}
    &\le  \sum_{l=1}^K  \frac{1}{K} \sum_{k=1}^{K} \Big| \frac{1}{M}  \sum_{m=1}^{KM}  y^{(m)}_k \Tr\lr{\Tilde{H_l}\rho(\bm x^{(m)})} \Big|
    \\
    &\le  \sum_{l=1}^K  \frac{1}{K} \sum_{k=1}^{K} \epsilon.
    \label{eq:appendix:bound_gradient2}
\end{align}
Here, it could be easily verified that the eigenvalues of $\Tilde{H_l}$ (defined in Eq. 
\eqref{eq:define_H_tilde}) range in $[-1, 1]$ and $\Tr(\Tilde{H_l})=0$. Then from Corollary \ref{coro:appendix:probability}, we could bound Eq. \eqref{eq:appendix:bound_gradient1} as Eq. \eqref{eq:appendix:bound_gradient2},
i.e., for any $\epsilon \in (0,1)$, provided that the encoding depth $D\ge \frac{1}{\sigma^2}\Lr{ (n+4)\ln{2} + 2 \ln\lr{1/\epsilon } }$, we have $| \frac{\partial L\lr{\bm\theta;\mathcal D}}{\partial \theta_i} |\le K\epsilon$ with a probability of at least $1-2\mathrm{e}^{-M\epsilon^2 / 8}$. This completes the proof of Proposition \ref{prop:appendix:bound_gradient}.
\end{proof}

\end{document}